\newcommand{\upgrade}[1]{\ !{#1}}
\newcommand{\upgradelong}[1]{\ !(#1)}
\title{Folding interpretations}
\author{Miko{\l}aj Boja\'nczyk (University of Warsaw)}
\begin{document}

\begin{abstract}
    We study the polyregular string-to-string functions, which are certain functions of polynomial output size that can be described using automata and logic.
    We describe a system of combinators that generates exactly these functions. Unlike previous systems, the present system includes an iteration mechanism, namely fold. Although unrestricted fold can define all primitive recursive functions, we identify a type system (inspired by linear logic) that restricts fold so that it defines exactly the polyregular functions. We also present related systems, for quantifier-free functions as well as for linear regular functions on both strings and trees. 
\end{abstract}

\maketitle

\pagestyle{plain}

\newcommand{\primefunction}[4]{
    #1 &  #2 #3 & \text{#4}
}

\newcommand{\primeqf}{
    \primefunction{\Gamma \times \Sigma} \leftrightarrow {\Sigma \times \Gamma}{commutativity of $\times$}\\
    \primefunction{\Gamma + \Sigma} \leftrightarrow  {\Sigma + \Gamma} {commutativity of $+$}\\
\primefunction{\Gamma \times (\Sigma \times \Delta)}  \leftrightarrow  {(\Gamma \times \Sigma ) \times \Delta} {associativity of $\times$} \\
\primefunction{\Gamma + (\Sigma + \Delta)}  \leftrightarrow  {(\Gamma + \Sigma ) + \Delta} {associativity of $+$} \\
\primefunction{\Gamma \times (\Sigma + \Delta)} \leftrightarrow  {(\Gamma \times \Sigma) + (\Gamma \times \Delta)}{distributivity}\\
\primefunction{\Gamma_1 \times \Gamma_2}  \to {\Gamma_i}  {projections}\\
\primefunction{\Gamma_i} \to {\Gamma_1 + \Gamma_2} {co-projections}\\
\primefunction{\Gamma + \Gamma}  \to  \Gamma {co-diagonal} \\
    }

    \newcommand{\primepr}{
        \primefunction{\Gamma \times \Sigma} \leftrightarrow {\Sigma \times \Gamma}{commutativity of $\times$}\\
        \primefunction{\Gamma + \Sigma} \leftrightarrow  {\Sigma + \Gamma} {commutativity of $+$}\\
    \primefunction{\Gamma \times (\Sigma \times \Delta)}  \leftrightarrow  {(\Gamma \times \Sigma ) \times \Delta} {associativity of $\times$} \\
    \primefunction{\Gamma + (\Sigma + \Delta)}  \leftrightarrow  {(\Gamma + \Sigma ) + \Delta} {associativity of $+$} \\
    \primefunction{\Gamma \times (\Sigma + \Delta)} \leftrightarrow  {(\Gamma \times \Sigma) + (\Gamma \times \Delta)}{distributivity}\\
    \primefunction{\Gamma_1 \times \Gamma_2}  \to {\Gamma_i}  {projection, for $i = 1,2$}\\
    \primefunction{\Gamma_i} \to {\Gamma_1 + \Gamma_2} {co-projection, for $i = 1,2$}\\
    \primefunction{\Gamma + \Gamma}  \to  \Gamma {co-diagonal} \\
    \primefunction{\Gamma}{\to}{1}{constant}\\
    \primefunction {\Gamma^{+} \times \Gamma^+}  \to {\Gamma^+} {concat}\\
    \primefunction {\Gamma}  \to  {\Gamma^+}  {list unit}
    }

\newcommand{\combqf}
{
    \frac{\Gamma_1 \to \Sigma_1 \quad \Gamma_2 \to \Sigma_2}{\Gamma_1 \times \Gamma_2 \to \Sigma_1 \times \Sigma_2}   & & \text{functoriality of $\times$} \\[1.5ex]
        \frac{\Gamma_1 \to \Sigma_1 \quad \Gamma_2 \to \Sigma_2}{\Gamma_1 + \Gamma_2 \to \Sigma_1 + \Sigma_2} 
           & & \text{functoriality of $+$} \\[1.5ex] 
             \qquad \frac{\Gamma \to \Sigma} { \Gamma^*  \to  \Sigma^*} & & \text{functoriality of $*$} \\[1.5ex] 
}

\newcommand{\primepoly}{
    \primefunction{\ \upgradelong{\Gamma + \Sigma}}{\leftrightarrow}{\upgrade \Gamma + \upgrade  \Sigma }{{upgrade commutes with $+$}}\\
    \primefunction{\ \upgradelong{\Gamma \times \Sigma}}{\leftrightarrow}{\upgrade \Gamma \times \upgrade  \Sigma }{{upgrade commutes with $\times$}}\\
    \primefunction{\ (\upgrade \Gamma)^+}{\leftrightarrow}{\upgradelong{\Gamma^+}}{{upgrade commutes with lists}}\\
    \primefunction{\ \upgrade{\Gamma}}{\rightarrow}{\upgrade \Gamma \times \Gamma  }{{copying}}
}
\newcommand{\combpoly}{
    \frac{\Gamma \to \Sigma} {\upgrade \Gamma \to \upgrade \Sigma}
        & & \text{{functoriality of upgrade}} \\[1.5ex]
         \frac{ \Sigma + \Gamma \times  \Sigma \to \Gamma }{ \upgradelong{\Sigma^+} \to \Gamma} & & \text{{fold}}
}

\newcommand{\foldcombinator}
{
    \frac{!^k 1 \to \Gamma \qquad \Gamma \times  \Sigma \to \Gamma}{!^k\Sigma^* \to \Gamma} & & \text{safe fold}
}

\newcommand{\foldcombinatorinit}
{
    \frac{1 + \Gamma \times  \Sigma \to \Gamma}{!\Sigma^* \to \Gamma} & & \text{initialized fold}
}

\newcommand{\foldcombinatornobang}
{
    \frac{\Gamma \times  \Sigma \to \Gamma}{\Gamma \times  \Sigma^* \to \Gamma} & & \text{fold}
}

\newcommand{\combinator}[3]{
    \frac{#1}{#2} & \qquad \text{#3}
}

\newcommand{\generalfold}
{
    \combinator{ 1 \to \Gamma \quad \Gamma \times \Sigma \to \Gamma}{ \Sigma^* \to \Gamma}{fold}
}

\section{Introduction}
\label{sec:introduction}
This paper is about transducers that compute string-to-string functions. (We also have some results on trees, but trees will be discussed only at the end of the paper. ) We are interested in two classes of functions: the linear regular functions\footnote{These are usually called the regular functions in the literature, but we add the word ``linear'' to distinguish them from the polyregular functions.}, which have linear output size, and the polyregular functions, which have polynomial output size. Both classes can be described by many equivalent models, and have robust closure properties. 

Let us begin with the more established class of linear regular functions.  Two typical example functions from this class are:
\begin{align*}
\myunderbrace{[1,2,3] \mapsto [1,2,3,1,2,3] }{duplicate} 
\qquad 
\myunderbrace{[1,2,3] \mapsto [3,2,1] }{reverse} .
\end{align*}
The linear regular functions can be described by many equivalent models, including: deterministic two-way automata with output~\cite[Note 4]{shepherdson1959reduction}, \mso transductions~\cite[Section 4]{engelfrietMSODefinableString2001}, 
streaming string transducers~\cite[Section 3]{alurExpressivenessStreamingString2010}, an extension of regular expressions~\cite[Section 2]{alur2014regular}, and a calculus based on combinators~\cite[Theorem 6.1]{bojanczykRegularFirstOrderList2018}. 
 The many equivalent models, as well as the robustness and good decidability properties of the underlying class, are comparable to similar properties for the regular languages, which also have many equivalent descriptions, including automata, logic and regular expressions. For this reason, the linear regular functions have been intensively studied in the last decade.

The second class is the polyregular functions, which  extended the linear regular functions by allowing polynomial growth, including functions such as the squaring operation
 \begin{align*}
 [1,2,3] \mapsto [1,2,3,1,2,3,1,2,3].
 \end{align*}
Similarly to the linear regular functions, the polyregular functions can also be described by multiple models, including: string-to-string pebble transducers, which are introduced in~\cite[Section 1]{engelfriet2002two} based on~\cite[Definition 1.5]{DBLP:journals/tcs/GlobermanH96} and~\cite[Section 3.1]{DBLP:journals/jcss/MiloSV03}, as well as an imperative programming language~\cite[Section 3]{bojanczykPolyregularFunctions2018}, a functional programming language~\cite[Section 4]{bojanczykPolyregularFunctions2018}, and a polynomial extension of \mso transductions~\cite[Definition 2]{msoInterpretations}. For a survey of the polyregular functions, see~\cite{polyregular-survey}.

\paragraph*{Combinators.} This paper studies the linear regular and polyregular functions by using systems based on prime functions and combinators. This approach dates back to the Krohn-Rhodes Theorem~\citep[p.~454]{Krohn1965}, and was first applied to linear regular functions in~\cite{bojanczykRegularFirstOrderList2018}, by describing them in terms of certain prime functions, such as 
\begin{align*}
\primefunction{1 + \Sigma \times \Sigma^*}{\to}{\Sigma^*}{list constructor,}
 \end{align*}
and  combinators such as
\begin{align*}
  \combinator{\Sigma \to \Gamma \quad \Gamma \to \Delta} { \Sigma \to \Delta}{function composition.}
\end{align*}
This system is further extended in~\cite[p.~64]{bojanczykPolyregularFunctions2018} to cover the polyregular functions, by adding extra prime functions of non-linear output size, such as the squaring operation.

The systems in~\cite{bojanczykRegularFirstOrderList2018,bojanczykPolyregularFunctions2018} have no constructions for iteration; because of this design decision, the hard part is proving completeness: every function of interest can be derived in the system. One reason for avoiding iteration is to have a minimal system.  Another reason is that iteration constructions are powerful, and as we find out in this paper, it is hard to add them while retaining soundness (only functions of interest can be derived).

\paragraph*{The fold combinator.}
In this paper, we take the opposite approach, by studying an iteration construction, namely the fold combinator. This combinator can be written as a rule 
\begin{align*}
\generalfold.
\end{align*} 
The assumption of this rule can be seen as a deterministic automaton with input alphabet $\Sigma$ and state space $\Gamma$, given by its initial state and transition function. In the conclusion of the rule, we have the function that maps an input string to the last state of the run of the automaton.
The input alphabet and the state space need not be finite, e.g.~the state space $\Gamma$ could be the set $1^*$ which represents the natural numbers. 

Folding is a fundamental construction in functional programming languages. For example, the fold combinator arises canonically from the inductive definition of the list type~\cite[Section 3]{hutton1999tutorial}. Unfortunately, there is a price to pay for the power and elegance of the fold combinator: one can use it to derive all primitive recursive functions~\cite[Section 4.1]{hutton1999tutorial}. Therefore, without any further restrictions, the fold combinator falls outside the scope of automata techniques, or any other techniques that can be used to decide semantic properties of programs, such as the halting problem.

This paper is devoted to identifying restrictions on the fold combinator that tame its expressive power. These restrictions are presented as a typing system, which ensures that applications of fold will stay in the class of polyregular functions. In particular, the resulting class of functions shares the decidability properties of the polyregular functions, e.g.~one can decide if a function produces a nonempty output for at least one input.

There are two main contributions in the paper. 

\paragraph*{Quantifier-free interpretations.}
The first contribution is to identify the quantifier-free interpretations as an important class of functions in the context of fold. These are functions on structures in which the universe of the output is a subset of the universe of the input (in particular, the output size is linear), and all relations in the output structure are defined using quantifier-free formulas. 

In Theorem~\ref{thm:fold-quantifier-free} we show that applying the fold combinator to a quantifier-free interpretation yields a function that, although not necessarily quantifier-free, is at least linear regular. This result subsumes several existing results, in particular those about \mso definability of streaming transducers~\cite{alur2014regular,alurStreamingTreeTransducers2017}. 
Although quantifier-free interpretations are rather weak, they can describe most natural transformations that are used as primes in the calculi from~\cite{bojanczykRegularFirstOrderList2018,bojanczykPolyregularFunctions2018}; the remaining primes can then be derived using fold.

Having identified the importance of quantifier-free functions, in Theorem~\ref{thm:qf-system}, we present a system of prime functions and combinators that derives exactly the quantifier-free functions. The completeness proof of the system is the longest proof in the paper.  The quantifier-free system does not allow fold; fold is used in the next part of the paper, about polyregular functions.

\paragraph*{Safe fold.}
The second main contribution is a type system that tames the power of fold. This system uses a type constructor $!$ and bears certain similarities to the parsimonius calculus of Mazza~\cite[Section 2.2]{mazza2015simple}. The latter is part of a field called \emph{implicit computational complexity}, which seeks to describe complexity classes using type systems. An influential example of this kind is a system of Bellantoni and Cook~\cite{bellantoni1992new}, which characterizes polynomial time. The present paper can be seen as part of implicit computational complexity, which targets regular languages instead of Turing complete models, such as logarithmic space or polymomial time. For a more detailed discussion of the connections between regular languages and $\lambda$-calculus, including a pioneering applicaton of linear types, see~\cite{implicit1,NguyenNP21}. 

The usual application of $!$ is to restrict duplication, and this paper is no exception, as  in the following example:
\begin{align*}
\myunderbrace{x \mapsto (x,x)}{not allowed} \qquad
\myunderbrace{!x \mapsto (!x,x)}{allowed}.
\end{align*}
However, apart from restricting duplication, $!$ is also used in this paper to restrict another, more mysterious, resource, namely quantifiers. The idea is that our system uses $!$ to describe functions that are not necessarily quantifier-free, but are similar enough to quantifier-free functions so that the fold combinator can be applied to them. 

 The second main contribution of this paper is Theorem~\ref{thm:main}, which characterizes the polyregular functions using certain prime functions and combinators, in which the types involve $!$ and one of the combinators is fold. In Theorem~\ref{thm:linear} we also show that if we further restrict duplication 
\begin{align*}
 \myunderbrace{!x \mapsto (!x,x)}{not allowed} \qquad
 \myunderbrace{!x \mapsto (x,x)}{allowed},
 \end{align*}
then the resulting system derives exactly the linear functions. Finally, we also show that the results about the linear case can be extended from strings to trees without much difficulty. 

\paragraph*{Acknowledgement.} I would like to thank L{\^{e}} Th{\`{a}}nh Dung Nguy{\^{e}}n and the anonymous reviewers for many helpful comments. This work was financially supported by the Leverhulme Trust, and the Polish National Agency for Academic Exchange.

\section{Interpretations}
In this section, we describe the polyregular functions. 
Among several equivalent definitions of the polyregular functions, our point of departure in this paper will be a definition that uses \mso interpretations~\cite[Section 2]{msoInterpretations}.

\subsection{Definition of \mso interpretations}
 We assume that the reader is familiar with the basics of monadic second-order logic \mso, see~\cite{ebbinghausFlumFinite} for an introduction. We only describe the notation that we use.
A \emph{vocabulary} consists of a finite set of relation names, each one with an associated arity in $\set{0,1,\ldots}$. Note that we allow nullary relations, i.e.~relations of arity zero; such a relation takes no arguments and is ``true'' or ``false'' in each structure. A \emph{structure} over such a vocabulary consists of a finite set, possibly empty, called the \emph{universe} of the structure, and an interpretation of the vocabulary, which associates to each relation name in the vocabulary a relation over the universe of matching arity. The syntax and semantics of  \mso is defined in the usual way.
Whenever we speak of a \emph{class of structures}, all structures in the class must be over the same vocabulary, and the class must be closed under isomorphism. The structures considered in this paper will be used to describe finite strings and similar objects, such as pairs of strings, or strings of pairs of strings.


\paragraph*{Intuitive description.}
We begin with an intuitive description of string-to-string \mso intepretations. Following the classical B\"uchi-Elgot-Trakhtenbrot correspondence of automata and \mso logic, we view strings as structures.

\begin{definition}\label{def:string-as-structure}
 A string in $\Sigma^*$ is viewed as a structure whose universe is the string positions, equipped with the relations
 \begin{align*}
 \myunderbrace{x \le y}{order on positions}
 \hspace{2cm}
 \myunderbrace{a(x)}{$x$ has label $a \in \Sigma$}.
 \end{align*}
\end{definition}
A string-to-string \mso interpretation transforms strings using the above representation, such that the positions of the output string are represented by $k$-tuples of positions in the input string, for some $k \in \set{0,1,\dots}$. The order\footnote{For reasons described in~\cite[Theorem 4]{msoInterpretations}, the string positions are equipped with a linear order $x \ge y$ instead of successor $x = y+1$. } on output positions is defined by a formula 
\begin{align*}
\varphi(\myunderbrace{x_1,\ldots,x_k}{first output\\ position}, \myunderbrace{y_1,\ldots,y_k}{second output\\ position})
\end{align*}
with $2k$ free variables, while the labels of the output positions are defined by formulas with $k$ free variables, one for each letter in the output alphabet. Finally, not all $k$-tuples of input positions need to participate in the output string; there is a formula with $k$ free variables, called the \emph{universe} formula, which selects those that do. All of these formulas need to be consistent -- every $k$-tuple of positions in the input string that satisfies the universe formula must satisfy exactly one of the label formulas, and these $k$-tuples need to be linearly ordered by the order formula. Consistency is decidable, since it boils down to checking if some \mso formula is true in all strings, which in turn boils down to checking if automaton is nonempty by the equivalence of \mso and regular languages.

\paragraph*{Formal definition.} We now give a formal definition of \mso interpretations. The formal definition generalizes the above intuitive description in two ways of minor importance. First, the definition is presented not just for strings, but for general classes of structures; we intend to apply it to mild generalizations of strings, such as pairs of strings or strings of strings. Second, instead of the universe being $k$-tuples of some fixed dimension, it is created using a \emph{polynomial functor}, which is an operation on sets of the form 
\begin{align}\label{eq:polynomial-functor}
 F(A) = A^{k_1} + \cdots + A^{k_n}.
 \end{align} 
Typical polynomial functors include the identity functor $A$, or the functor $A^2+ A^2$ that produces two copies of the square of the input set. 
We use the following terminology for polynomial functors: each $A^{k_i}$ is called a \emph{component} of the polynomial functor, and $k_i \in \set{0,1,\ldots}$ is called the \emph{dimension} of this component. 
This extra generality of polynomial functors\footnote{One can reduce the polynomial functor in an \mso interpretation to a single component $A^k$, at the cost of increasing the dimension $k$. This works for input structures with at least two elements. For this reason, \cite{msoInterpretations} uses interpretations with just one component. } makes the definition more robust, it will be useful in a more refined analysis of \mso interpretations that will appear in Section~\ref{sec:soundness}. In case of linear functors (where all components have dimension at most one), the components correspond to the \emph{copies} in an \mso transduction~\cite[p.~230]{engelfrietMSODefinableString2001}. 

In an \mso interpretation, the polynomial functor is used to define the universe of the output structure; if $A$ is an input structure then elements of $F(A)$ are called  \emph{output candidates}. A subset of the output candidates will be the universe of the output structure. This subset is defined using an \emph{\mso query of type $F$}, which is a family of \mso formulas, with one formula for each component in the functor, such that number of free variables in each formula is the dimension of the corresponding component. Here are some examples:
\begin{align*}
\myunderbrace{A^0 = 1}{a query of this type \\ is a formula without \\ free variables} 
\hspace{2cm}
\myunderbrace{A^4}{a query of this type \\ is a formula with \\ four free variables} 
\hspace{2cm}
\myunderbrace{A^2 + A^2}{a query of this type \\ is two formulas with \\ two free variables each}
\end{align*}
The relations in the output structure are also defined using \mso queries, with a relation of arity $m$ defined using a query of type 
 \begin{align*}
 F^m(A) \eqdef \myunderbrace{F(A) \times \cdots \times F(A)}{$m$ times}
 \end{align*}
The above type is also a polynomial functor,  since polynomial functors are closed under taking products, e.g.~the product of $A^2$ and $A+1$ is $A^3+ A^2$.
The discussion above is summarized in the following definition.

\begin{definition}[\mso interpretation]
 \label{def:mso-interpretation} A function $
 f : \Sigma \to \Gamma$
 between two classes of structures is called an \mso interpretation if:
 \begin{enumerate}
 \item \textbf{Universe.} There is a polynomial functor $F$ and a \mso query of type $F$ such that for every input structure $A \in \Sigma$, the universe of the output structure is the subset of the output candidates $F(A)$ defined by this query; and 
 \item \textbf{Relations.} For every relation name $R$ in the vocabulary of the output class, of arity $m$, there is an \mso query of type $F^m$, which defines the  interpretation of $R$ in every output structure.
 \end{enumerate}
 \end{definition}

A \emph{string-to-string \mso interpretation} is the special case of the above definition where the input type is $\Sigma^*$ for some finite alphabet $\Sigma$, and the output type is $\Gamma^*$ for some finite alphabet $\Gamma$. 

\begin{myexample}
 Consider the squaring operation on strings
 \begin{align*}
 [1,2,3] \mapsto [1,2,3,1,2,3,1,2,3].
 \end{align*}
Suppose that the input alphabet is $\Sigma$. This function is defined by an \mso interpretation as follows. The functor $F$ is $A^2$, and the universe formula is ``true'', which means that the positions of the output string are all pairs of positions in the input string. The order formula 
 describes the lexicographic order on $A^2$. Finally, the label of an output position is inherited from the input position on the second coordinate.
\end{myexample}

\subsection{List types}
\label{sec:list-types}
We are ultimately interested in functions that input and output strings over a finite alphabet. However, to create such functions using primes and combinators, it will be convenient to have more structured types for the simpler functions, such as pairs of strings. The idea to use such structured types comes from~\cite{bojanczykRegularFirstOrderList2018}, in particular we use the same types, as described in the following definition. 
\begin{definition}[List types]\label{def:string-type}
 A \emph{list type} is any type constructed using the constructors
 \begin{align*}
 \myunderbrace{1}{a type with \\ \scriptsize one element} \qquad 
 \myunderbrace{\Sigma_1 \times \Sigma_2}{pairs} \qquad \myunderbrace{\Sigma_1 + \Sigma_2}{co-pairs, i.e.\\ \scriptsize disjoint union} \qquad \myunderbrace{\Sigma^*}{lists}.
 \end{align*}
\end{definition}

A list type need not have $*$ as the topmost constructor, in fact it need not use $*$ at all. In the rest of this paper, we use $\Sigma$ and $\Gamma$ for list types, which may be infinite (unlike the convention in automata theory).
An example of a list type is 
\begin{align*}
(1+1+1)^*.
\end{align*}
This type can be seen as the type of strings over a three letter alphabet; in this way the list types generalize strings over finite alphabets. The generalization is minor, since elements of a list type can be seen as strings over a finite alphabet, which uses brackets and commas as in the following example:
\begin{align*}
\myunderbrace{([\leftinj 1, \rightinj 1, \leftinj 1], 1)
}{an element of the list type $(1+1)^* \times 1$}.\end{align*}

The type $1$ has a unique element, and therefore it admits a unique function from every other list type. In this sense, the type $1$ is a terminal object, assuming that morphisms are all functions. This continues to be true if we restrict the morhpisms to be the polyregular functions, see below.  However, in Section~\ref{sec:quantifier-free-functions} we will also consider a quantifier-free system, and in this system the unique function $1^* \to 1$ will not be allowed; in particular the type $1$ will no longer be a terminal object.

\paragraph*{Structures for list types.} We will be interested in \mso interpretations that transform one list type into another. We could simply represent list types as strings over a finite alphabet in the way described above, and then use \mso interpretations on strings over a finite alphabet. The resulting definition would be equivalent to the one that we will use in the paper. However, we choose to use a direct representation of list types as structures, without passing through a string encoding. The reason is that quantifiers would be needed to go between list types and their string encodings, and in this paper, we will be particularly interested in quantifier-free interpretations.

\begin{definition}\label{def:structures-for-string-types}
 To each list type we associate a class of structures, which is defined by induction as follows. 
 \begin{itemize}
 \item[($1$)] The class $1$ contains only one structure; this structure has one element in its universe and no relations.
 \item[($+$)]The vocabulary of the class $\Sigma_1 + \Sigma_2$ is the disjoint union of the vocabularies of the classes $\Sigma_1$ and $\Sigma_2$, plus one new nullary relation name (i.e.~arity zero). A structure in this class is obtained by taking a structure in either of the classes $\Sigma_1$ or $\Sigma_2$, extending the vocabulary to the vocabulary of the other class by using empty sets, and interpreting the new nullary relation as ``true'' or ``false'' depending on whether the structure is from $\Sigma_1$ or $\Sigma_2$.  The new nullary relation corresponds to the fact that co-pairs are tagged, i.e.~we know which of the two types $\Sigma_1$ or $\Sigma_2$ is used.
 \item[($\times$)] The vocabulary of the class $\Sigma_1 \times \Sigma_2$ is the disjoint union of the vocabularies of the class $\Sigma_1$ and $\Sigma_2$, plus one new unary relation name (i.e.~arity one). A structure in this class is obtained by taking the disjoint union (defined in the natural way) of two structures, one from $\Sigma_1$ and one from $\Sigma_2$, and interpreting  the new unary relation as the elements that come from the first structure. 
 \item[($*$)]The general idea is that a structure in the class $\Sigma^*$ is obtained by taking a list $[A_1,\ldots,A_n]$ of nonempty\footnote{A structure is nonempty if its universe is nonempty. This leads to the following subtle point, which arises when considering lists of lists, and related structures. Since a list can be empty, it follows that we do not allow lists of empty lists such as $[[],[],[]]$. This means that the list constructor, as it is used in this paper and formalized in Definition~\ref{def:structures-for-string-types}, should be interpreted as possibly empty lists with nonempty list items. This distinction will not play a role for types such as $(1+1)^*$ where list elements cannot be empty, which is the case that we really care about. } structures in $\Sigma$, creating a new structure using disjoint union (with a shared vocabulary), and adding a new binary relation $x \le y$ which holds whenever the structure containing $x$ appears earlier in the list (or in the same place) than the structure containing $y$. The problem with this construction is that it would mix nullary relations that come from different structures in the list. To fix this problem, each nullary relation name $R()$ in the vocabulary of $\Sigma$ is changed into a unary relation name $R(x)$ that selects elements $x$ such that the corresponding structure satisfies $R()$. 
 \end{itemize}
\end{definition}

If we apply the above representation to a list type 
\begin{align*}
(\myunderbrace{1+\cdots + 1}{$n$ times})^*
\end{align*}
then we get the representation of strings as ordered structures from Definition~\ref{def:string-as-structure}, with the exception that the empty string has a universe with one element. Therefore, it is not important if we use Definition~\ref{def:string-as-structure} or~\ref{def:structures-for-string-types} for representing strings.

\begin{definition}
 A \emph{polyregular function} is a function 
 \begin{align*}
 f : \Sigma \to \Gamma
 \end{align*}
 between list types that can be defined by an \mso interpretation, assuming that list types are viewed as classes of structures according to Definition~\ref{def:structures-for-string-types}.
\end{definition}

The original definition of polyregular functions~\cite{bojanczykPolyregularFunctions2018} did not use \mso interpretations, however \mso interpretations were shown equivalent to the original definition in~\cite[Theorem 7]{msoInterpretations}. Since the original definition was closed under composition, it follows that \mso interpretations are closed under composition (as long as the input and output classes are list types).

\section{The fold combinator}
\label{sec:string-types}
In this section, we discuss  dangers of the fold combinator
\begin{align*}
    \generalfold.    \end{align*}
We also explain how some of the dangers can be avoided by using quantifier-free interpretations. 

We begin this section with several examples illustrating the usefulness of fold.
\begin{myexample}
    \label{ex:groups-go-away}
    Consider a finite automaton with  $n$ states and an input alphabet of $m$ letters. Assuming some order on the states and alphabet, the transition function can be seen as a function between finite string types
    \begin{align*}
        {\myunderbrace{(1+ \cdots +1)}{$n$ times}} \times         {\myunderbrace{(1+ \cdots +1)}{$m$ times}} \to 
        \myunderbrace{1+ \cdots +1}{$n$ times}.
            \end{align*}
    If we apply fold to this automaton, under some chosen initial state, then we get the function that inputs a string, and returns the last state in the run. A special case of this construction is when both the states and input letters of the automaton are elements of some finite group $G$, the initial state is the group identity, and the transition function is the group operation. By folding this transition function, we get the \emph{group multiplication} function of type $G^* \to G$, which is one of the (less appealing) prime functions in the combinatory calculus from~\cite{bojanczykPolyregularFunctions2018}. 
\end{myexample}

\begin{myexample}\label{ex:identity-and-reverse} 
    There are two symmetric list constructors 
    \begin{align*}
       \myunderbrace{ 1 + \Sigma^* \times \Sigma \to \Sigma^*}{lists are constructed by adding \\ letters to the right of the list} 
       \hspace{2cm}
       \myunderbrace{ 1 + \Sigma \times \Sigma^* \to \Sigma^*}{lists are constructed by adding \\ letters to the left of the list}.
        \end{align*}
        If we apply fold to the two corresponding automata, then we get the   reverse and identity functions on lists, respectively. 
        The fold combinator corresponds in a canonical way to the first list constructor, which is why it is sometimes called \emph{fold right}.  
\end{myexample}



\subsection{On the dangers of folding}
\label{sec:dangers-of-fold}

We now present  two examples which show how the fold combinator, without any further restrictions, can define functions that are not polyregular.     More generally, one can use fold to derive any primitive recursive function~\cite[Section 4.1]{hutton1999tutorial}. In the examples below, the types $\Sigma$ and $\Gamma$ used by the fold combinator are infinite, since finite types would lead to polyregular functions, as explained in Example~\ref{ex:groups-go-away}.


\begin{myexample}[Iterating  duplication]\label{ex:fold-duplication}
    Consider an automaton where the input alphabet is $1$, and the states are  $1^*$. We view the states as natural numbers, with the list $1^n$ of length $n$ representing the number $n$. The  initial state in this automaton is  $1$, and the transition function is 
    \begin{align*}
    (1^n,1) \in 1^* \times 1 \quad  \mapsto \quad 1^{2n} \in 1^*.
    \end{align*}
This is an example of a polyregular function, in fact it is a linear regular function. However, if we apply fold to it, then we get the function 
\begin{align*}
    1^n \in 1^* \quad  \mapsto \quad 1^{2^n} \in 1^*.
    \end{align*}
 which is not  polyregular because of exponential growth. 
    \end{myexample}

\begin{myexample}[Subtraction]\label{ex:fold-tail}
    As illustrated in Example~\ref{ex:fold-duplication}, we run into trouble if we iterate duplication. But we can also run into trouble when the transition function does not create any new elements. Consider an automaton where the input alphabet is $1+1$, and the  state space is the integers, represented as the list type
    \begin{align*}
    \myunderbrace{1^*}{represents \\ 
    $\set{-1,-2,\ldots}$} 
    \quad + \quad 
    \myunderbrace{1^*}{represents \\ 
    $\set{0,1,\ldots}$} 
        \end{align*}
    The initial state is zero, and the transition function increments or decrements the state depending on which of the two input letters from $1+1$ it gets. This transition function is easily seen to be polyregular, and it has the property that the output size is at most the input size, assuming that the input letter contributes to the input size. However, by folding this automaton, we get a function that subsumes integer subtraction and is therefore not polyrergular. Using similar ideas, one could simulate two-counter machines. 
\end{myexample}

\subsection{Quantifier-free interpretations and their folding}

As the two above examples show, we have to be careful when applying fold. Clearly we must avoid duplication (Example~\ref{ex:fold-duplication}). This can be done by requiring the polynomial functor in the interpretation to be the identity, thus ensuring that the output is no larger than the input.  It is less clear how to avoid the problem with Example~\ref{ex:fold-tail}.  Our solution is to use {quantifier-free interpretations}, as defined below.

\begin{definition}
    A \emph{quantifier-free interpretation} is the special case of \mso interpretations where the polynomial functor is the identity $F(A)=A$ and all formulas are quantifier-free.
\end{definition}


One could consider interpretations in which the formulas are quantifier-free, but the functor is not necessarily the identity; such interpretations will not be useful in this paper.

The transition function in Example~\ref{ex:fold-tail} is not quantifier-free, since decrementing a number, which corresponds to removing a list element, is not a quantifier-free operation.
The following theorem is the first main contribution of this paper: fold can be safely applied to quantifier-free interpretations. In the theorem, a linear \mso interpretation is one that uses a functor $F$ that is linear in the natural sense.

\begin{theorem}\label{thm:fold-quantifier-free}    Let $\Sigma$ and $\Gamma$ be any classes of structures, not necessarily list types.    
If  the transition function
\begin{align*}
\delta : \Gamma \times \Sigma \to \Gamma
\end{align*}
in the assumption of the fold combinator is a quantifier-free interpretation, then the function in the conclusion is a linear \mso interpretation.
\end{theorem}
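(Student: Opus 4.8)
The plan is to run the fold symbolically: because $\delta$ is quantifier-free, essentially all of the computation can be delegated to a finite automaton operating on the sequence of list items, and what remains is local to individual tuples of elements and is again finite-state. Throughout, let $q_0 \in \Gamma$ be the initial state supplied to the fold, write $f : \Sigma^* \to \Gamma$ for the function in the conclusion, and for $w = [A_1,\dots,A_n] \in \Sigma^*$ put $q_i = \delta(q_{i-1},A_i)$, so that $f(w) = q_n$.

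First I would determine the universe of $f(w)$. As $\delta$ uses the identity functor, the universe of $\delta(B,A)$ is a subset of $|B| \sqcup |A|$, the universe of the structure $B \times A$; moreover, applying $\delta$ never duplicates or renames an element --- each is merely kept or discarded. Iterating, $|q_n|$ is a subset of $|q_0| \sqcup |A_1| \sqcup \cdots \sqcup |A_n| = |q_0| \sqcup |w|$. Since $q_0$ is a fixed finite structure, this exhibits the output universe as a subset of $F(A) = A^0 + \cdots + A^0 + A$ applied to $w$, with one $A^0$-component per element of $q_0$; the functor $F$ is linear, which already settles the ``linear'' clause of the theorem, and it pins down the output candidates: the elements of $q_0$ together with the positions of $w$.

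Second --- the heart of the argument --- I would show that the data describing $q_n$ is computed by finite automata. Because every defining formula of $\delta$ is quantifier-free, evaluating it at a tuple inspects only the atomic diagram of that tuple together with the truth values of the nullary relations. Hence: (a) the tuple $\nu_i$ of truth values of the nullary relations of $q_i$ depends only on $\nu_{i-1}$ and on the nullary relations of $A_i$; and (b) for a tuple $\bar x$ of elements already present in $q_{i-1}$, the quantifier-free type of $\bar x$ in $q_i$ depends only on its quantifier-free type in $q_{i-1}$ and on the nullary relations of $A_i$, while a tuple newly introduced from $A_i$ acquires a type determined by its type inside $A_i$ and by $\nu_{i-1}$, and in a mixed tuple there are no $\Gamma$-relations linking its old and new elements at the moment of birth. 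Since there are only finitely many nullary configurations and only finitely many quantifier-free $m$-types for each arity $m$ in the vocabulary of $\Gamma$, statements (a) and (b) say exactly that $i \mapsto \nu_i$ and, for any fixed tuple $\bar x$ of output candidates, $i \mapsto (\text{type of }\bar x\text{ in }q_i)$ are each computed by a finite automaton reading the string $\nu(A_1)\,\nu(A_2)\cdots\nu(A_n)$ over a finite alphabet --- the type-tracking automaton additionally marking the birth positions of the elements of $\bar x$.

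Finally I would turn these automata into the interpretation. In the list structure of $w$ the blocks $A_1,\dots,A_n$ are the classes of the linear preorder, the letter $\nu(A_i)$ of block $i$ is determined by the (now unary) nullary relations of $\Sigma$, which are constant on that block, and the quantifier-free $\Sigma$-type of a tuple of positions inside a single block is available by quantifier-free formulas. By the standard simulation of finite automata in \mso (B\"uchi's theorem), the run $i \mapsto \nu_i$ of the automaton from (a) is then definable in \mso over $w$ --- one guesses a monadic colouring of the blocks by states and checks local consistency --- and, layering the automaton from (b) on top with the blocks of a tuple $\bar z$ of output candidates as the marked positions, the quantifier-free type of $\bar z$ in $q_n$ becomes definable in \mso as well. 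Reading off from these types which output candidates survive into $|q_n|$ (the universe formula) and which atoms hold (the formula for each relation of $\Gamma$) yields a linear \mso interpretation defining $f$. The main obstacle is the second step: one must check carefully, against the definitions of the product structure and of quantifier-free interpretations, that the type dynamics really are finite-state --- in particular that quantifier-freeness prevents the relations among already-present elements from ever depending on anything beyond those elements and a bounded global state. Granted that, the third step is the familiar translation between finite automata and \mso, and the rest is bookkeeping.
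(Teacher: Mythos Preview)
Your proposal is correct and follows essentially the same route as the paper: show that the universe of $q_n$ sits inside $|q_0| \sqcup |w|$ (giving a linear functor), track the quantifier-free type of a fixed tuple of output candidates through the sequence $q_0,\dots,q_n$ by a finite automaton (using that quantifier-free interpretations are continuous for quantifier-free types), and simulate that automaton in \mso. One correction: your claim that ``in a mixed tuple there are no $\Gamma$-relations linking its old and new elements at the moment of birth'' is false as a statement about the output $q_i$ --- the defining formulas of $\delta$ can certainly create output relations across the two halves --- so what you actually need (and what holds) is that in the \emph{input} structure $q_{i-1}\times A_i$ there are no non-trivial cross-relations, whence by compositionality the output type of a mixed tuple is determined by the types of its old part in $q_{i-1}$ and its new part in $A_i$; the paper packages this by allowing the tracked tuple to contain elements not yet in the universe.
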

\begin{proof}
    Consider an automaton as in the assumption of the theorem.
    For an input to this automaton $[A_1,\ldots,A_n]$, and $i \in \set{0,\ldots,n}$ we write $B_i \in \Gamma$ for the state of the automaton after reading the first $i$ input letters. The state $B_0$ is the initial state, which is given by the assumption to the fold combinator, and the state $B_n$ is the last state, which is the output of the function in the conclusion of the fold combinator. Our goal is to compute the last state using a linear \mso interpretation. 
    
    Since the functor in $\delta$ is the identity, the output candidates are simply the elements of the input structure. Therefore,   the universe of $B_n$ is contained in the disjoint union of the universe of $B_{n-1}$ and the universe of $A_n$. By unfolding the induction, the universe of $B_n$ is contained in the universe of the first state $B_0$ and the input structure  $A=[A_1,\ldots,A_n]$. Therefore, to prove that the fold is an \mso interpretation, it will be enough to show that an \mso formula can tell us: (a) which elements of $B_0 + A$ belong to the output structure; and (b) which relations of the output structure are satisfied by which tuples from $B_0 + A$. The answers to these questions will be contained in the quantifier-free theory of the tuple, as defined below. 

    \begin{definition}\label{def:rank-free-theory} Let $A$ be a structure and let $\bar a$ be a list of distinguished elements, which need not belong to the universe of $A$. The \emph{quantifier-free theory} of a  $\bar a$ in $A$  is the following information:  which distinguished elements are in the universe, and which quantifier-free formulas  are satisfied by those distinguished elements that are in the universe. 
    \end{definition}


    Using the above terminology, to prove that the fold is definable in \mso, we need to show that for each tuple in $B_0 + A$, we can define in \mso the corresponding  quantifier-free theory in the output structure $B_0$. This will be done in the following claim. The key property used by the claim is the following \emph{continuity property} of quantifier-free interpretations: the quantifier-free theory of a tuple of output candidates in the output structure is uniquely determined by the quantifier-free theory of the same tuple in the input structure.

    In the following claim, we consider a function which inputs structures equipped with tuples of $k$ distinguished elements, and which has finitely many possible output values (quantifier-free theories, in the case of the claim). Such a function is called \mso definable if for every chosen output value, there is an \mso formula with $k$ free variables that selects inputs which give chosen output.
    \begin{claim}\label{claim:qf-folder} 
        For every $k \in \set{1,2,\ldots}$ and every tuple $\bar b$ of elements in $B_0$, the following 
        function is \mso definable:
        \begin{itemize}
        \item {\bf Input.} A structure $A \in \Sigma^*$ with  elements $\bar a \in A^k$.
        \item {\bf Output.} The quantifier-free theory of $\bar a \bar b$ in $B_n$.
        \end{itemize}
    \end{claim}
    \begin{proof}
        By the continuity property mentioned earlier in this proof, the quantifier-free theory of $\bar a \bar b$ in  $B_n$ is uniquely determined by the quantifier-free theory of $\bar a \bar b$ in the structure $(B_{n-1},A_n)$, which in turn is uniquely determined (by compositionality) by the quantifier-free theories of $\bar a  \bar b$ in the two individual structures  $B_{n-1}$ and $A_n$. Therefore, we can think of these quantifier-free theories as being computed by a finite automaton, where the initial state is the quantifier-free theory of $\bar b$ in  $B_0$, and   the input string is
        \begin{align*}
        [\text{qf theory of $\bar a$ in $A_1$}, \ldots, \text{qf theory of $\bar a$ in $A_n$}].
        \end{align*}
        By the continuity property, one can design a transition function for this automaton, which does not depend on the input structure $A$ or the tuple $\bar a$, such that its state after reading the first $i$ letters is the quantifier-free theory of $\bar a  \bar b$ in  $B_i$. 
        The state space of this automaton is finite, since there are finitely may quantifier-free theories once the vocabulary and number of arguments have been fixed.  Since finite automata can be simulated in \mso, it follows that the last state in the run of this automaton, which is the theory in the conclusion of the claim,  can be defined in \mso.
    \end{proof}
    
    We now use the claim to complete the proof of the lemma. The output candidates of the  \mso interpretation are defined by the polynomial  functor 
    \begin{align*}
        F(A)=A + \myunderbrace{1+ \cdots + 1}{size of initial state $B_0$}.
    \end{align*}
    In other words, the output candidates are elements of the input list and the initial state.  By the above claim, the quantifier-free theory of a single output candidate in the output structure can be defined in \mso, and since this theory tells us if the output candidate is present in the universe output structure, we can use it to define the universe.  Similarly, if we want to know if a tuple of output candidates satisfies some relation from the output vocabulary, then we can find this information using \mso as in the above claim.
\end{proof}

On its own, the theorem above does not solve all of the problems with fold. One issue is that the theorem only supports one application of fold, since the folded function is no longer quantifier-free and cannot be folded again. Another issue is that applying the theorem stays within the class of functions that do not increase the output size, while we will also be interested in folding functions that increase the size. These problems will be addressed later in the paper, by developing a suitable type system. 
Before continuing, we give some applications of the theorem.

\begin{myexample}\label{ex:finite-state-automaton-folded}
    Consider a transition function of a finite automaton as in Example~\ref{ex:groups-go-away}. In a list type of the form $1+ \cdots +1$, the component of the disjoint union that is used can be accessed by a quantifier-free formula without free variables, since it is represented using nullary relations. Therefore, the transition function is a quantifier-free interpretation, and so we can apply Theorem~\ref{thm:fold-quantifier-free} to conclude that the fold is an \mso transduction. This corresponds to the inclusion
\begin{align*}
\text{regular languages} \quad \subseteq \quad \text{\mso}.
\end{align*}
Applying Theorem~\ref{thm:fold-quantifier-free} to prove this inclusion is not the right way to prove it, since the inclusion itself is used in the proof of the theorem. 
\end{myexample}

In Example~\ref{ex:finite-state-automaton-folded}, we applied the fold combinator to a finite automaton. In the following example, we give a more interesting application, where the state space is infinite. 

\begin{myexample}[Streaming string transducers]\label{ex:sst} 
Define a \emph{simple streaming string transducer}, simple \sst for short,  as follows. It has two finite alphabets $\Sigma$ and $\Gamma$, called the \emph{input} and \emph{output} alphabets. It has a \emph{configuration space}, which is a list type of the form 
\begin{align*}
\Delta = (\Gamma^*)^{k_1} + \cdots + (\Gamma^*)^{k_m}.
\end{align*}
In other words, the set of configurations is obtained by applying some polynomial functor to the set of strings over the output alphabet. The idea is that a configuration consists of a state, which is one of the $m$ components, and a register valuation which is a tuple of strings over the output alphabet.
 The configurations of the transducer are updated according to the following three functions, which are required to be quantifier-free, according to the representation of the input and output alphabets that was used in Example~\ref{ex:finite-state-automaton-folded}:
\begin{align*}
    \myunderbrace{1 \to \Delta}{initial} 
    \qquad 
    \myunderbrace{\Delta \times \Sigma \to \Delta}{transition function} 
    \qquad 
    \myunderbrace{\Delta \to \Gamma^*}{final}.
    \end{align*}
The semantics of the transducer is the function of type $\Sigma^* \to \Gamma^*$ that is obtained by folding the first two functions, and post-composing with the final function.  
By Theorem~\ref{thm:fold-quantifier-free}, this function is an \mso transduction. 

The model described  is almost equivalent to in expressive power to the classical model of \sst~\cite[Section 3]{alurExpressivenessStreamingString2010}. The only difference, and the reason  why we call our model simple, is that our model allows the input letter to be used at most once (as opposed to a constant number of times) in the registers. The restriction on using each input letter being used at most once arises because we use a quantifier-free transition function, and such a function cannot duplicate letters.  The models would become equivalent if we could preprocess the input string by copying each input letter a constant number of times; in particular words every \sst can be decomposed as  a string-to-string homomorphism followed by a simple \sst.  Therefore, Theorem~\ref{thm:fold-quantifier-free} can be seen as subsuming the implication 
\begin{align*}
\text{\textsf{sst}} \subseteq \text{deterministic \mso transductions}
\end{align*}
proved in~\cite[Theorem 3]{alurExpressivenessStreamingString2010}. The same idea will work for trees, as we will see in Section~\ref{sec:trees}.
\end{myexample}

\begin{myexample}[Graphs]
    As mentioned in Theorem~\ref{thm:fold-quantifier-free}, the folded automaton need not operate on classes that are list types. For instance, we could adapt Example~\ref{ex:sst} to transducers in which the registers, instead of storing strings, store graphs with  $k$ distinguished vertices, as in Courcelle's algebras for treewidth~\cite[Section 1.4]{courcelleGraphStructureMonadic2012}. We could still apply Theorem~\ref{thm:fold-quantifier-free}, since the corresponding operations on graphs are quantifier-free, to prove that a graph extension of streaming string transducers~\cite[Section 3]{remarksGraphtoGraph} is subsumed by \mso transductions. Similar ideas would also work for cliquewidth. 
\end{myexample}

\section{Deriving quantifier-free functions}
\label{sec:quantifier-free-functions}
As we have shown in Theorem~\ref{thm:fold-quantifier-free}, the fold combinator can be safely applied to quantifier-free interpretations. Before discussing the fold combinator, we take a minor detour in this section, and present a complete system for the quantifier-free interpretations. 

\paragraph*{A few examples.} We begin with examples and non-examples of quantifier-free interpretations operating on list types.

\begin{myexample}[Commutativity of product]\label{ex:commutativity-of-product}
    Consider the function of type 
\begin{align*}
\Sigma_1 \times \Sigma_2 \to \Sigma_2 \times \Sigma_1,
\end{align*}
which swaps the order in a pair. Like all examples in this section, this   is actually an infinite family of functions, one for every choice of $\Sigma_1$ and $\Sigma_2$. The function is a quantifier-free interpretation. The only change between the input and output concerns the unary relation  from the definition of the product class $\Sigma_1 \times \Sigma_2$  which tells us if an element is from the first coordinate; this relation needs to be complemented.
\end{myexample}

\begin{myexample}[List reverse and concatenation]\label{ex:list-reverse}
Consider the list reverse function of type $\Sigma^* \to \Sigma^*$. This is clearly a quantifier-free interpretation -- it is enough to replace the order $x \le y$ with its reverse $y \le x$. A similar idea works for the list concatenation function of type $\Sigma^{**}  \to \Sigma^*$ which concatenates a list of lists into a list. In the input structure, there are two linear orders, corresponding to the inner and outer lists. To get the output structure, we use the lexicographic product of these two orders, which can be defined in a quantifier-free way.
\end{myexample}

\begin{myexample}\label{ex:terminal-object} As we mentioned in Section~\ref{sec:list-types}, the type $1$ is a terminal object if the morphisms are all functions, or the polyregular functions. However, it is no longer terminal when the morphisms are quantifier-free. This is because the unique function of type $1^* \to 1$ is not quantifier-free. The issue is that when the input is the empty list, the corresponding input structure has an empty universe, and therefore a quantifier-free function cannot create the one element in universe of  the ouput structure. A terminal object would be recovered by creating a new type, call it $0$, representing a class of structures that has only one structure, with an empty universe (not to be confused with the class that has no structures, which we do not consider). The corresponding prime functions for the type $0$ would be 
    \begin{align*}
     \primefunction{\Sigma}{\to}{\Sigma \times 0}{add $0$}\\
     \primefunction{0}{\to}{\Sigma^*}{create an empty list}.
    \end{align*}
    Note that we cannot have a version of ``add 0'' for 1, i.e.~a function of type $\Sigma \to \Sigma \times 1$; this is because creating the extra 1 would require unavailable resources. For similar reasons the function for creating empty lists that we use in Figure~\ref{fig:prime-quantifier-free} has type $\Sigma \to \Sigma \times \Gamma^*$ instead of the simpler type $1 \to \Gamma^*$; the latter function would be weaker, since it would use up more resources. All of these distinctions between $0$ and $1$ play a role only in the quantifier-free system; in the polyregular system the isomorphism $0 \leftrightarrow 1$ will be available.
\end{myexample}

\begin{myexample}[List constructor and destructor]\label{ex:list-add}
    Consider the (left) list constructor  
    \begin{align*}
1 +     \Sigma \times \Sigma^* \to \Sigma^*,
    \end{align*}
    that was discussed in Example~\ref{ex:identity-and-reverse}.  This is  a quantifier-free interpretation. If the input is from $1$, which can be tested in a quantifier-free way using the nullary relation from the co-product, then the output list is created in the natural way. Otherwise, if the input is a pair from $\Sigma \times \Sigma^*$, then the order on the concatenated list can easily be defined by using the unary predicate that identifies the first argument of a pair. 

    The list constructor is bijective, and therefore it has a corresponding inverse of type  
    \begin{align*}
        \Sigma^* \to 1 +     \Sigma \times \Sigma^*,
            \end{align*}
    which we call the \emph{list destructor}. The list destructor is not a quantifier-free interpretation. The reason is that if the input is an nonempty list, then we would need to isolate in a quantifier-free way the elements from the head, i.e.~from the first list element, which cannot be done.  
\end{myexample}


\begin{myexample}[Diagonal]\label{ex:diagonal-nonexample}
    Another non-example is  $x \mapsto (x,x)$. This is not a quantifier-free interpretation, since the output size is bigger than the input size.
\end{myexample}

\begin{figure}
    \begin{align*}
    \primeqf 
    \primefunction {\Sigma^* \times \Sigma}  \to {\Sigma^*} {append}\\
    \primefunction {\Sigma^{*}}  \to {\Sigma^*} {reverse}\\
    \primefunction {\Sigma^{**}}  \to {\Sigma^*} {concat}\\
    \primefunction {\Sigma}  \to {\Sigma \times \Gamma^*} {create empty} \\
    \primefunction {(\Sigma \times \Gamma)^*}  \to {\Sigma^* \times \Gamma^*} {list distribute}    \end{align*}
    \caption{\label{fig:prime-quantifier-free} The prime quantifier-free functions.}
\end{figure}

\begin{figure}
    \begin{align*}
        \combqf
 \frac{\Gamma \to \Sigma \quad \Sigma \to \Delta} { \Gamma \to \Delta} & & \text{function composition}
    \end{align*}
    \caption{\label{fig:combinator-quantifier-free} The quantifier-free combinators.}
\end{figure}


\paragraph*{A complete system.} We now present a complete characterization of quantifier-free interpretations on list types.  The system will be used as a basis for the system in the next section, which will describe general \mso interpretations.

\begin{theorem}\label{thm:qf-system}
    The quantifier-free interpretations between list types are exactly those that can be derived from  the prime functions in Figure~\ref{fig:prime-quantifier-free} by applying the combinators from  Figure~\ref{fig:combinator-quantifier-free}.
\end{theorem}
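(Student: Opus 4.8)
The statement has two directions. The \emph{soundness} direction --- that every derivable function is a quantifier-free interpretation --- is routine. Most of the prime functions in Figure~\ref{fig:prime-quantifier-free} were already verified to be quantifier-free interpretations in Examples~\ref{ex:commutativity-of-product}--\ref{ex:list-add}, and the remaining ones (append, create empty, list distribute) are checked the same way, by exhibiting the identity functor together with the defining quantifier-free formulas; the projections, co-projections and co-diagonal only manipulate the nullary and unary tags coming from the $\times$ and $+$ constructors. For the combinators one checks that quantifier-free interpretations with the identity functor are closed under the three functoriality rules and under function composition: composition is substitution of quantifier-free formulas into quantifier-free formulas, and since both functors are the identity the composite functor is again the identity; functoriality of $\times$ and $+$ is relativisation of the given formulas to a component together with an adjustment of the tag; and functoriality of $*$ translates the formulas defining $f$ by turning the nullary predicates of the operand type into unary ones and combining the resulting order with the list preorder, the only mild point being that components of the image list that become empty are silently dropped, which keeps us inside the class.

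The \emph{completeness} direction is the hard part, and the plan is an induction that strips the $+$ and $\times$ constructors off both the input and the output type and reduces to a core case involving only $*$ and finite alphabets. First I would assemble a toolkit of derivable auxiliary functions: all ``coherence'' isomorphisms between list types generated by the associativity, commutativity and distributivity primes, so that any list type can be brought into a chosen normal form up to a derivable isomorphism; a \emph{list unit} $\Sigma \to \Sigma^*$ derived from create-empty, commutativity and append; \emph{case analysis} on an input co-pair, obtained by combining functoriality of $+$ with the co-diagonal; and dually \emph{routing} into an output co-pair through the co-projections. A key structural observation is that a quantifier-free \emph{sentence} over a list type is just a Boolean combination of the nullary tags contributed by top-level occurrences of $+$, so every list type is derivably isomorphic to a sum $\Sigma_1 + \dots + \Sigma_n$ on whose summands a prescribed quantifier-free sentence is constant. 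Applying this to the sentence defining the output tag of a function $f : \Sigma \to \Gamma_1 + \Gamma_2$ splits $f$ into two quantifier-free interpretations with strictly smaller output type, which are handled by induction and recombined using the toolkit. The analogous analysis for an output product $\Gamma_1 \times \Gamma_2$ uses that the ``first coordinate'' predicate is a quantifier-free \emph{unary} formula, hence depends only on the finite tag of the innermost component containing a position; this lets one ``unzip'' the input along that finite colouring, for which the list-distribute prime is exactly the tool, and again recurse.

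This leaves the core case: a quantifier-free interpretation between types built only from finite alphabets and $*$, possibly nested such as $\Sigma^{**}$. Here the analysis of quantifier-free formulas over such types shows that the output positions are the subset of input positions selected according to the finite tag of the innermost list item, and the output order is a Boolean combination of the ``same outer/inner block'' and ``earlier in outer/inner block'' predicates together with the tags; one then realizes the selection (deletion of positions carrying a forbidden tag) and each admissible reordering (reversing outer or inner blocks, flattening, interleaving) by explicit compositions of reverse, concat, append, list distribute and functoriality of $*$, peeling one level of nesting at a time. I expect this core case --- organizing the combinatorics of nested selections and reorderings, and verifying that the chosen compositions realize exactly the quantifier-free formulas, including the empty-list and empty-universe edge cases forced by the convention that list items must be nonempty --- to be the main obstacle, and it is presumably what makes this the longest proof in the paper.
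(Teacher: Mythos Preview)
Your outline matches the paper's proof in overall architecture: soundness is routine, and completeness is a structural induction that peels off $+$ and $\times$ on both sides before tackling a list-to-list core, which you correctly flag as the hard part.

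Two places where the paper's machinery is sharper than your sketch. First, for the output-product step, your ``unzip along the finite colouring via list-distribute'' is the right intuition, but the paper packages it as a standalone \emph{safe pairing} lemma: for any quantifier-free unary formula $\varphi(x)$ over a type $\Sigma$ it constructs, by induction on $\Sigma$, a type $\Sigma|\varphi$ together with a derivable projection $\Sigma \to \Sigma|\varphi$, such that every interpretation whose universe formula implies $\varphi$ factors through this projection, and such that the projections for pairwise-disjoint formulas can be derivably paired. This is where the paper introduces the auxiliary type $0$ (for $\Sigma|\text{false}$), and that device simultaneously handles the empty-universe edge cases you worry about at the end. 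Second, for the core $\Sigma^* \to \Gamma^*$ case the paper does not try to enumerate admissible reorderings directly. Instead it partitions the output positions by their quantifier-free \emph{unary theory} in the input, observes that for each ordered pair of theories $(\sigma,\tau)$ the output order relative to the input order is uniformly one of ``forward'', ``backward'' or ``same block'', and proves that these three relations partition the theories into groups each of which is realised as either a singleton-producer, a forward flat-map, or a reversed flat-map (this is Claim~\ref{claim:simplify-list-to-list}). Your plan of classifying reorderings by ``Boolean combinations of block predicates'' would eventually arrive at the same decomposition, but the unary-theory viewpoint is what makes the combinatorics tractable.
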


The proof of the above theorem, with completeness being the non-trivial part, is in the appendix.

\subsection{String diagrams} We conclude this section with several example derivations of quantifier-free functions using the system from Theorem~\ref{thm:fold-quantifier-free}. To present these derivations, we use  string\footnote{This is a name clash: the word ``string'' relates to the shape of the diagrams, and not to the fact that they manipulate types that represent strings.} diagrams based on~\cite[Chapter 3]{kissingerCoeckeQuantum}, as depicted in Figure~\ref{fig:binary-concatenation-string-diagram}.

\begin{figure}
    \mypic{4}
    \caption{\label{fig:binary-concatenation-string-diagram} A string diagram that derives the binary operation of type  $\Sigma^* \times \Sigma^* \to \Sigma^*$ for list concatenation.}
\end{figure}

We also  use string diagrams with a yellow background, where parallel wires represent co-products. For example, the following diagram represents the prime function from Figure~\ref{fig:prime-quantifier-free} that describes  commutativity of $+$:
\mypic{8}
Here are two other examples of string diagrams, which use dead ends, and represent projections and co-projections: 
\mypic{9}

\begin{myexample}
    \label{ex:finite-sets} Recall the representation of finite sets as list types $1 + \cdots + 1$ used in Examples~\ref{ex:groups-go-away} and~\ref{ex:finite-state-automaton-folded}. Under this representation, every function between finite sets is derivable using the prime functions and combinators of Theorem~\ref{thm:fold-quantifier-free}.  This is easily seen using  string diagrams, as illustrated below:
    \mypic{10}
    The representation of finite sets as co-products is important here. For example, the diagonal function  $1 \to 1 \times 1$ is not derivable, as explained in Example~\ref{ex:diagonal-nonexample}. 
\end{myexample}
\section{Deriving polyregular functions}
We now move beyond quantifier-free functions and present the main contribution of this paper, which is a system that derives exactly the polyregular functions. As explained in Example~\ref{ex:fold-tail}, we cannot simply add the fold combinator to the system from Theorem~\ref{thm:fold-quantifier-free}.
Another idea would be to have two kinds of functions: quantifier-free functions, and general polyregular functions, with the fold combinator used to go from one kind to the other. In such a system, the only contribution of fold would be to define linear regular functions, since such are the functions in the conclusion of Theorem~\ref{thm:fold-quantifier-free}. We are more ambitious, and we want the fold combinator to be useful also for non-linear functions.

To define a system with fold, we add a new unary type constructor. This type constructor is denoted by $!$ and it is written on the left. The general idea is that an element $!x$ is essentially the same element as $x$, except that it is harder to obtain.  The type constructor is not idempotent, and so $!!x$ is even harder to obtain than $!x$. The goal of this type constructor  is to restrict the application of fold in a way that avoids the problems discussed in Section~\ref{sec:dangers-of-fold}. This is done by using the following \emph{safe fold} combinator: 
\begin{align*}
    \combinator{ !^k 1 \to \Gamma \quad \Gamma \times \Sigma \to \Gamma}{ !^k(\Sigma^*) \to \Gamma} {safe fold}
\end{align*}
In the combinator, $!^k$ refers to $k$-fold application of $!$. When applying the combinator, the number   $k \in \set{0,1,\ldots}$ must be strictly bigger than the grade of $\Gamma$, which is defined to be the maximal nesting of $!$, as in the following examples:
\begin{align*}
\myunderbrace{1^*}{grade zero} \qquad \myunderbrace{1+!(1 + !1)}{grade two}.
\end{align*}
For example, when $\Gamma$ has grade zero, i.e.~it does not use $!$, then safe fold can be used in the form 
\begin{align*}
    \combinator{ ! 1 \to \Gamma \quad \Gamma \times \Sigma \to \Gamma}{ !(\Sigma^*) \to \Gamma} {safe fold when $\Gamma$ is without $!$}
\end{align*}

The general idea is that the annotation with $!$  will disallow certain kinds of repeated applications of fold that would lead to functions that are not polyregular. 
Before giving a formal description of the system, we begin with an example.


\begin{myexample}[List destructor]\label{ex:deconstructor}
    In this example, we use safe fold to derive a variant of the  list destructor 
    \begin{align*}
        \Sigma^* \to  1 + \Sigma^* \times \Sigma
        \end{align*}
    that was discussed in Example~\ref{ex:list-add}.  Consider an automaton where the state space is the output type of the list destructor, 
    the initial state is $1$, and the transition function is 
    \mypic{12}
    By applying the safe fold to this automaton, we get the list deconstructor in  a weaker type, namely   
    \begin{align*}
        !(\Sigma^*) \to  1 + \Sigma^* \times \Sigma.
        \end{align*}
The weaker type avoids the issues from Example~\ref{ex:fold-tail}, since the input and output will have different numbers of $!$, and therefore we will be unable to apply fold again. 
\end{myexample}


\subsection{Graded types and their derivable functions} 
We now give a  formal description of the system. The type system is the same as previously, except that we have one more type constructor for $!$.

\begin{definition}\label{def:graded-string-type}
    A \emph{graded list type} is any type that is constructed using the following type constructors
    \begin{align*}
        \myunderbrace{1}{a type with \\ \scriptsize  one element} \qquad 
        \myunderbrace{\Sigma_1 \times \Sigma_2}{pair} \qquad \myunderbrace{\Sigma_1 + \Sigma_2}{co-pair, i.e.\\ \scriptsize disjoint union} 
        \qquad \myunderbrace{\Sigma^*}{lists}
        \qquad !\Sigma.
        \end{align*}
\end{definition}

The general idea is that $!$ does not change the underlying set, but only introduces some type annotation that controls the way fold and duplication can be applied. Apart from safe fold, the main way of dealing with $!$ is the duplicating operation
\begin{align*}
    \primefunction{\ \upgrade{\Sigma}}{\rightarrow}{\upgrade \Sigma \times \Sigma  }{{absorption},}
\end{align*}
which is named after the same rule in the parsimonious calculus  of Mazza~\cite[p.1]{mazza2015simple}. There are also prime functions for commuting $!$ with the remaining type constructors, for example $![x,y,z]$ and $[!x,!y,!z]$ are going to be equivalent in our system; for this reason we can write $!\Sigma^*$ without specifying the order in which the two constructors are applied.

\begin{definition} There are two kinds of derivability for functions between graded list types. 
    \begin{enumerate}
        \item  \textbf{Strongly derivable.} A function is called    \emph{strongly derivable} if it can be derived using the quantifier-free prime functions and combinators from Figures~\ref{fig:prime-quantifier-free} and~\ref{fig:combinator-quantifier-free}, extended to graded list types that can use $!$, along with four new prime functions
        \begin{align*}
            \primefunction{\ \upgradelong{\Gamma + \Sigma}}{\leftrightarrow}{\upgrade \Gamma + \upgrade  \Sigma }{{! commutes with $+$}}\\
            \primefunction{\ \upgradelong{\Gamma \times \Sigma}}{\leftrightarrow}{\upgrade \Gamma \times \upgrade  \Sigma }{{! commutes with $\times$}}\\
            \primefunction{\ (\upgrade \Gamma)^*}{\leftrightarrow}{\upgradelong{\Gamma^*}}{{! commutes with $*$}}\\
            \primefunction{\ \upgrade{\Gamma}}{\rightarrow}{\upgrade \Gamma \times \Gamma  }{{absorption}}\end{align*}
        and two new combinators
        \begin{align*}
            \combinator{\Sigma \to \Gamma}{!\Sigma \to !\Gamma}{functoriality of !}\\
            \combinator{ !^k 1 \to \Gamma \quad \Gamma \times \Sigma \to \Gamma}{ !^k(\Sigma^*) \to \Gamma} {safe fold} 
        \end{align*}
        The safe fold combinator can only be applied when $\Gamma$ has grade $<k$.
        \item \textbf{Weakly derivable.} A function is called \emph{weakly derivable} if it can be decomposed as 
        \[
        \begin{tikzcd}
            [column sep = 1.5cm]
        \Sigma 
        \ar[rr,"\text{ $k$-fold application of $!$}"]
        &&
        !^k \Sigma 
        \ar[r,"f"]
        &
        \Gamma
        \end{tikzcd}
        \]
        for some $k \in \set{0,1,\ldots}$ and strongly derivable $f$. 
    \end{enumerate}

\end{definition}

In other words, a function is weakly derivable if it can be strongly derived for a sufficiently upgraded input type. For example, the list destructor of type 
\begin{align*}
    \Sigma^* \to  1 + \Sigma^* \times \Sigma
\end{align*}
function is not strongly derivable (Example~\ref{ex:list-add}), but it is weakly derivable (Example~\ref{ex:deconstructor}). 

In the following theorem, which is the main result of this paper, we are only interested in weak derivability for functions between (ungraded) string types, i.e.~between types that do not use $!$. The purpose of ! is to get the strong derivations.

\begin{theorem}\label{thm:main} A function  between (ungraded) list types
    is polyregular if and only if it is weakly derivable.
\end{theorem}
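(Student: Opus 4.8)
\textbf{Proof plan for Theorem~\ref{thm:main}.}

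The proof splits, as usual, into soundness (every weakly derivable function is polyregular) and completeness (every polyregular function is weakly derivable). For soundness, the plan is to show that every \emph{strongly} derivable function is polyregular; weak derivability then follows, since the $k$-fold application of $!$ is (on the level of underlying sets) the identity, hence polyregular, and polyregular functions are closed under composition. To handle strong derivability I would proceed by induction on the derivation. The prime quantifier-free functions are polyregular by Theorem~\ref{thm:qf-system} (they are quantifier-free interpretations, hence \mso interpretations); the new prime functions for $!$ commuting with $+,\times,*$ and for absorption are, on underlying sets, either identities or the diagonal $x\mapsto(x,x)$, all visibly polyregular; functoriality combinators and composition preserve polyregularity by closure under composition and the evident product/coproduct/list constructions. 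The one genuinely delicate case is \textbf{safe fold}: here one must show that if the initial-state map $!^k1\to\Gamma$ and the transition $\Gamma\times\Sigma\to\Gamma$ are polyregular, then the folded function $!^k(\Sigma^*)\to\Gamma$ is polyregular. This is where the grade restriction $k>\mathrm{grade}(\Gamma)$ does the work; the argument should be a quantitative refinement of Theorem~\ref{thm:fold-quantifier-free}, tracking how many copies of the input an iterated application of fold can generate and using the $!$-annotations to certify that the output size stays polynomial. I expect this to be the main obstacle: one needs a size/dimension bookkeeping lemma showing that a strongly derivable function $!^k\Sigma\to\Gamma$ behaves, as far as fold is concerned, like a quantifier-free interpretation with a bounded polynomial blow-up governed by $k$, so that Theorem~\ref{thm:fold-quantifier-free} (and its linear-\mso conclusion) can be applied inside the induction. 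Concretely I would isolate an invariant of the form: a strongly derivable $f:!^{k}\Sigma\to\Gamma$ is computed by an \mso interpretation whose functor has dimension bounded in terms of $k$, and when restricted to the ``$!$-free part'' it is continuous in the sense of the continuity property used in the proof of Theorem~\ref{thm:fold-quantifier-free}; safe fold then preserves this invariant precisely because $k$ strictly exceeds the grade of the state space.

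For completeness, the plan is to reduce to the known generating systems for the polyregular functions. By~\cite[Theorem 7]{msoInterpretations} and~\cite{bojanczykPolyregularFunctions2018}, the polyregular functions are generated by a finite set of prime functions together with composition; it therefore suffices to derive each such prime, and to derive function composition, in the weakly derivable calculus. Composition is immediate at the weak level: given weakly derivable $g:\Sigma\to\Gamma$ and $h:\Gamma\to\Delta$, decompose each as an upgrade followed by a strong derivation, use functoriality of $!$ to lift the strong part of $g$ through the upgrade needed by $h$, and compose; the $!$-commutation primes let one push the upgrades past $+,\times,*$ so that the two strong parts line up. The prime functions of~\cite{bojanczykPolyregularFunctions2018} fall into two groups. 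The quantifier-free-like primes (projections, coprojections, associativity/commutativity/distributivity, list (de)constructors, append, reverse, concat, the finite-alphabet maps of Example~\ref{ex:finite-sets}, list-distribute) are either directly in Figure~\ref{fig:prime-quantifier-free} or strongly derivable from it; the list \emph{destructor}, which is not quantifier-free, is weakly derivable by the safe-fold construction of Example~\ref{ex:deconstructor}. The remaining primes are the ``genuinely polyregular'' ones: the squaring/duplication-type maps and the group-multiplication map $G^*\to G$. Group multiplication is obtained by folding the group operation as in Example~\ref{ex:groups-go-away}, with the initial state the group identity; since $G$ is a finite type it has grade zero, so safe fold applies with $k=1$, and the resulting type $!(G^*)\to G$ is downgraded to $G^*\to G$ at the weak level. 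Squaring-type maps are handled by combining the fold-derived iteration with the absorption prime to supply the needed copies of the input. The crux of completeness is thus to verify that this finite checklist of primes is complete — I would cite the completeness of~\cite{bojanczykPolyregularFunctions2018} rather than reprove it — and that each entry is weakly derivable; the only entries needing real work are the destructor (done in Example~\ref{ex:deconstructor}) and the interaction of fold with $!$-copying for the non-linear primes.

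Putting the two directions together yields the equivalence. The soundness direction is the substantive one and I would expect its proof to occupy a dedicated section (the ``soundness'' section referenced as Section~\ref{sec:soundness} in the excerpt), built around the refined analysis of \mso interpretations and the grade-tracking invariant; completeness should be comparatively short modulo the quantifier-free completeness of Theorem~\ref{thm:qf-system} and the external completeness result for the polyregular primes.
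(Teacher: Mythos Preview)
Your plan matches the paper's approach: completeness via the existing combinator system of~\cite{bojanczykPolyregularFunctions2018} (with the non-linear primes derived through the prefixes function of Claim~\ref{claim:reverse-prefix}, which is where absorption is essential), and soundness via an invariant strictly stronger than ``polyregular'' that is preserved by all combinators including safe fold.

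Two points where the paper sharpens your sketch. First, the invariant: the paper introduces \emph{graded \mso interpretations} (Definition~\ref{def:graded-interpretation}), which operate on structures carrying a per-element grade rather than a single $!^k$ attached to the whole input type. The continuity condition is stratified by grade: for each level $\ell$, the output restricted to grade $\ge\ell$ depends quantifier-freely on the input restricted to grade $\ge\ell$, and \mso-definably on the input restricted to grade $\ge\ell+1$. It is this stratification, rather than a global dimension bound in terms of $k$, that makes the fold case go through; the proof (Lemma~\ref{lem:safe-fold}) runs a separate automaton at each grade level, each one using as advice the \mso-definable output of the automata at strictly higher grades, which is exactly the inductive refinement of Theorem~\ref{thm:fold-quantifier-free} you anticipate. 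Second, you treat composition as immediate, but for graded \mso interpretations it is not: preserving the stratified continuity condition under composition relies on the fact that (ungraded) \mso interpretations between list types are closed under composition, which is the non-trivial~\cite[Corollary~8]{msoInterpretations}. In the paper, composition and safe fold are the two substantive cases of the soundness induction, not just safe fold.
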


The proof has two parts: soundness and completeness.

\subsection{Completeness}
The completeness part of Theorem~\ref{thm:main} is that every polyregular function can be weakly derived. Unlike the quantifier-free system in Theorem~\ref{thm:qf-system}, completeness   is relatively easy. This is because fold is a powerful combinator, and we can draw on a prior complete system for the polyregular functions~\cite[p.~64]{bojanczykPolyregularFunctions2018}. In the completeness proof, the polynomial growth output size will come from a single quadratic function.

\begin{claim}\label{claim:reverse-prefix}
    One can weakly derive the following function 
    \begin{align*}
        \myunderbrace{[a_1,\ldots,a_n] \quad \mapsto \quad  [[a_n,\ldots,a_1],[a_{n-1},\ldots,a_1],\ldots,[a_1]]}{call this the \emph{prefixes} function}. \end{align*}
\end{claim}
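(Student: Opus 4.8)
The plan is to reduce the \emph{prefixes} function to the ``forward'' variant
\[
\mathrm{pref}\colon\ [a_1,\ldots,a_n]\ \mapsto\ [[a_1],[a_1,a_2],\ldots,[a_1,\ldots,a_n]]
\]
and to derive $\mathrm{pref}$ by a single application of safe fold. The reduction is cheap: the claimed function is obtained from $\mathrm{pref}$ by reversing the outer list (the \emph{reverse} prime at type $(\Sigma^*)^*$) and reversing every inner list (\emph{reverse} at type $\Sigma^*$ under functoriality of $*$); both are quantifier-free, hence strongly derivable, so it suffices to show that $\mathrm{pref}$ is weakly derivable. Alternatively one builds the prefixes already reversed during the fold, by consing on the left instead of appending on the right, so that only the outer reversal is needed afterwards.

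For $\mathrm{pref}$, the key point — and the source of the single quadratic blowup promised just before the claim — is that \emph{absorption does not consume its bang}: the rule $!\Gamma \to !\Gamma\times\Gamma$ still returns a $!\Gamma$, so inside the transition function of a fold one may peel off a fresh plain copy of a banged value at \emph{every} step. Concretely, take the input at type $!^k(\Sigma^*)$, which by the commutation of $!$ with $*$ is the same as $(!^k\Sigma)^*$, so that the fold ranges over a list of $k$-times banged letters. Use the state type
\[
\Gamma\ =\ \underbrace{!(\Sigma^*)}_{\text{current prefix, banged}}\ \times\ \underbrace{(\Sigma^*)^*}_{\text{output so far}},
\]
with initial state the pair (empty banged list, empty list), obtained from $!^k 1$ by de-banging (absorption followed by a projection) and using \emph{create empty} together with the commutations of $!$. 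The transition, reading a banged letter $!^k a$, does the following: it absorbs the first component, $!(\Sigma^*)\to !(\Sigma^*)\times\Sigma^*$, keeping the banged copy and obtaining a plain copy $u=[a_1,\ldots,a_{j-1}]$ of the current prefix; it appends $a$ (de-banged) to $u$, yielding the next output block $[a_1,\ldots,a_j]$, which it appends to the second component via the quantifier-free \emph{append} at type $(\Sigma^*)^*\times\Sigma^*$; and it extends the surviving banged copy by $a$ as well, using \emph{append} under $!$ (functoriality of $!$ together with $!(\Sigma^*\times\Sigma)\leftrightarrow !(\Sigma^*)\times !\Sigma$), producing the next current prefix. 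Over a run of length $n$ the $j$-th step peels a copy of size $j-1$, so the total output size is $\sum_{j=1}^n j=\Theta(n^2)$, financed entirely by the one bang on the first component; the final function post-composes with the projection to the second component. Since $\Gamma$ has grade $1$, safe fold applies for any $k\ge 2$.

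The step I expect to require the most care is presenting the transition as a legitimate \emph{strong} derivation: it is unavoidably size-increasing, since the output of $\mathrm{pref}$ is quadratic, hence it is not quantifier-free, so the whole derivation must be routed through the $!$-machinery. In particular one must pick $k$ large enough to cover the grade of $\Gamma$, the extra bangs consumed in manufacturing the banged empty list of the initial state, and the duplication of the incoming banged letter (it is used both de-banged as a $\Sigma$ and once-banged as a $!\Sigma$, which already forces $k\ge 2$), and one must interleave the commutations of $!$ past $\times$, $+$ and $*$ so that every intermediate type is a genuine graded list type. None of this is deep, but it is the part where the bang bookkeeping has to be carried out precisely; everything else is either a prime from Figure~\ref{fig:prime-quantifier-free} or an instance of functoriality.
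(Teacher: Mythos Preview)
Your approach is essentially the paper's: the same state type $!\Sigma^{*}\times\Sigma^{**}$ (up to swapping the components), the same use of absorption on the banged current prefix at each step to peel off a fresh plain copy, and the same appeal to safe fold at grade one. The paper builds the reversed prefixes directly inside the fold rather than computing the forward variant and post-composing with reversals, but as you yourself note this difference is cosmetic.

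There is one slip worth flagging. You write that since the input has type $!^{k}(\Sigma^{*})\cong(!^{k}\Sigma)^{*}$, ``the fold ranges over a list of $k$-times banged letters''. That is not how the safe fold combinator works: the $!^{k}$ in the conclusion $!^{k}(\Sigma^{*})\to\Gamma$ is added \emph{on top of} whatever alphabet appears in the premise $\Gamma\times\Sigma\to\Gamma$; the transition still receives a plain $\Sigma$, not $!^{k}\Sigma$. Since you need the incoming letter twice (once into the output block, once into the surviving banged prefix), you must instead \emph{instantiate the alphabet itself} as $!\Sigma$, which is precisely what the paper does, giving the transition type $(\Sigma^{**}\times!\Sigma^{*})\times!\Sigma\to\Sigma^{**}\times!\Sigma^{*}$. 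With that one change everything you wrote goes through; the fold then has type $!^{k}((!\Sigma)^{*})\cong!^{k+1}\Sigma^{*}\to\Gamma$, which still witnesses weak derivability.
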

\begin{proof}
    Consider an automaton, where the input alphabet is $!\Sigma$, the state space is $\Sigma^{**} \times !\Sigma^*$, the initial state is the pair of empty lists, and the transition function is 
    \mypic{3}
    By applying fold to this automaton, we get a function of type 
    \begin{align*}
    !!\Sigma^* \to \Sigma^{**} \times !\Sigma^*
    \end{align*}
    which returns the output of the prefixes function on the first output coordinate. Observe that in this proof, we applied the fold to a transition function that already uses $!$.
\end{proof}

Using the above function, in the appendix we show that the weakly derivable functions contain an already existing complete system for the polyregular functions~\cite[p.~64]{bojanczykPolyregularFunctions2018}. 

Before discussing the soundness proof in the theorem, let us comment on the minimality of its system. The system inherits all of the primes and combinators from the quantifier-free system in Theorem~\ref{thm:qf-system}. In the presence of fold, some of these primes and combinators can be derived thus leading to a smaller system.

\begin{theorem}\label{thm:reduced}
    The system from Theorem~\ref{thm:main} remains complete after removing the map combinator, as well as all prime functions and combinators that involve the list type, and adding 
    \begin{align*}
        \primefunction {1 + \Sigma} \to {\Sigma^*} {lists of length at most one}\\
        \primefunction {\Sigma^* \times \Sigma^*} \to {\Sigma^*} {binary list concatenation.}
        \end{align*}       
\end{theorem}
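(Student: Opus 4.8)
The plan is to show that in the presence of safe fold, the list-related primes and combinators removed from the system can be recovered from the two new primes and the rest of the calculus. Since the original system of Theorem~\ref{thm:main} is complete, it suffices to re-derive, in the reduced system, each prime or combinator of the full system that mentions $*$: namely the functoriality of $*$ combinator (the ``map'' combinator $\frac{\Gamma \to \Sigma}{\Gamma^* \to \Sigma^*}$), the ``! commutes with $*$'' isomorphism $(!\Gamma)^* \leftrightarrow !(\Gamma^*)$, and the list-building/manipulating primes of Figure~\ref{fig:prime-quantifier-free} that involve lists: append $\Sigma^*\times\Sigma \to \Sigma^*$, reverse $\Sigma^*\to\Sigma^*$, concat $\Sigma^{**}\to\Sigma^*$, ``create empty'' $\Sigma \to \Sigma\times\Gamma^*$, and list-distribute $(\Sigma\times\Gamma)^*\to\Sigma^*\times\Gamma^*$. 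Note that the list \emph{constructor} and \emph{destructor} from Example~\ref{ex:list-add} are expressible from ``lists of length at most one'' $1+\Sigma\to\Sigma^*$ together with binary concatenation (and their weak inverses via fold, as in Example~\ref{ex:deconstructor}), so these are available as stepping stones.

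The key steps, in order. First I would derive the list constructor $1 + \Sigma\times\Sigma^* \to \Sigma^*$ by composing the injection $1+\Sigma\times\Sigma^* \to \Sigma^* + \Sigma^*\times\Sigma^*$ built from ``lists of length at most one'' (on the $1+\Sigma$ part, after the co-diagonal-style manipulations) and binary concatenation, then using the co-diagonal $\Sigma^*+\Sigma^*\to\Sigma^*$. Second, I would obtain reverse: fold the \emph{left} list constructor against the \emph{right} one — concretely, run safe fold with state space $!\Sigma^*$ and transition ``append at the correct end'', exactly as in Example~\ref{ex:identity-and-reverse}, which gives reverse on a suitably upgraded input; since we only need weak derivability the extra $!$ on the input is harmless. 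Third, the ``map'' combinator $\frac{\Gamma\to\Sigma}{\Gamma^*\to\Sigma^*}$: given $f:\Gamma\to\Sigma$, fold over $!(\Gamma^*)$ with state space $\Sigma^*$, initial state the empty list (from ``lists of length at most one'' precomposed with a co-projection $1\to 1+\Sigma$), and transition $(w,a)\mapsto w\cdot [f(a)]$, built from $f$, ``lists of length at most one'', and binary concatenation; this yields $\mathrm{map}(f)$ up to the usual $!$ on the input, hence weakly. Fourth, concat $\Sigma^{**}\to\Sigma^*$ is then the special case $\mathrm{fold}$ of binary concatenation over lists of lists (state $\Sigma^*$, transition binary concat), again weakly. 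Fifth, ``create empty'' $\Sigma\to\Sigma\times\Gamma^*$ factors as $\Sigma \to \Sigma\times 1 \to \Sigma \times (1+\Gamma) \to \Sigma\times\Gamma^*$ using a co-projection and ``lists of length at most one''. Sixth, append is just a projection-padded instance of binary concatenation precomposed with ``lists of length at most one'' on the second argument. Seventh, list-distribute $(\Sigma\times\Gamma)^*\to\Sigma^*\times\Gamma^*$ is a single fold with state $\Sigma^*\times\Gamma^*$ whose transition appends the two components of the input pair to the two registers. Finally, ``! commutes with $*$'': the direction $(!\Gamma)^*\to !(\Gamma^*)$ is obtained by folding the list constructor on input alphabet $!\Gamma$ with state space $!(\Gamma^*)$ exactly as in Example~\ref{ex:deconstructor}'s dual, and the reverse direction $!(\Gamma^*)\to(!\Gamma)^*$ by combining ``absorption'' with the map combinator just derived. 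Throughout, wherever a derivation produces the target only after prepending some $!^k$ to the input, weak derivability absorbs it, so nothing is lost.

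The main obstacle I expect is the bookkeeping of $!$-annotations: safe fold requires the state-space grade to be strictly below $k$, and the intermediate functions I am folding (e.g.\ binary concatenation, the list constructor) must themselves be \emph{strongly} derivable in the reduced system for the fold to apply, not merely weakly derivable. So I must check that each transition function used above is genuinely strongly derivable without any list primes beyond the two new ones — in particular that binary concatenation and ``lists of length at most one'', being primes, are strongly derivable by fiat, and that all the ``glue'' (projections, co-projections, commutativities, absorption, functoriality of $!$) stays inside the strong fragment. A secondary subtlety is that ``lists of length at most one'' has type $1+\Sigma\to\Sigma^*$ rather than $1\to\Sigma^*$ or $\Sigma\to\Sigma^*$, so every time I need an empty list I must supply it via a co-projection into $1+\Sigma$, and every time I need a singleton via the other co-projection; keeping these uses resource-correct (no phantom $1$'s created, cf.\ Example~\ref{ex:terminal-object}) is the kind of detail that, while routine, is where an error would hide. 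Once these checks go through, the re-derivations above witness completeness of the reduced system, and minimality is immediate since we have only removed rules.
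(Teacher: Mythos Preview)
Your overall strategy coincides with the paper's: re-derive the removed list primes (append, reverse, concat, create empty, list distribute) and a weak form of the map combinator using safe fold, the two new primes, and the remaining calculus. The individual derivations you sketch for append, reverse, concat, map, and list distribute are essentially those of the paper.

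Two points deserve correction. First, the paper does \emph{not} remove the prime ``$!$ commutes with $*$''; the reduced system in Figure~\ref{fig:reduced-system} still contains it, and the proof explicitly lists only the five quantifier-free list primes of Figure~\ref{fig:prime-quantifier-free} as removed. So your attempt to re-derive $(!\Gamma)^* \leftrightarrow !(\Gamma^*)$ is unnecessary---and in fact your fold-based derivation would only produce it in a weakly typed form, which would be problematic since this isomorphism is used strongly elsewhere (e.g.\ in deriving prefixes, Claim~\ref{claim:reverse-prefix}).

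Second, your derivation of ``create empty'' has a genuine gap. You write the factorisation $\Sigma \to \Sigma \times 1 \to \Sigma \times (1+\Gamma) \to \Sigma \times \Gamma^*$, but the first step $\Sigma \to \Sigma \times 1$ is precisely the map that Example~\ref{ex:terminal-object} singles out as \emph{not} available: it manufactures a new universe element out of nothing. You flag this danger yourself in your final paragraph but do not repair it. The paper's fix is different: it first observes that for every list type $\Sigma$ one can strongly derive the unique function $!\Sigma \to 1$ (by structural induction on $\Sigma$, using safe fold for the list case), and then combines this with absorption $!\Sigma \to !\Sigma \times \Sigma$ to obtain create empty in the weak type $!\Sigma \to \Sigma \times \Gamma^*$. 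This is the missing idea in your step five.
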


\subsection{Soundness}
\label{sec:soundness}
The rest of this section is devoted to the proof of soundness for Theorem~\ref{thm:main}, which is that all weakly derivable functions are polyregular. We will define an invariant on strongly derivable functions, which is satisfied by the prime functions, is preserved by the combinators, and which implies that a function is polyregular. This invariant can be seen as giving a semantic explanation of the $!$ constructor and the strongly derivable functions.

The invariant uses a more refined notion of \mso interpretations, called \emph{graded \mso interpretations}. These interpretations operate on graded structures, as described in the following definition.

\begin{definition}[Graded structure]\label{def:graded-structure}
    A \emph{graded structure} is a structure, together with a \emph{grading function} that assigns to each element in the universe a grade in $\set{0,1,\ldots}$.
\end{definition}

The idea is that the grade of an element is the number of times that $!$ has been applied, as in the following example
\begin{align*}
(\myunderbrace{1}{grade\\ zero},\myunderbrace{![1,1,1]}{grade \\ one}).
\end{align*}
A graded list type can be seen as describing a class of graded structures, with the constructor $!$ incrementing the grade of all elements, and the remaining constructors treated in the same way as in Definition~\ref{def:structures-for-string-types}.

If $A$ is a graded structure,  we write $A|\ell$ for the structure that is obtained from $A$ by restricting its universe to elements that have grade at least $\ell$. In the definition of a graded \mso interpretation, we use the grades to control how an \mso interpretation $f$ uses quantifiers. The general idea is that $f(A)|\ell$ depends on $A|\ell$ in a quantifier-free way, and on $A|\ell+1$ in an \mso definable way. 

Before presenting the formal definition, we introduce some notation, in which a polynomial functor $F$ is applied to a tuple of elements $\bar a$, yielding a new (typically longer) tuple of elements $F(\bar a)$. If an input set $A$ for  a polynomial functor $F$ is equipped with some linear order, then this linear order can be extended to a linear order on the output set $F(A)$, by using some fixed order on the components, and ordering tuples lexicographically. This way we can think of a polynomial functor as transforming linearly ordered sets, i.e.~lists. We will care about lists of fixed length, which we call tuples. For example if the polynomial functor is   $A  + A^2$, then applying it to the tuple $(1,2)$ gives the tuple 
\begin{align*}
(1,2,1,2, (1,1), (1,2), (2,1), (2,2)) \in F(\set{1,2})^6.
\end{align*}  
In the definition below, we will care about the theories of tuples of the form $F(\bar a)$, with the theories defined as in Definition~\ref{def:rank-free-theory}, but extended to \mso formulas of given quantifier rank (the quantifier rank of an \mso formula is the nesting depth of the quantifiers, with first-order and second-order quantifiers counted in the same way). Recall that these theories allow for distinguished elements that are not part of the universe in a structure.  Equipped with this notation, we are ready to define the graded version of \mso interpretations.

\begin{definition}\label{def:graded-interpretation}
    A function $f : \Sigma \to \Gamma$ is called a \emph{graded \mso interpretation} if there is some polynomial functor 
    \begin{align*}
    F(A) \quad =\quad \myunderbrace{A}{this is called the \\ \emph{quantifier-free} \\ component} \quad +\quad   \myunderbrace{F_0(A) + \cdots + F_m(A)}{components from this part \\ of the functor are called the \\ \emph{downgrading} components}  
    \end{align*}
 such that the following conditions hold: 
    \begin{enumerate}
        \item {\bf Universe and grades.} The universe of the output structure is contained in 
        \begin{align*}
        A + F_0(A|1) + F_1(A|2) + \cdots + F_m(A|m+1).
        \end{align*}
        The grades in the output structure are defined as follows: elements from $F_\ell$ have grade $\ell$, and elements from the quantifier-free component inherit their grade from $A$.
        \item {\bf Continuity.} For every $k,\ell\in \set{0,1,\ldots}$ there is some quantifier rank $r \in \set{0,1,\ldots}$ such that for every input structure $A$  and distinguished elements  $\bar a \in A^k$, 
 the quantifier-free theory of the tuple $F(\bar a)$ in $f(A)|\ell$  is uniquely determined by the following two theories: 
        \begin{enumerate}
            \item the quantifier-free theory of $\bar a$  in $A|\ell$;
            \item the rank $r$ \mso theory of $\bar a$ in $A|\ell+1$.
        \end{enumerate}
    \end{enumerate}
\end{definition}

If we ignore the grades, then a graded \mso interpretation is a special case of an \mso interpretation. This is because the types mentioned in the continuity condition will tell us which output candidates from $F(A)$ are in the universe of the output structure, and how the relations of the output structure are defined on them. Therefore, the continuity condition tells us that the output can be defined in \mso, and even in a way that respects the grades. 

Conversely, we can also view each (ungraded) \mso interpretation as a graded \mso interpretation with a trivial grade structure: namely all input elements have nonzero grade, say grade one, and all output elements have zero grade. With such a trivial grade structure, the continuity condition collapses to the usual condition in an \mso interpretation. 

Graded \mso interpretations also generalize quantifier-free interpretations -- this happens in the case when all elements in the input and output structures have grade zero. In this case, only the quantifier-free component is useful, and all formulas are quantifier-free. 

In the appendix, we show that 
 all strongly derivable prime functions are graded \mso interpretations. This will imply that all weakly derivable functions are ungraded \mso interpretations, since the continuity condition becomes vacuous when the input type is sufficiently upgraded. 
The proof is an induction on the size of a strong derivation, with the most interesting cases being composition and safe fold. Composition is a corollary of composition closure for \mso interpretations on string types~\cite[Corollary 8]{msoInterpretations}, while safe fold is treated in the same way as in Theorem~\ref{thm:fold-quantifier-free}.

\section{Linear regular functions}
\label{sec:linear-regular-functions}
The last group of results from this paper concerns the linear regular functions, i.e.~polyregular functions of linear growth. We show that a small change to the system from Theorem~\ref{thm:main} will give exactly the linear regular functions. As we will see, superlinear growth in the system from Theorem~\ref{thm:main} is not created by the fold combinator, with the culprit instead being 
\begin{align*}
    \primefunction{\ \upgrade{\Gamma}}{\rightarrow}{\upgrade \Gamma \times \Gamma  }{\ \quad \quad {absorption}}.
\end{align*}
This function allows us to create an unbounded number of copies of an element of $\Gamma$, as witnessed in the proof of Claim~\ref{claim:reverse-prefix}. If we simply remove this function, then the system will become too weak, since all other prime functions and combinators preserve the property that the universe of the output structure is contained in the universe of the input structure. The solution is to add a weaker form of absorption 
\begin{align*}
    \primefunction{\ \upgrade{\Gamma}}{\rightarrow}{\Gamma \times \Gamma  }{linear absorption}.
\end{align*}
In other words, removing \emph{all} occurrences of $!$ is the price paid for copying. The corresponding system describes exactly the linear regular functions, as stated in the following theorem.

\begin{theorem}\label{thm:linear}
    A function 
    $
        f : \Sigma \to \Gamma
     $ between string types
        is linear regular if and only if it can be weakly derived in a system that is obtained from the one in Theorem~\ref{thm:main}\footnote{One can also start with the smaller system from Theorem~\ref{thm:reduced}.} by replacing  absorption with linear absorption.\end{theorem}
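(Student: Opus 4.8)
The plan is to prove the two directions separately, reusing as much of the machinery for Theorem~\ref{thm:main} as possible. For \emph{soundness} (weakly derivable in the linear system $\Rightarrow$ linear regular), I would refine the invariant from the soundness proof of Theorem~\ref{thm:main}: recall that every strongly derivable function is shown there to be a graded \mso interpretation. I would strengthen this to say that every function strongly derivable in the \emph{linear} system is a graded \mso interpretation whose polynomial functor $F$ is \emph{linear}, i.e.\ all components have dimension at most one. The point is that linearity of $F$ is exactly the semantic counterpart of ``the universe of the output is no bigger than the universe of the input'', which gives linear output size, hence (being already an \mso interpretation, and of linear size) a linear regular function. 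One then checks that this refined invariant is preserved by every prime and combinator of the linear system. All the quantifier-free primes and combinators of Figure~\ref{fig:prime-quantifier-free} and~\ref{fig:combinator-quantifier-free} use the identity functor and trivially keep linearity; the $!$-commutation primes do not touch the functor; functoriality of $!$ and safe fold are handled exactly as before (safe fold, as in Theorem~\ref{thm:fold-quantifier-free}, produces a \emph{linear} \mso interpretation, provided the transition function's functor is linear, which holds by induction); and the only prime that in the original system broke linearity, namely absorption $!\Gamma \to !\Gamma \times \Gamma$, is gone. Its replacement, linear absorption $!\Gamma \to \Gamma\times\Gamma$, duplicates the universe but \emph{erases all the $!$'s}: since after applying it the grades are all zero, the continuity condition degenerates and the functor one needs is just the identity up to the fixed doubling, which is still linear because the input contributes its full universe once and the duplication is by a constant factor $2$ — more precisely the appropriate functor for the composite is $A + A$, still linear. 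So the invariant survives, and for a weakly derivable function between \emph{ungraded} types the grading is vacuous and we are left with an ordinary linear \mso interpretation, i.e.\ a linear regular function.

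For \emph{completeness} (linear regular $\Rightarrow$ weakly derivable in the linear system), I would invoke an existing complete combinatory system for the linear regular functions, namely the one of~\cite{bojanczykRegularFirstOrderList2018}, and show that each of its prime functions and each of its combinators is weakly derivable in our linear system. This mirrors the completeness argument for Theorem~\ref{thm:main} (which reduces to the complete system of~\cite{bojanczykPolyregularFunctions2018}), with two differences: we must stay inside the linear fragment, and we cannot use Claim~\ref{claim:reverse-prefix}, whose proof genuinely needs full absorption to get quadratic blow-up. The key observation is that the only prime of~\cite{bojanczykRegularFirstOrderList2018} that is not already quantifier-free is the list destructor $\Sigma^*\to 1+\Sigma\times\Sigma^*$ (and its cousins such as the destructor that peels off the last element), and these were shown weakly derivable in Example~\ref{ex:deconstructor} using safe fold with a transition function that does not use absorption at all — only quantifier-free primes — so that derivation lives in the linear system. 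The duplication prime $x\mapsto(x,x)$ of~\cite{bojanczykRegularFirstOrderList2018}, which is what actually forces copying, is obtained by composing the $k$-fold upgrade $\Sigma\to !\Sigma$ (built from functoriality of $!$) with linear absorption $!\Sigma\to\Sigma\times\Sigma$; this is weakly derivable by definition. Composition of two weakly derivable functions $\Sigma\to\Gamma\to\Delta$ is handled as in the completeness proof of Theorem~\ref{thm:main}: upgrade the input enough that both strong derivations apply, using functoriality of $!$ and the $!$-commutation isomorphisms to push the $!$'s through. Every other prime and combinator of~\cite{bojanczykRegularFirstOrderList2018} is already quantifier-free, hence strongly (a fortiori weakly) derivable.

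The main obstacle I expect is the soundness direction, specifically pinning down the right \emph{linear} refinement of the graded-interpretation invariant so that it (i) is preserved by linear absorption despite the doubling of the universe, and (ii) is preserved by safe fold, where one must re-examine the functor produced in the proof of Theorem~\ref{thm:fold-quantifier-free}. In that proof the output functor was $F(A) = A + (1+\cdots+1)$, which is linear, so a single fold is fine; the subtlety is that in the graded setting one folds a transition function that may itself already be a non-trivial (but, by the induction hypothesis, linear) graded \mso interpretation, and one must check that the composition-and-fold construction keeps the functor linear rather than, say, squaring it. I expect this to go through because linearity is closed under the operations used (sum, and substitution of linear functors into the single-component-dimension-one positions), but it is the place where the bookkeeping is most delicate, and it is essentially the reason the theorem isolates absorption — not fold — as the unique source of superlinear growth. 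A secondary, more mechanical obstacle is verifying that the chosen complete system for the linear case (that of~\cite{bojanczykRegularFirstOrderList2018}) has \emph{all} of its primes either quantifier-free or reducible via the destructor trick of Example~\ref{ex:deconstructor}; this is a finite check but must be done prime by prime.
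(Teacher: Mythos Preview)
Your completeness sketch is on the right track and matches the paper's second proof: reduce to the system of~\cite{bojanczykRegularFirstOrderList2018}, which is Theorem~\ref{thm:completeness-without-fold} minus split, and observe that the derivations of the remaining non-quantifier-free primes (group multiplication, diagonal, list destructor, block) never use full absorption. You are slightly cavalier in saying ``every other prime is already quantifier-free'': block and group multiplication are not, but their derivations in Lemma~\ref{lem:split-block} and Example~\ref{ex:groups-go-away} use only fold, so the check goes through.

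The soundness direction, however, has a genuine gap. Your proposed invariant---that strongly derivable functions are graded \mso interpretations with a \emph{linear} polynomial functor---does not work, for two related reasons. First, it fails to distinguish the two systems: the forbidden absorption $!\Gamma \to {!\Gamma}\times\Gamma$ \emph{also} has a linear functor $A+A$, so your invariant would be satisfied by the full system just as well as by the linear one, and therefore cannot be what forces linear output growth after folding. Second, and this is the concrete failure, your expectation that ``linearity is closed under the operations used'' in the fold construction is false. In the graded setting (Lemma~\ref{lem:safe-fold}) the functor built for the folded function contains a factor $A \times H_\ell(A)$, used to tag output elements with the index $i\in\{1,\dots,n\}$ of the step at which they were created; this is genuinely quadratic even when the transition function $\delta$ is linear.

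The missing idea is an additional \emph{downgrading} condition on the invariant (Definition~\ref{def:linear-graded-mso}): whenever an input element has more than one copy in the output, \emph{all} copies must have strictly smaller grade. Linear absorption satisfies this (both copies drop to grade~$0$), full absorption does not (one copy retains its grade). Under fold, downgrading guarantees that each element of the input list can be copied only boundedly many times across the whole run---once per grade level---which is exactly what lets you replace the non-linear functor $G$ by a linear one (Lemma~\ref{lem:linear-safe-fold}). Without this condition, the iterated copying that you correctly worry about in your obstacles paragraph actually happens.
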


        The proof for the above theorem, which is in the appendix, is based on Example~\ref{ex:sst} about streaming string transducers. The idea is that linear absorption together with fold is enough to simulate streaming string transducers, which are expressively complete the linear regular functions.

\subsection{Tree types}
\label{sec:trees}
It turns out that the system for linear regular functions from Theorem~\ref{thm:linear} can be generalized without much further difficulty to trees. This is in contrast to a prior combinator system for trees~\cite[Theorem 7.1]{bojanczykDoumane2020}, which had an involved proof using approximately fifty prime functions. We believe that this is evidence for the usefulness of the fold combinator.

Consider a type for trees, defined inductively by
\begin{align*}
\trees \Sigma = \myunderbrace{1 + \trees \Sigma  \times \Sigma \times \trees \Sigma}{a tree is either a leaf, or has two \\ subtrees and a root label}
\end{align*}
A \emph{tree type} is a type that is constructed using the types from Definition~\ref{def:string-type}, together with the tree type. Tree types can be seen as structures, using  the same construction as for lists in Defintion~\ref{def:structures-for-string-types}, except that instead of one linear order, we have two orders:  the \emph{descendant order}   (which is not a linear order) and the \emph{document order} (aka infix order) given by 
\begin{align*}
\text{left subtree} 
\quad < \quad 
\text{root}
\quad < \quad 
\text{right subtree.}
\end{align*}
 Define a \emph{linear regular tree function} to be a function between tree types that is defined using linear \mso transductions. 

Following Wilke~\cite{wilke1996algebraic}, we view trees as an algebra. In this  algebra, there is an additional type constructor $\contexts \Sigma$, which describes \emph{contexts}. A context is a tree with a distinguished leaf (called the \emph{hole}) where other trees can be inserted. This is not a primitive type constructor, only syntactic sugar for a certain combination of the list and tree type constructurs:
    \begin{align*}
    \contexts \Sigma \eqdef (\myunderbrace{(\trees \Sigma \times \Sigma)}{the hole is in \\  the right subtree}\ +\  \myunderbrace{( \Sigma \times \trees \Sigma)}{the hole is in \\  the left subtree})^*.
    \end{align*}
To operate on trees and contexts, we use the following operations,  called \emph{Wilke's operations}, see~\cite[Figure 1]{wilke1996algebraic}: 
\begin{align*}
    \primefunction{1 + \trees \Sigma  \times \Sigma \times \trees \Sigma}{\to}{\trees \Sigma}{tree constructor}\\
    \primefunction{\contexts \Sigma \times \trees \Sigma}{\to}{\trees \Sigma}{replace hole by a tree}\\
    \primefunction{\contexts \Sigma \times \contexts \Sigma}{\to}{\contexts \Sigma}{context composition}\\
    \primefunction{1 + (\trees \Sigma \times \Sigma) + (\Sigma \times \trees \Sigma)}{\to}{\contexts \Sigma}{context creation}\\
\end{align*}
All of these operations are quantifier-free interpretations, and we will use them as primes. The last two operations need not be explicitly added, since they can  derived using the system from Theorem~\ref{thm:fold-quantifier-free}.

\begin{theorem}\label{thm:linear-trees}
    A function 
    $
        f : \Sigma \to \Gamma
     $ between tree types
        is linear regular if and only if it can be derived in a system that is obtained from the system in Theorem~\ref{thm:linear} by adding the tree type, Wilke's operations, the prime function
        \begin{align*}
        \primefunction{!\trees \Sigma}{\leftrightarrow}{\trees !\Sigma}{! commutes with $\trees$}
        \end{align*}
        and  the following combinator
        \begin{align*}
            \frac{!^k1 \to \Gamma \quad \Gamma \times \Sigma \times  \Gamma \to \Gamma}{!^k \trees \Sigma \to \Gamma} & & \text{safe tree fold,}
        \end{align*}
        which can be applied whenever $\Gamma$ has grade $<k$.
\end{theorem}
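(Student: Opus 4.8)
The plan is to mirror the proof of Theorem~\ref{thm:linear}, replacing streaming string transducers by their tree analogue, namely (bottom-up) streaming tree transducers over ranked trees, which are known to capture exactly the linear regular tree functions. The argument again splits into soundness and completeness.

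For \emph{soundness}, I would first extend the graded \mso interpretation machinery (Definition~\ref{def:graded-interpretation}) to tree types; this is routine, since a tree type is built from the same list/product/coproduct constructors together with $\trees$, which — like $\ast$ — is a polynomial-functor-in-spirit whose associated structures carry the descendant and document orders. The four things to check are: (a) Wilke's operations are quantifier-free interpretations on tree types (the paper already asserts this — replacing a hole by a tree, composing contexts, and constructing a tree from two subtrees and a label are all definable by rewiring the two orders without quantifiers); (b) the prime function $!\trees\Sigma \leftrightarrow \trees\, !\Sigma$ is a graded \mso interpretation, which is immediate because $!$ does not touch the underlying structure, only uniformly shifts grades; (c) the \emph{safe tree fold} combinator preserves the invariant, which is the analogue of the safe-fold case and is handled exactly as in Theorem~\ref{thm:fold-quantifier-free}: running the transition function $\Gamma\times\Sigma\times\Gamma\to\Gamma$ bottom-up over a tree, the quantifier-free theory of a tuple of output candidates in the final state is computed by a finite \emph{tree} automaton whose transitions come from the continuity property, and finite tree automata are \mso-definable, so the result is a linear \mso interpretation (linear because the functor stays the identity — with linear absorption the only copying discards all $!$'s, exactly as in the string case); and (d) linear absorption together with the inherited primes still only produces functions whose output universe injects into the input universe, now over tree-shaped structures. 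Combining these, every strongly derivable function in the extended system is a graded \mso interpretation of linear growth, hence every weakly derivable function between ungraded tree types is a linear regular tree function.

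For \emph{completeness}, I would argue that every linear regular tree function can be weakly derived. Using the characterization of linear regular tree functions by bottom-up streaming tree transducers (with string-valued or tree-valued registers), it suffices to show the extended system can simulate such a transducer. As in Example~\ref{ex:sst}, the configuration space of the transducer is a tree type of the form $\Delta = (\Gamma^\ast\ \text{or}\ \trees\Gamma\ \text{or}\ \contexts\Gamma)^{k_1} + \cdots$, the initial and final functions are quantifier-free, and the register-update function associated to each node takes the two children's configurations plus the node label to the new configuration. Feeding the update function to \emph{safe tree fold} yields the transducer semantics at type $!^k\trees\Sigma\to\Gamma$, i.e.\ a weakly derivable function; post-composing with the quantifier-free final function gives the desired map into $\Gamma^\ast$. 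The one wrinkle — identical to the string case — is that a streaming tree transducer may read each node label a bounded number of times while a quantifier-free update may not duplicate it; this is absorbed by the linear-absorption prime (which makes a bounded number of copies at the cost of dropping $!$'s) together with the observation that every linear regular tree function factors through a relabelling that pre-copies each label. Alternatively, and perhaps more cleanly, I would deduce completeness from Theorem~\ref{thm:linear} by a detour through string encodings: a tree can be linearly encoded as its sequence of Wilke operations, Theorem~\ref{thm:linear} derives the corresponding string function, and the \mso-definable (indeed quantifier-free, via the tree constructor and replace-hole primes) decoding back to a tree is available in the system.

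The main obstacle I anticipate is the safe-tree-fold soundness step: unlike lists, trees are folded from two sub-results, so the ``finite automaton computing quantifier-free theories'' becomes a bottom-up tree automaton, and I must verify that the continuity property of graded interpretations is genuinely closed under this binary combination — i.e.\ that the quantifier-free theory of $F(\bar a)$ in the state at a node is determined by the quantifier-free theories at its two children plus the bounded-rank \mso theory of the relevant part of the input subtree. This requires a compositionality argument for \mso over the tree structures of Definition~\ref{def:structures-for-string-types} extended by $\trees$, taking care that the two orders (descendant and document) compose correctly and that grades are tracked through the $F_\ell(A|\ell+1)$ layers. Everything else is bookkeeping: checking the new primes, and re-running the string proofs of Theorems~\ref{thm:main} and~\ref{thm:linear} with ``list'' replaced by ``tree'' and ``fold'' by ``tree fold''.
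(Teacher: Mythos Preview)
Your proposal is correct and follows essentially the same route as the paper: soundness by replaying the graded \mso interpretation argument with tree automata in place of string automata for the safe tree fold case, and completeness via the Alur--D'Antoni streaming tree transducer model simulated through Wilke's operations and safe tree fold, with the duplication wrinkle handled by linear absorption exactly as in Example~\ref{ex:sst}. Your alternative completeness route through string encodings is not taken in the paper (the paper instead only uses an encoding of general tree types into $\trees\Sigma$ for finite $\Sigma$, staying within the tree world), but your primary argument already matches.
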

\begin{proof}[Sketch]
    As in Theorem~\ref{thm:linear}. We use the same soundness proof, except that tree automata are used instead of string automata. For completeness, we use a result of Alur and D'Antoni, which says that every linear \mso interpretation is computed by a streaming tree transducer~\cite[Theorem 4.6]{alur2014regular}. Adjusting for notation, a streaming tree transducer is defined in the same way as in Example~\ref{ex:sst}, except that instead of lists, registers store trees and contexts. The registers in the transducer are manipulated using Wilke's operations; and thus for the same reason as in Example~\ref{ex:sst}, the corresponding tree function is weakly derivable. This completeness proof takes into account only functions of type $\trees \Sigma \to \trees \Gamma$ where $\Sigma$ and $\Gamma$ are finite alphabets, but the extension to other tree types is easily accomplished by encoding tree types into such trees.
\end{proof}

\paragraph*{Tree polyregular functions.}
It is natural to ask about a polyregular system for trees. We conjecture that if we add absorption to the system from Theorem~\ref{thm:linear-trees}, and possibly a few extra  prime functions, then the system will define exactly the \mso interpretations on tree types. This conjecture would imply that tree-to-tree \mso inprepretations are closed under composition, which is an open problem.

\section{Perspectives}
We finish the paper with some directions for future work. 

In our proofs, we are careless about the number of times that $!$ is applied. Maybe a more refined approach can give a better understanding of the correspondence between the nesting of $!$ and the resources involved, such as quantifiers or copying. 
Alternatively, one could try to do away with $!$ entirely, and use some proof system where the safety of fold is captured by a structural property of the proof. One idea in this direction is to look at cyclic proofs~\cite{brotherston2011sequent}. Another idea would be to capture the structural property using the visual language of string diagrams.

Another question that concerns string diagrams is about the equivalence problem.
Decidability of the equivalence problem for polyregular functions is an open problem, but in the case of linear functions the problem is known to be decidable~\cite[Theorem 1]{gurariEquivalenceProblemDeterministic1982}. Maybe one can express the decision procedure in terms of string diagrams, by  designing equivalences on string diagrams which identify exactly those diagrams that describe the same function.

The system in this paper is based on combinators. A more powerful system would also allow for variables, $\lambda$, and higher-order types. Such a system exists without fold~\cite[Section 4]{polyregular-survey}, and it is tempting to see if it can be extended with fold. This extension is not guaranteed to work, since  there are examples of higher-order linear calculi where the corresponding complexity is super-polynomial, in fact primitive recursive, e.g.~\cite[Theorem 5]{lago2005geometry} or~\cite[Theorem 2.15]{kuperberg2021cyclic}. If successful, the extension would  be an expressive functional programming language that can only define regular functions.

\bibliographystyle{plain}
\bibliography{bib}

\appendix
\section{The quantifier-free system}
\label{sec:qf-system}
In this part of the appendix, we prove Theorem~\ref{thm:qf-system}. In the proof, a \emph{derivable function} is a function that can be derived using the system from Theorem~\ref{thm:qf-system}. In other parts of the paper, derivable functions will refer to other systems. 

The proof of Theorem~\ref{thm:qf-system} has two parts: soundness (i.e.~all derivable functions are quantifier-free interpretations) and completeness (i.e.~all quantifier-free interpretations are derivable).

\subsection{Soundness}
\label{sec:qf-soundness}
To prove soundness of the system, we show that all prime functions from Figure~\ref{fig:prime-quantifier-free} are quantifier-free interpretations, and that the class of quantifier-free interpretations is closed under applying all combinators from Figure~\ref{fig:combinator-quantifier-free}. 

Several of the prime functions from Figure~\ref{fig:prime-quantifier-free} where already shown to be quantifier-free interpretations in Examples~\ref{ex:commutativity-of-product}, \ref{ex:list-reverse} and~\ref{ex:list-add}. The remaining prime functions are left to the reader. 

Let us now consider the combinators from Figure~\ref{fig:combinator-quantifier-free}.

The first observation is that quantifier-free interpretations are closed under composition. This is because: (a) the functor in a quantifier-free interpretation is the identity functor, and composing this functor with itself gives the same functor; and (b) quantifier-free formulas are closed under substitution. 

The other combinators are easiily seen to preserve quantifier-free interpretations. We only discuss one case in more detail, namely the combinator 
\begin{align*}
\combinator{\Sigma \to \Gamma}{\Sigma^* \to \Gamma^*}{functoriality of $*$,}
\end{align*}
which is also known as the \emph{map combinator}. 
The difficulty with this combinator is that in the structure that represents a list of elements $[A_1,\ldots,A_n] \in \Sigma$, as per Definition~\ref{def:structures-for-string-types}, the nullary predicates from the structures $A_1,\ldots,A_n$ are replaced by unary predicates. However, since the same replacement is done for the output list, it follows that a straightforward syntactic construction can be applied to transform the quantifier-free interpretation from the assumption of the combinator into a quantifier-free interpretation from the conclusion.

\subsection{Completeness}
\label{sec:qf-completeness}
The rest of this section is devoted to the completeness proof. We begin with some notation and preparatory lemmas that will be used in the proof.

\paragraph*{Zero type.} We will use an extended system, which has an additional type called $0$, as discussed in Example~\ref{ex:terminal-object}. This type represents a class that contains one structure, and that structure has an empty universe. (This class is terminal, in the sense that every class of structures admits a unique quantifier-free interpretation to $0$.) The corresponding prime functions are 
\begin{align*}
 \primefunction{\Sigma}{\to}{\Sigma \times 0}{add $0$}\\
 \primefunction{0}{\to}{\Sigma^*}{create an empty list}
\end{align*}
One should not confuse $0$ with the empty class $\emptyset$ (which anyway is not part of our type system). For example, 
\begin{align*}
0 + \Sigma \neq \Sigma = \emptyset + \Sigma.
\end{align*}

The extended system with $0$ is equivalent to the original system, since we can view $0$ as $1^*$, but with only the empty list used. In particular, the extended system is conservative in the following sense: if a function between types that do not use $0$ is derivable in the extended system, then it is also derivable in the non-extended system. For this reason, we can do the completeness proof in the extended system, which will be slightly more convenient. From now on, list types can use $0$.

\paragraph*{Disjunctive normal form.}
It will be useful to consider list types in a certain normal form, which is achieved using distributivity. We say that a list type is in \emph{disjunctive normal form} if it is of the form 
 \begin{align*}
 \coprod_{i \in I} \prod_{j \in I_j} \Sigma_{i,j}
 \end{align*}
 where each $\Sigma_{i,j}$ is one of the types $0$ or $1$, or a list $\Sigma^*$ where $\Sigma$ is in disjunctive normal form. In other words, the list type does not contain any product of co-products.
 
 In our proof, the main advantage of this normal form concerns nullary relations. Recall that the nullary relations in Definition~\ref{def:structures-for-string-types}, appear only in the co-product, and they are removed when applying the list constructor. Therefore, if a type in disjunctive normal form is not a co-product type, then its vocabulary contains no nullary relations.

 The following lemma shows that every list type admits a derivable isomorphism with some list type in disjunctive normal form. Here, a \emph{derivable isomorphism} is a derivable function that has a derivable inverse. 
 \begin{lemma}\label{lem:dnf}
 Every list type admits a derivable isomorphism with some list type in disjunctive normal form. 
 \end{lemma}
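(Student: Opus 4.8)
The plan is to prove Lemma~\ref{lem:dnf} by structural induction on the list type, producing at each step a derivable isomorphism (a derivable function with a derivable inverse) into disjunctive normal form. The base cases are $0$, $1$, which are already in disjunctive normal form; here the isomorphism is the identity, which is derivable using the functoriality combinators applied to nothing, or more simply can be taken as a primitive identity obtained by composing the commutativity isomorphisms trivially. For the inductive step I would handle the three compound constructors $+$, $\times$, and $*$ in turn, each time invoking the induction hypothesis on the immediate subtypes and then massaging the result.

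\textbf{The $+$ and $*$ cases.} If the type is $\Sigma_1 + \Sigma_2$, apply the induction hypothesis to get derivable isomorphisms $\Sigma_i \leftrightarrow \Sigma_i'$ with each $\Sigma_i'$ in disjunctive normal form, i.e.\ a co-product of products. By functoriality of $+$ we get a derivable isomorphism $\Sigma_1 + \Sigma_2 \leftrightarrow \Sigma_1' + \Sigma_2'$, and a co-product of (co-products of products) flattens to a co-product of products using associativity of $+$; this flattening is a derivable isomorphism built from the associativity-of-$+$ primes and functoriality of $+$. The $*$ case is even easier: if the type is $\Sigma^*$, the induction hypothesis gives a derivable isomorphism $\Sigma \leftrightarrow \Sigma'$ with $\Sigma'$ in disjunctive normal form, and then functoriality of $*$ gives a derivable isomorphism $\Sigma^* \leftrightarrow (\Sigma')^*$, and $(\Sigma')^*$ is by definition already a (one-element co-product of a one-element product of a) list of a disjunctive normal form type, hence in disjunctive normal form.

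\textbf{The $\times$ case and the main obstacle.} The real work is the product case $\Sigma_1 \times \Sigma_2$. Apply the induction hypothesis to write each $\Sigma_i$ as a derivable-isomorphic copy of $\coprod_{a \in A_i} P_{i,a}$ where each $P_{i,a}$ is a product of atoms ($0$, $1$, or lists of DNF types). By functoriality of $\times$ it suffices to put $\bigl(\coprod_{a} P_{1,a}\bigr) \times \bigl(\coprod_{b} P_{2,b}\bigr)$ in disjunctive normal form. Here I would use the distributivity prime $\Gamma \times (\Sigma + \Delta) \leftrightarrow (\Gamma \times \Sigma) + (\Gamma \times \Delta)$ repeatedly — together with commutativity of $\times$ to distribute on the left as well, and associativity of $+$ to regroup — to push the product inside all the co-products, obtaining $\coprod_{a,b} (P_{1,a} \times P_{2,b})$. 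Each $P_{1,a} \times P_{2,b}$ is a product of products of atoms, which flattens to a product of atoms via associativity of $\times$. The bookkeeping obstacle is that distributivity as stated is binary and applies to a single $+$, so for a co-product of many summands (or nested co-products coming from the flattening) one must iterate it and interleave with associativity; I would set this up as an inner induction on the number of summands, and note that every primitive used (distributivity, both associativities, both commutativities, and the functoriality combinators for $+$, $\times$, $*$) is available and reversible, so the composite is a derivable isomorphism. The main obstacle is thus purely combinatorial: carefully organizing the iterated distributivity so that the argument is a clean induction rather than an unwieldy case analysis; there is no conceptual difficulty, since no nullary-relation or quantifier subtlety arises at the level of building isomorphisms — these are all "obvious" bijections and the only content is that our prime list contains enough of them.
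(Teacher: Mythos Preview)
Your proposal is correct and takes essentially the same approach as the paper, whose entire proof reads ``Using distributivity and functoriality.'' You have simply spelled out the structural induction that those two words are meant to invoke, including the auxiliary uses of associativity and commutativity needed to iterate the binary distributivity prime; there is nothing to add.
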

 \begin{proof}
 Using distributivity and functoriality. 
 \end{proof}

 Thanks to the already proved soundness part of the theorem, the derivable isomorphism is also quantifier-free. Therefore, to prove completeness of the system, it is enough to prove completeness only for functions where both the input and output types are in disjunctive normal form. From now on, we only consider list types in disjunctive normal form.

\paragraph*{Safe pairing.} The last issue to be discussed before the completeness proof concerns pairing functions. Suppose that 
\begin{align*}
f : \Sigma \to \Gamma_1 \times \Gamma_2
\end{align*}
is a quantifier-free interpretation. In the completeness proof, we will want to show that it is derivable. A natural idea would be to use an inductive argument to derive the two quantifier-free interpretations 
\begin{align*}
f_i : \Sigma \to \Gamma_i
\end{align*}
that arise from $f$ by projecting it onto the two output coordinates, and to then pair these two derivations into a derivation of $f$. Unfortunately, combining these two derviations would require some kind of pairing combinator, or a duplicating function of type $\Sigma \to \Sigma \times \Sigma$, none of which are available in our system (because they would be unsound). 

For these reasons, we need to be a bit careful with pairing. The crucial observation is that pairing is not always unsound, because some functions can be paired. For example, the two functions $f_1$ and $f_2$ described above can be paired, because they use disjoint parts of the input structure. More formally, the universe formulas are disjoint, i.e.~no element can be selected by both universe formulas. This view will be used in the completeness proof. To formalize it, we use the following lemma.

\begin{lemma}\label{lem:safe-pairing}
 Let $\Sigma$ be a list type in disjunctive normal form, and let $\varphi(x)$ be a quantifier-free formula over its vocabulary. There is a list type, denoted by $\Sigma|\varphi$, and a quantifier-free interpretation
 \[
 \begin{tikzcd}
 [column sep =2cm]
 \Sigma \ar[r,"\text{projection of $\varphi$}"]
 &
 \Sigma|\varphi
 \end{tikzcd}
 \]
 such that the following conditions are satisfied.
 \begin{enumerate}
 \item \label{it:safe-pairing-factors} For every quantifier-free interpretation $f : \Sigma \to \Gamma$, such that the universe formula of $f$ is contained in $\varphi$ (which means that the universe formula of $f$ implies the formula $\varphi$), there is a 
 decomposition 
 \[
 \begin{tikzcd}
 \Sigma 
 \ar[rr,"f"]
 \ar[dr,"\text{projection of $\varphi$}"']
 && \Gamma\\
 & \Sigma | \varphi
 \ar[ur,"f|\varphi"']
 \end{tikzcd}
 \]
 where $f|\varphi$ is a quantifier-free interpetation.
 \item \textbf{Safe pairing.} Suppose that $\varphi_1,\ldots,\varphi_n$ are formulas as in the assumption of the lemma, which are pairwise disjoint. Then one can derive the function 
 \[
 \begin{tikzcd}
 \Sigma \ar[r] 
 &
 (\Sigma | \varphi_1) \times \cdots \times (\Sigma| \varphi_n)
 \end{tikzcd}
 \]
 that produces all projections in parallel.
 \end{enumerate}
 \end{lemma}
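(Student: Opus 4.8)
The plan is to construct $\Sigma|\varphi$ by induction on the structure of the disjunctive-normal-form type $\Sigma$, and to arrange that the ``projection of $\varphi$'' map literally discards the parts of the structure that $\varphi$ can never pick out. First I would observe that because $\Sigma$ is in disjunctive normal form, $\varphi(x)$ is a Boolean combination of the atomic facts that a quantifier-free formula with one free variable can express over the vocabulary of $\Sigma$: membership in one of the product factors $\Sigma_{i,j}$ (detected by the unary ``first coordinate'' relations for $\times$ and, inside lists, the relativized nullary relations), together with the purely positional atoms $x \le y$ specialized to $x = x$ etc. Crucially, a quantifier-free one-variable formula cannot distinguish two positions that lie in the same list component and carry the same unary-relation type; it can only say which component of a co-product / which side of a product the element lies in, and whether it is ``inside a list factor'' versus an atomic $1$. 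So $\varphi$ amounts to a choice, for each component of the DNF type, of whether that component is retained, and — for the leaf co-product alternatives $0$ and $1$ inside lists, where the relativized ``nullary'' relation is available — possibly a restriction to certain of those alternatives. I would then define $\Sigma|\varphi$ to be the type obtained by deleting the components $\Sigma_{i,j}$ that $\varphi$ excludes (replacing an excluded factor by $0$, which is legal since DNF types may use $0$), and recursively applying $|\varphi'$ inside each retained list factor, where $\varphi'$ is the ``residual'' formula describing which list elements survive.

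Next I would check condition~\ref{it:safe-pairing-factors}. Given $f : \Sigma \to \Gamma$ whose universe formula $\psi_f(x)$ implies $\varphi(x)$, the output of $f$ only ever uses input elements satisfying $\varphi$, hence only elements lying in the retained components of $\Sigma|\varphi$. Because the universe functor of $f$ is the identity and all its formulas are quantifier-free, and because the projection $\Sigma \to \Sigma|\varphi$ neither merges elements nor adds quantifiers (it only drops whole components, an operation that commutes with quantifier-free evaluation on the surviving elements), one can take $f|\varphi$ to be the same interpretation as $f$ but re-read over the smaller vocabulary of $\Sigma|\varphi$; I would spell out that each atomic formula used by $f$, restricted to retained elements, is still an atomic (or quantifier-free) formula over $\Sigma|\varphi$, using that the unary/relativized-nullary relations of $\Sigma|\varphi$ are exactly the restrictions of those of $\Sigma$. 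The projection map itself is a quantifier-free interpretation by the same bookkeeping — it is essentially a composition of projections $\Gamma_1 \times \Gamma_2 \to \Gamma_i$, co-projections, and, under list factors, the map combinator applied to the recursively-obtained projections, all of which are primes or derivable.

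For the safe-pairing clause, suppose $\varphi_1,\dots,\varphi_n$ are pairwise disjoint. The point is that the function $\Sigma \to (\Sigma|\varphi_1) \times \cdots \times (\Sigma|\varphi_n)$ can be derived \emph{without} any duplicating prime, because disjointness means each element of $\Sigma$ is routed to \emph{at most one} of the $n$ outputs, so what we really need is not copying but a ``distribute the elements of a co-product-like structure into several bins'' operation. I would make this precise by pushing the disjointness down the DNF structure: at each co-product node the $\varphi_i$ partition the components, so one splits $\Sigma$ along that co-product and sends each piece into the unique target that wants it (using co-diagonal / distributivity primes to reassemble), and at each list factor one invokes the ``list distribute'' prime $(\Sigma\times\Gamma)^* \to \Sigma^*\times\Gamma^*$ (Figure~\ref{fig:prime-quantifier-free}), iterated, together with the inductively-obtained distributor for the element type, plus the map combinator. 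The empty bins are filled with the ``create empty'' prime $\Sigma \to \Sigma \times \Gamma^*$ or with $0$. Assembling these gives the desired derivation.

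The main obstacle I expect is the safe-pairing clause: getting the recursion through list types cleanly. Dropping components of a product or co-product is easy, but to realize $\Sigma^* \to (\Sigma|\varphi_1)^* \times \cdots \times (\Sigma|\varphi_n)^*$ one must turn a single list into $n$ lists of exactly the right lengths — and the lengths are \emph{not} preserved, since list elements not selected by any $\varphi_i$ must be dropped from all $n$ outputs, while an element selected by $\varphi_i$ survives only in the $i$-th. This is exactly what ``list distribute'' is for, but chaining it with the recursive per-element distributor and the map combinator, and verifying that the whole composite remains a \emph{derivable} quantifier-free interpretation (rather than merely a quantifier-free interpretation), is where the care is needed; everything else is syntactic bookkeeping about which atomic formulas survive restriction to a sub-vocabulary.
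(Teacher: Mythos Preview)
Your approach is essentially the paper's: structural induction on the DNF type, with $\Sigma|\varphi$ built by pushing $\varphi$ down through the type constructors (using the type $0$ for the excluded pieces), the projection derived via functoriality, and safe pairing at list nodes handled by combining the inductively obtained per-item pairing with the list-distribute prime $(\Sigma_1\times\cdots\times\Sigma_n)^*\to\Sigma_1^*\times\cdots\times\Sigma_n^*$. One small correction: your worry about output list lengths is misplaced --- in the construction $\Sigma^*|\varphi^* = (\Sigma|\varphi)^*$ and the projection is just map of the inner projection, so all $n$ output lists keep the length of the input; the ``dropping'' that $\varphi$ effects happens inside each list item (at the level of universe elements), not at the level of the list items themselves, so no variable-length filtering is needed.
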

\begin{proof} 
    The purpose of the type $0$ is in this lemma. The type $0$ is used for $\Sigma|\varphi$ when the formula $\varphi(x)$ selects no elements. The lemma is proved by induction on the structure of the type $\Sigma$. 
 \begin{itemize}
 \item Suppose that $\Sigma$ is the zero type $0$. In this case, the formula $\varphi$ must be equivalent to ``false''. We define $0|\varphi$ to be the same type $0$, and the projection is the identity. The safe pairing condition holds because of the prime function $\Sigma \to \Sigma \times 0$.
 \item Suppose that $\Sigma$ is the unit type $1$. In this case, the formula $\varphi$ is equivalent to either ``false'' or ``true'', since the unique structure in $1$ has a universe that has only one element. We define $1|\varphi$ to be the $0$ or $1$, depending on which of the two cases holds, with the projection being the unique function $1 \to 1|\varphi$. The safe pairing condition is proved using the prime function $\Sigma \to \Sigma \times 0$, since the list of quantifier-free formulas in the condition can have at most one formula that is not ``false''.\item Consider a list type of the form $\Sigma^*$. 
 The main observation in the proof is that for quantifier-free formulas $\varphi(x)$ with one free variable, there is a bijective correspondence between formulas over the vocabularies of $\Sigma$ and $\Sigma^*$. This correspondence is defined as follows: for every formula $\varphi$ over the vocabulary of $\Sigma$ with one free variable, there is a formula $\varphi^*$ over the vocabulary of $\Sigma^*$ such that for every list 
 \begin{align*}
 A= [A_1,\ldots,A_n] \in \Sigma^*,
 \end{align*}
 an element $a \in A_i$ is selected by $\varphi^*$ in the entire list $A$ if and only if $a$ is selected by $\varphi$ in the list element $A_i$. It is not hard to see that such a formula exists, and furthermore, every formula over the vocabulary of $\Sigma^*$ is of equivalent to a formula of the form $\varphi^*$. 

 Therefore, in the case when the type is a list $\Sigma^*$, we can assume that the formula over the vocabulary of $\Sigma^*$ is of the form $\varphi^*$ for some formula $\varphi$ over the vocabulary of $\Sigma$. Define 
 \begin{align*}
 \Sigma^*|\varphi^* \quad \eqdef \quad (\Sigma|\varphi)^*,
 \end{align*}
 with the projection function for $\varphi^*$ being the result of applying the map combinator to the projection function for $\varphi$. The safe pairing property is proved by using the induction assumption, and using the function 
 \begin{align*}
 (\Sigma_1 \times \cdots \times \Sigma_n)^* \to \Sigma_1^* \times \cdots \times \Sigma_n^*,
 \end{align*}
 which can easily be seen to be derivable. 
 \item The case when $\Sigma$ is a co-product $\Sigma_1 + \Sigma_2$ is proved similarly to the list case. Here, we use a bijective correspondence between quantifier-free formulas $\varphi$ over the vocabulary of $\Sigma$ with pairs $(\varphi_1,\varphi_2)$, where $\varphi_i$ is a quantifier-free formula over the vocabulary of $\Sigma_i$. 
 \item The case when $\Sigma$ is a product $\Sigma_1 \times \Sigma_2$ is proved similarly to the co-product case. Again, there is a bijective correspondence between quantifier-free formulas $\varphi$ over the vocabulary of $\Sigma$ with pairs $(\varphi_1,\varphi_2)$, where $\varphi_i$ is a quantifier-free formula over the vocabulary of $\Sigma_i$. For the existence of such a bijective correspondence, we use the assumption that the type is in disjunctive normal form. Thanks to the assumption, the vocabulary has no nullary relations; if there would be nullary relations then there could be some communication between the two coordinates in the product. 
 \end{itemize}
\end{proof}

\paragraph*{Completeness.} 
Consider a quantifier-free interpretation 
\begin{align*}
f : \Sigma \to \Gamma.
\end{align*}
Let $\varphi$ be the universe formula of $f$, and let $\Sigma|\varphi$ be the type obtained by applying Lemma~\ref{lem:safe-pairing}. We write $\dom f$ for this type. The corresponding function in the decomposition as in item~\ref{it:safe-pairing-factors} is then 
\begin{align*}
f|\dom f : \dom f \to \Gamma.
\end{align*}
We will use the following terminology for this decomposition: the type $\Sigma|f$ will be called the \emph{reduced domain of $f$}, the projection will be called the \emph{domain reduction of $f$}, and the function $g$ will be called \emph{reduced $f$}. Here is a diagram that displays this terminology 
\[
 \begin{tikzcd}
 \Sigma 
 \ar[rr,"f"]
 \ar[dr,"\text{domain reduction of $f$}"']
 && \Gamma\\
 & \text{reduced domain of $f$}
 \ar[ur,"\text{reduced $f$}"']
 \end{tikzcd}
 \]
 
Because the domain reduction is derivable, and derivable functions are closed under composition, it is enough to show that for every quantifier-free interpretation, its reduced version is derivable. This will be shown in the following lemma.

\begin{lemma}\label{lem:qf-completeness-lemma} For every  quantifier-free interpretation
    \begin{align*}
    f : \Sigma \to \Gamma
    \end{align*}
    with universe formula $\varphi$, one can derive the function
    \begin{align*}
    f|\varphi : \Sigma|\varphi \to \Gamma
    \end{align*}
    from item~\ref{it:safe-pairing-factors} in Lemma~\ref{lem:safe-pairing}.
\end{lemma}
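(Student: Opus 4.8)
The plan is to prove the lemma by structural induction on the output type $\Gamma$, which by Lemma~\ref{lem:dnf} may be assumed to be in disjunctive normal form; the induction is governed by a lexicographic measure combining the list-nesting depth of the input type with the size of $\Gamma$. Throughout, the reduced-domain construction of Lemma~\ref{lem:safe-pairing} does the essential work: the reason the statement is about $f|\varphi$ rather than $f$ is that the system has no pairing combinator, so the product case below is only manageable once the disjoint parts of the input on which the output coordinates depend have been isolated.

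\textbf{Compound output types.} If $\Gamma = \Gamma_1 + \Gamma_2$, the nullary relation of $\Gamma$ recording the tag is defined inside $f$ by a quantifier-free sentence over the vocabulary of $\Sigma|\varphi$, hence by a Boolean combination $\beta$ of the nullary relations of $\Sigma|\varphi$. Since $\Sigma|\varphi$ is in disjunctive normal form, its nullary relations occur only at its top-level co-product, so reassociating and commuting $+$ yields a derivable isomorphism $\Sigma|\varphi \cong S^{+} + S^{-}$ where $\beta$ is forced true on $S^{+}$ and false on $S^{-}$; on $S^{+}$ the map $f$ is a quantifier-free interpretation into $\Gamma_1$ and on $S^{-}$ into $\Gamma_2$, and one concludes by the induction hypothesis (applied after a domain reduction) and functoriality of $+$. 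If $\Gamma = \Gamma_1 \times \Gamma_2$, composing $f$ with the two projections yields quantifier-free interpretations $f_i : \Sigma|\varphi \to \Gamma_i$ whose universe formulas $\varphi_1,\varphi_2$ are pairwise disjoint (an output element lies on exactly one side of the product) with disjunction $\varphi$; Lemma~\ref{lem:safe-pairing}(2) then derives the parallel projection $\Sigma|\varphi \to (\Sigma|\varphi)|\varphi_1 \times (\Sigma|\varphi)|\varphi_2$, the induction hypothesis supplies the reduced maps $(\Sigma|\varphi)|\varphi_i \to \Gamma_i$, and functoriality of $\times$ assembles $f$. The base cases $\Gamma = 0$ and $\Gamma = 1$ are direct: for $\Gamma = 0$ the universe formula of $f$ is unsatisfiable, so $\Sigma|\varphi$ is $0$ up to derivable isomorphism; for $\Gamma = 1$ one checks that every structure of $\Sigma|\varphi$ has exactly one element, so $\Sigma|\varphi$ is built from $1$ and $0$ only and admits a derivable map to $1$ using the projection $\Sigma \times 0 \to \Sigma$ of the extended system and the co-diagonal.

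\textbf{The list case.} The remaining base case, $\Gamma = \Delta^{*}$, is the heart of the proof and the main obstacle. Here we peel the input type $\Sigma|\varphi$, again in disjunctive normal form: a top-level co-product is split as in the co-product case above, and for a top-level product $\Sigma_1 \times \cdots \times \Sigma_n$ of basic types (each $\Sigma_i$ being $0$, $1$, or a list $\Xi_i^{*}$), the key point is that the output order formula, being quantifier-free, takes a constant truth value between any two distinct components, so every output position coming from $\Sigma_i$ precedes, or every one follows, those coming from $\Sigma_j$; hence the output list is the concatenation, in a fixed ordering of the components, of the sub-lists produced from each $\Sigma_i$, and \emph{concat} and \emph{append} together with a recursion on each component conclude this subcase. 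It then remains to treat a single list component, i.e.\ a quantifier-free interpretation $\Xi^{*} \to \Delta^{*}$. The crux is a classification of the quantifier-free-definable preorders that such an interpretation may impose on a quantifier-free-definable subset of the positions of $\Xi^{*}$: one shows that, up to one application of \emph{reverse}, such a preorder is the list order composed with a stable sort by a finite label, that the stable sort is derivable by iterating \emph{list distribute} (which splits a list into the sub-lists carrying each label, preserving internal order) followed by \emph{concat}, and that the contribution of a single position is a quantifier-free interpretation $\Xi \to \Delta^{*}$ realised by \emph{map} and handled by the induction hypothesis on the smaller input type $\Xi$.

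\textbf{Expected difficulty.} The classification of quantifier-free-definable orderings of list positions, together with the explicit derivations of the auxiliary splitting and sorting functions from the primes of Figure~\ref{fig:prime-quantifier-free}, is the laborious part and is what makes this the longest proof in the paper; by contrast, the co-product and product cases and the bookkeeping of reduced domains through the recursion are routine once Lemma~\ref{lem:safe-pairing} is in hand.
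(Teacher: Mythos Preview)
Your overall architecture---structural induction, handling co-product and product outputs via the safe-pairing lemma, and identifying the list-to-list case as the crux---matches the paper's proof, and the compound cases are essentially correct. However, your treatment of the list output case contains two concrete errors in the stated classification, and these are exactly the steps where the real content lies.

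\textbf{Product input.} You claim that for $\Sigma_1\times\cdots\times\Sigma_n\to\Delta^*$ the output order ``takes a constant truth value between any two distinct components''. This is false: the order formula may inspect the unary quantifier-free theory of each argument, not only the component it lives in. For instance, with input $(1+1)^*\times(1+1)^*$ one may quantifier-freely define the output order ``$a$-labelled elements of the first factor, then all of the second factor, then $b$-labelled elements of the first factor''. Elements of the first factor are then both before and after elements of the second factor. The paper's decomposition (its Claim for the product case) is accordingly finer: $f$ splits as a concatenation $f_1\cdots f_k$ where each $f_i$ either factors through a single projection $\Sigma\to\Sigma_j$ \emph{or} always outputs a list of length at most one---and several $f_i$ may factor through the same $\Sigma_j$, interleaved with pieces from other factors.

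\textbf{List input.} Similarly, your classification ``up to one application of reverse, the preorder is the list order composed with a stable sort by a finite label'' is too coarse. Different label classes may independently be forward or reversed: for $(1+1)^*\to 1^*$ one may output the $a$-elements in list order followed by the $b$-elements in reverse list order, and no single global reverse reconciles this with a stable sort. Moreover, there is a third kind of piece that your description omits entirely: a class $\Phi_i$ in which $\sigma\sim\sigma$ holds, meaning all its elements land in the \emph{same} output list item. This produces an $f_i$ whose output always has length at most one, which is not of the map-then-concat shape at all; the paper handles it by recursing on the smaller output type $\Delta$ rather than $\Delta^*$. The paper's Claim~\ref{claim:simplify-list-to-list} gives the correct three-way split (singleton~/ forward map-concat~/ reverse map-concat), and its proof---partitioning unary theories by the three partial equivalence relations $\sim$, $(<\wedge<)$, $(>\wedge>)$ and showing these classes are linearly ordered---is the combinatorial core you allude to but misstate.
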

\begin{proof}
    The lemma is proved by structural induction on the input and output types. In the induction step, we will replace either the input or output type by a simpler one. 
 The induction step is shown in Sections~\ref{sec:some-output-co-product}--\ref{sec:input-type-list} below, which consider the following cases:
 \begin{description}
 \item[\ref{sec:input-type-co-product}]the input type is a co-product;
 \item[\ref{sec:some-output-co-product}] the output type is a co-product;
 \item[\ref{sec:some-output-product}] the output type is a product;
 \item[\ref{sec:input-type-one}] the input type is $0$ or $1$;
 \item[\ref{sec:input-type-list}] the input type is a list;
 \item[\ref{sec:input-type-product}] the input type is a product.
 \end{description}
 These cases are exhaustive, i.e.~at least one of them always applied, but they are not disjoint. When applying some case, we assume that none of the previous cases can be applied. The induction basis corresponds to case~\ref{sec:input-type-one}.
 
 \subsubsection{The input type is a co-product}
 \label{sec:input-type-co-product}
 In the representation of the co-product type from Definition~\ref{def:structures-for-string-types}, the information about whether the structure comes from the first or second case is stored in a nullary predicate. Therefore, by a straightforward syntactic manipulation of quantifier-free interpretations, from a quantifier-free interpetation
 \begin{align*}
 f : \Sigma_1 + \Sigma_2 \to \Gamma,
 \end{align*}
 we can obtain two quantifier-free interpretations
 \begin{align*}
 f_1 : \Sigma_1 \to \Gamma \qquad f_2 : \Sigma_2 \to \Gamma
 \end{align*}
 which describe the behaviour of $f$ on inputs from $\Sigma_1$ and $\Sigma_2$, respectively. Let $\varphi$ be the universe formula of $f$, and let $\varphi_1$ and $\varphi_2$ be the universe formulas of $f_1$ and $f_2$. By induction assumption, we can derive 
 \begin{align*}
 f_i|\varphi_i : \Sigma_i|\varphi_i \to \Gamma
 \end{align*}
 and derive their reduced versions. Since by definition we have 
 \begin{align*}
 (\Sigma_1 + \Sigma_2) | \varphi = \Sigma_1|\varphi_1 + \Sigma_2|\varphi_2,
 \end{align*}
 we can combine these two derivations into a derivation $f|\varphi$, by using the combinator 
 \begin{align*}
 \combinator {\Delta_1 \to \Gamma \quad \Delta_2 \to \Gamma}{\Delta_1 + \Delta_2 \to \Gamma}{cases,}
 \end{align*}
 which itself can be derived using functoriality of $+$ and the co-diagonal.

 \subsubsection{The output type is a co-product} 
 \label{sec:some-output-co-product}
 Consider a function 
 \begin{align*}
 f : \Sigma \to \Gamma_1 + \Gamma_2
 \end{align*}
 whose output type is a co-product. In this case, we assume that the previous case cannot be applied, i.e.~the input type is not a co-product.

 To produce the output structure, we need to define the nullary predicate that says which of the two cases in the output type is used. In a quantifier-free interpretation, this nullary predicate is defined by a quantifier-free formula, with no free variables, which is evaluated in the input structure. Since there are no nullary predicates in the input structure (because otherwise, the input type would be a co-product, and we could apply the case from the previous section), it follows that this quantifier-free formula is either ``true'' or ``false''. This means that the function $f$ must always use the same variant $\Gamma_1$ or $\Gamma_2$ in the co-product from the output type, regardless of the choice of input structure. Therefore, we can replace $f$ by a corresponding function of type $\Sigma \to \Gamma_i$, apply the induction assumption, and conclude by using composition and the co-projection.

 \subsubsection{The output type is a product}
 \label{sec:some-output-product}
 Consider a function 
 \begin{align*}
 f : \Sigma \to \Gamma_1 \times \Gamma_2
 \end{align*}
 whose output type is a product. We split this function into two quantifier-free interpretations 
 \begin{align*}
 f_1 : \Sigma \to \Gamma_1 \quad f_2 : \Sigma \to \Gamma_2,
 \end{align*}
 which produce the two coordinates in the output of $f$.
 These two functions must have disjoint universe formulas, since otherwise the same element in the output structure would belong to both coordinates of a pair. We can apply the induction assumption, and then combine these derivations into a derivation of  $f$ by using safe pairing from Lemma~\ref{lem:safe-pairing}.

 \subsubsection{The input type is $0$ or $1$}
 \label{sec:input-type-one}
 By cases~\ref{sec:some-output-co-product} and~\ref{sec:some-output-product}, we can assume that the output type of the unique function in the family is either $0$, $1$, or a list type $\Gamma^*$. 
 
 When the output type is $0$ or $1$, then we are dealing with a quantifier-free interpretation which has one of the types 
 \begin{align*}
 0 \to 0 \quad 0 \to 1 \quad 1 \to 0 \quad 1 \to 1.
 \end{align*}
 There is no quantifier-free interpretation of the type $1 \to 0$, and for the remaining types there is exactly one quantifier-free interpretation, which is easily seen to be derivable. 

 We are left with the case when the output type is $\Gamma^*$. If the input type is $0$, then the quantifier-free interpretation necessarily produces the empty list, and it is therefore derivable. If the input type is $1$, then the function always produces the same output, which is either the empty list, in which case it can be derived using the list constructor, or a singleton list $[A]$ for some fixed structure $A \in \Gamma$. In the singleton case, we can use the induction assumption to derive the function $1 \mapsto A$, and pack the result as a list using the list unit operation.

 \subsubsection{The input type is a list}
 \label{sec:input-type-list} 
 We now arrive at the most interesting case in the proof, which is when the input type is a list  $\Sigma^*$. 
 Because the previously studied cases~\ref{sec:some-output-co-product} and~\ref{sec:some-output-product} cannot be applied, the output type is one of $0$, $1$, or $\Gamma^*$.
 When the output type is $0$, there is only one possible function, which is easily derivable. The output type $1$ is impossible, since the function could not handle an empty list on the input. We are left with a list-to-list function. To prove the inductive step for such functions, we use the analysis from the following claim.

 \begin{claim}\label{claim:simplify-list-to-list}
 For every quantifier-free interpretation 
 \begin{align*}
 f : \Sigma^* \to \Gamma^*
 \end{align*}
 one can find quantifier-free interpretations 
 \begin{align*}
 f_1,\ldots,f_k : \Sigma^* \to \Gamma^*
 \end{align*}
with disjoint universe formulas  such that $f$ is equal to
 \begin{align*}
 A \in \Sigma^*
 \quad \mapsto \quad \myunderbrace{f_1(A) \cdots f_k(A)}{list concatenation}
 \end{align*}
 and each $f_i$ has one of the following properties:
 \begin{enumerate}
 \item \label{it:all-outputs-singleton} all output lists of $f_i$ have length at most one.
 \item \label{it:flat-map} there is some quantifier-free interpretation 
 \begin{align*}
 g : \Sigma \to \Gamma^*
 \end{align*}
 such that $f_i$ is equal to 
 \begin{align*}
 [A_1,\ldots,A_n] \mapsto \myunderbrace{g(A_1) \cdots g(A_n)}{list concatenation}
 \end{align*}

 \item \label{it:flat-map-reverse} as in item~\ref{it:flat-map}, but with reverse list order $g(A_n) \cdots g(A_1)$.
 \end{enumerate}
 \end{claim}

Before proving the claim, we use it to complete the induction step of the lemma in the present list-to-list case. Apply Claim~\ref{claim:simplify-list-to-list} to the function $f$, yielding a decomposition into functions $f_1,\ldots,f_k$. The induction assumption can be applied to these functions, since item~\ref{it:all-outputs-singleton} in the claim gives a smaller output type (namely $\Gamma$ instead of $\Gamma^*$ for the only list element), while the remaining two items give smaller input types. Finally, these derivations can be combined into a derivation of $f$, using the pairing operation from Lemma~\ref{lem:safe-pairing}, the function for list concatenation from Figure~\ref{fig:binary-concatenation-string-diagram}, and the prime function
\begin{align*}
 \primefunction {(\Sigma \times \Gamma)^*} \to {\Sigma^* \times \Gamma^*} {list distribute}
\end{align*} 
which is used to separate the domains of the functions $f_1,\ldots,f_k$ from the input list.
It remains to prove the claim.

\begin{proof}[of Claim~\ref{claim:simplify-list-to-list}]
 Consider the universe formula $\varphi(x)$ of $f$. Decompose this formula as a finite union 
 \begin{align*}
 \varphi(x) = \bigvee_{\sigma \in \Phi} \sigma(x)
 \end{align*}
 of quantifier-free theories as in Definition~\ref{def:rank-free-theory}, i.e.~quantifier-free formulas that specify all relations satisfied by $x$. 
 Take some input structure in $\Sigma^*$. For elements of this structure that satisfy the universe formula, there are two orders: the \emph{input order} that describes the order in the input list
 \begin{align*}
 A = [A_1,\ldots,A_n] \in \Sigma^*
 \end{align*}
 and the \emph{output order} that describes the order in the output list
 \begin{align*}
 f(A) = [B_1,\ldots,B_m] \in \Gamma^*.
 \end{align*} In the proof of the claim, we will analyze the relationship between these two orders. Both of these orders are reflexive, total, and transitive, but not necessarily anti-symmetric, since two elements may belong to the same list element.

 For an element $a$ in an input structure $A \in \Sigma^*$ that satisfies the universe formula $\varphi(x)$, the \emph{unary theory} of $a$ is defined to be the unique quantifier-free theory $\sigma \in \Phi$ that is satisfied by $a$. 
 If $a$ is strictly smaller than $b$ in the input order, then by compositionality, the output order on $a$ and $b$ will be uniquely determined by the unary theories of the two individual elements $a$ and $b$. This means  that exactly of the following three implications must hold
 \[
 \begin{tikzcd}
    & \txt{$a$ is strictly before $b$\\ in the output order}\\
 \txt{$a$ is strictly before $b$\\ in the input order \\ and the unary theories \\ of $a$ and $b$ are $\sigma$ and $\tau$}
 \ar[ur,Rightarrow,"\sigma < \tau"]
 \ar[r,Rightarrow,"\sigma \sim \tau"]
 \ar[dr,Rightarrow,"\sigma > \tau"']
 & \txt{$a$ is equivalent to $b$\\ in the output order}\\
 & \txt{$a$ is strictly after $b$\\ in the output order}\\
 \end{tikzcd}
 \]
 Depending on which implication holds, we write one of 
 \begin{align*}
 \sigma < \tau \quad \sigma \sim \tau \quad \sigma > \tau.
 \end{align*}
 Before continuing, we make two cautionary remarks about the notation involving the relations $<$ and $>$ described above. The first cautionary remark is that the notation is not symmetric, since  $<$ and $>$ describe relations that are not necessarily converses of each other. This is because one of the conditions $\sigma < \tau$ or $\tau > \sigma$ could be true without the other one being true. The second cautionary remark is that $\sigma < \tau$ is not necessarily obtained from some partial order by looking at strictly growing pairs. For example, we could have both $\sigma < \tau$ and $\tau < \sigma$. 
 
 To prove the claim, we make five observations about the relations $<$, $>$ and $\sim$. In these observations, we use \emph{partial equivalence relations}; a partial equivalence relation is defined to be a binary relation that is symmetric and transitive but not necessarily reflexive. Equivalence classes of partial equivalence relations are defined in the expected way; the only difference is that some elements of the domain might not belong to any equivalence class.

 \begin{enumerate}
 \item \label{it:sim-equivalence} The first observation is that $\sigma \sim \tau$ is a partial equivalence relation. It is easy to see that the relation $\sigma \sim \tau$ is transitive. We now argue that it is symmetric. (This is not immediately obvious.) Suppose that $\sigma \sim \tau$. Consider a list in $A \in \Sigma^*$ 
 with four distinguished elements 
 \begin{align*}
 \myunderbrace{a_1}{unary \\ type $\sigma$}
 \quad < \quad
 \myunderbrace{a_2}{unary \\ type $\tau$}
 \quad < \quad
 \myunderbrace{a_3}{unary \\ type $\sigma$}
 \quad < \quad
 \myunderbrace{a_4}{unary \\ type $\tau$}
 \end{align*}
 with the order relationship describing the input order. From the assumption on $\sigma \sim \tau$ we can conclude that three pairs (depicted by lines in the following diagram) belong to the same elements in the output list:
 \[\begin{tikzcd}
 {a_1} 
 \ar[rrr, "\sigma \sim \tau", no head, bend left]
 \ar[r, "\sigma \sim \tau"', no head]
 & {a_2} & {a_3}
 \ar[r, "\sigma \sim \tau"', no head] & {a_4}.
 \end{tikzcd}\]
 Since belonging to the the same element in the output list is a transitive relation, we can deduce that $a_2$ and $a_3$ belong to the same element in the output list, thus establishing $\tau \sim \sigma$. 
 \item \label{it:ord-equivalence} The next observation is that $(\sigma < \tau \land \tau < \sigma)$ is a partial equivalence relation. It is symmetric by definition, and it is transitive because each of the two conjuncts is transitive.
 \item \label{it:ord-equivalence-other} By the same proof as in the previous item, $(\sigma > \tau \land \tau > \sigma)$ is a partial equivalence relation.
 
 \item We now show that the equivalence classes of the partial equivalence relations described in the first three observations are disjoint, and give a partition of 
 \begin{align*}
 \Phi = \Phi_1 \cup \cdots \cup \Phi
 \end{align*}
 of all unary types in $\Phi$. For every $\sigma \in \Phi$, we have exactly one of the cases $\sigma \sim \sigma$, $\sigma < \sigma$, or $\sigma > \sigma$. This proves that every $\sigma$ belongs to exactly one of the equivalence classes in the previous three items. 
 \item \label{it:order-equivalence-classes} The last observation is that the order on  equivalence classes in the previous item can be chosen so that for all $i < j$ we have 
 \begin{align*}
 \sigma \in \Phi_i \text{ and } \tau \in \Phi_j \quad \Rightarrow \quad \sigma < \tau.
 \end{align*} Let $\Phi_i$ and $\Phi_j$ be different equivalence classes from the previous item. For every $\sigma \in \Phi_i$ and $\tau \in \Phi_j$ we  have exactly one of the three cases 
 \begin{align*}
 \sigma < \tau \quad \text{or} \quad \sigma > \tau \quad \text{or} \quad \sigma \sim \tau.
 \end{align*}
 The third case cannot hold, since otherwise $\Phi_i$ and $\Phi_j$ would be in the same equivalence class from the first observation. Therefore, one of the two first cases must hold.  A short analysis, which is left to the reader, also shows that which of the two cases holds (first or second) does not depend on the choice of the $\sigma$ and $\tau$. This means that there is an unambiguous order relationship between $\Phi_i$ and $\Phi_j$, and this relationship can be used to prove item~\ref{it:order-equivalence-classes} of the claim.
 \end{enumerate}
 
 Let $\Phi_1,\ldots,\Phi_m$ be as in the last of the above observations. We know that for every input structure $A \in \Sigma^*$, the output list 
 can be decomposed as 
 \begin{align*}
 f(A) = f_1(A) \cdots f_n(A)
 \end{align*}
 where $f_i$ is the function obtained from $f$ by restricting the output elements to those that have type from $\Phi_i$ in the input structure. To complete the proof of the claim, we will show that each function $f_i$ has one of the three kinds in the statement of the claim.

 Suppose first that $\Phi_i$ is an equivalence class defined by $\sigma \sim \tau$ as in the first observation. This means that all outputs produced by $f_i$ are equivalent in the output order. Hence this $f_i$ is of kind~\ref{it:all-outputs-singleton} as in the statement of the claim.

 Suppose now that $\Phi_i$ is an equivalence class defined by $(\sigma < \tau \land \tau < \sigma)$ as in the second observation. This means that for every input list $A \in \Sigma^*$, if we take two elements $a$ and $b$ that have unary theory in $\Phi_i$, then 
 \[
 \begin{tikzcd}
 \text{$a$ is strictly before $b$ in the input order}
 \ar[d,Rightarrow]\\
 \text{$a$ is strictly before $b$ in the output order}
 \end{tikzcd}
 \]
 Hence this $f_i$ is of kind~\ref{it:flat-map} as in the statement of the claim. 
 
 A symmetric argument works for an equivalence class defined by $(\sigma > \tau \land \tau > \sigma)$, except that this time the output order is reversed, giving a function as in item~\ref{it:flat-map-reverse} of the lemma.
\end{proof}
 \subsubsection{The input type is a product}
 \label{sec:input-type-product}
 The final case in the proof of Lemma~\ref{lem:qf-completeness-lemma} is when the input type is a product. Since all types are in disjunctive normal form, the input type is a product
 \begin{align*}
 \Sigma = \Sigma_1 \times \cdots \times \Sigma_m
 \end{align*}
 where each $\Sigma_i$ is either $1$ or a list. (The type $0$ can be removed from a product.) Because the previously studied cases~\ref{sec:some-output-co-product} and~\ref{sec:some-output-product} about output types that are products or co-products cannot be applied, the output type is either $0$, $1$, or a list type $\Gamma^*$. 

 If the output type is $0$, then the function is easily derivable. 

 Consider now the case when the output type is $1$. It cannot be the case that each of the input types $\Sigma_1,\ldots,\Sigma_m$ is a list, since the quantifier-free interpretation would be unable to handle the case when all lists are empty. Therefore, one of the input types is the unit type $1$, and the conclusion of the lemma can be proved by using $1 \to 1$.

 We are left with the case when the ouput type is of the form $\Gamma^*$. Here, we proceed in the same way as in Section~\ref{sec:input-type-list}, with the corresponding version of Claim~\ref{claim:simplify-list-to-list} being the following claim. The proof of the claim, which uses a similar analysis of unary quantifier-free theories as in Claim~\ref{claim:simplify-list-to-list}, is left to the reader. 

 \begin{claim}
 For every quantifier-free interpretation 
 \begin{align*}
 f : \myunderbrace{\Sigma_1 \times \cdots \times \Sigma_m}{$\Sigma$} \to \Gamma^*
 \end{align*}
 one can find quantifier-free interpretations 
 \begin{align*}
 f_1,\ldots,f_k : \Sigma \to \Gamma^*
 \end{align*}
 with disjoint universe formulas such that $f$ is equal to
 \begin{align*}
 A \in \Sigma
 \quad \mapsto \quad \myunderbrace{f_1(A) \cdots f_k(A)}{list concatenation}
 \end{align*}
 and each $f_i$ has one of the following properties:
 \begin{enumerate}
 \item all output lists of $f_i$ have length at most one; or
 \item $f_i$ factors through the projection
 \begin{align*}
 \Sigma_1 \times \cdots \times \Sigma_m \to \Sigma_j \qquad \text{for some }j \in \set{1,\ldots,m}.
 \end{align*}
 \end{enumerate}
 \end{claim}

This completes the last of the cases in the induction step, and thus also the proof of the lemma, which also completes the proof of Theorem~\ref{thm:qf-system}.
\end{proof}

\section{Completeness for polyregular functions}
\label{sec:completeness}

In this section, we prove the completeness of the system in Theorem~\ref{thm:main}, i.e.~we show that every polyregular function can be weakly derived. This implication is the less interesting one, since our system is designed to be powerful, i.e.~it should be easy to derive functions in it. We will deduce the completeness of our system with fold from another completeness result that uses a system without fold.

We begin by describing the system that we reduce to. It has all of the combinators from Figure~\ref{fig:combinator-quantifier-free}, and its prime functions are contained in those from Figure~\ref{fig:prime-quantifier-free} plus certain additional functions that are described in Figure~\ref{fig:additional-polyregular-primes}. The first three primes from Figure~\ref{fig:additional-polyregular-primes} have already been discussed in the paper, so we only explain the block and split functions. The split function of type 
\begin{align*}
\Sigma^* \to (\Sigma^* \times \Sigma^*)^*
\end{align*}
outputs all possible ways of splitting the input list into (prefix, suffix) pairs, as explained in the following example:
\begin{gather*}
[1,2,3] \\ \downmapsto \\ [([],[1,2,3]),([1],[2,3]),([1,2],[3]),([1,2,3],[])].
\end{gather*}
The other additional function is the block function of type 
\begin{align*}
(\Sigma+\Gamma)^* \to (\Sigma^* + \Gamma^*)^*,
\end{align*}
which blocks the elements of the input list into maximal blocks of same type, as illustrated in the following example that uses numbers for elements of $\Sigma$ and letters for elements of $\Gamma$:
\begin{gather*}
 [1,2,a,3,4,5,b,c] \\ \downmapsto \\ [[1,2],[a],[3,4,5],[b,c]].
 \end{gather*}

\begin{figure}
 \begin{align*}
 \primefunction{G^*}{\to}{G}{group multiplication}\\
 \primefunction{\Sigma}{\to}{\Sigma \times \Sigma}{diagonal}\\
 \primefunction{\Sigma^*}{\to}{1 + \Sigma \times \Sigma^*}{list destructor}\\
 \primefunction{(\Sigma+\Gamma)^*}{\to}{(\Sigma^* + \Gamma^*)^*}{block}\\
 \primefunction{\Sigma^*}{\to}{(\Sigma^* \times \Sigma^*)^*}{split}
 \end{align*}
 \caption{\label{fig:additional-polyregular-primes} Additional polyregular prime functions from~\cite{bojanczykPolyregularFunctions2018}.}
\end{figure}

\begin{theorem}\cite[p.~64]{bojanczykPolyregularFunctions2018}\label{thm:completeness-without-fold}
 A function between list types is polyregular if and only if it can be derived using the prime functions and combinators from the quantifier-free system Theorem~\ref{thm:qf-system}, plus the prime functions from Figure~\ref{fig:additional-polyregular-primes}.
\end{theorem}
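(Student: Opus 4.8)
The statement is, modulo the choice of prime functions, the completeness theorem of~\cite[p.~64]{bojanczykPolyregularFunctions2018}; the plan is therefore to derive it from that result rather than reprove it, splitting as usual into soundness and completeness. For soundness I would check that every prime in Figures~\ref{fig:prime-quantifier-free} and~\ref{fig:additional-polyregular-primes} is a polyregular function and that the combinators of Figure~\ref{fig:combinator-quantifier-free} (composition, and functoriality of $\times$, $+$, $*$) preserve polyregularity. The primes of Figure~\ref{fig:prime-quantifier-free} are quantifier-free interpretations by the soundness half of Theorem~\ref{thm:qf-system}, hence polyregular a fortiori. For the five primes of Figure~\ref{fig:additional-polyregular-primes} I would exhibit explicit \mso interpretations: the diagonal uses the non-identity functor $A+A$, with every formula copied from the input into each of the two summands and only the product's marking predicate distinguishing them; the list destructor, block and split are routine \mso interpretations (split having quadratic output, which is permitted); and group multiplication $G^*\to G$ has a one-element candidate set and is \mso-definable because regular languages are. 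Closure under composition is closure of \mso interpretations on list types under composition~\cite[Corollary 8]{msoInterpretations}, and closure under functoriality of the three type constructors is a direct syntactic manipulation of interpretations, handled exactly as in the soundness proof of Theorem~\ref{thm:qf-system}.

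\textbf{Completeness.} Here the plan is to show that the (fold-free) system of~\cite[p.~64]{bojanczykPolyregularFunctions2018}, already known complete for the polyregular functions, is subsumed by the present one; since any polyregular $f$ is derivable there, it is then derivable here. The two systems share their combinators, so it suffices to derive each prime of the reference system in ours. The five primes of Figure~\ref{fig:additional-polyregular-primes} are precisely its non-quantifier-free primes and need no work. Its remaining primes are the basic list and shuffling operations, which are either already present in Figure~\ref{fig:prime-quantifier-free} or obtained from it by short derivations: the left list constructor $1+\Sigma\times\Sigma^*\to\Sigma^*$ from reverse and append together with the derivable ``cases'' combinator (built from functoriality of $+$ and the co-diagonal); the empty list $1\to\Gamma^*$ and the singleton $\Sigma\to\Sigma^*$ from ``create empty'' and append; and the various projections, co-projections and distributions directly. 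This closes the loop.

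\textbf{Main obstacle.} There is no genuine mathematical difficulty; the work is bookkeeping. The delicate points are (i) matching the prime inventory of~\cite{bojanczykPolyregularFunctions2018} prime-by-prime with derivable functions in both directions, and in particular checking that enlarging the prime set to all of Figures~\ref{fig:prime-quantifier-free}--\ref{fig:additional-polyregular-primes} keeps soundness, and (ii) being careful with the empty-list convention of Definition~\ref{def:structures-for-string-types} when writing the \mso interpretations for block, split and the list destructor. As the excerpt already observes for Theorem~\ref{thm:main}, completeness is the easy half precisely because it rests on this prior result.
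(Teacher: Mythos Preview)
The paper does not prove this theorem at all: it is stated with the citation \cite[p.~64]{bojanczykPolyregularFunctions2018} and used as a black box, with the surrounding text only explaining the primes of Figure~\ref{fig:additional-polyregular-primes} before moving on to use the result in the completeness proof of Theorem~\ref{thm:main}. So there is no ``paper's own proof'' to compare against; the intended reading is that the system here \emph{is} (up to inessential presentation) the system of the cited reference.

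Your proposal is nonetheless a reasonable reconstruction of what a full proof would look like, and is correct in outline. The paper's prefatory sentence ``its prime functions are \emph{contained in} those from Figure~\ref{fig:prime-quantifier-free} plus \ldots'' confirms that the cited system may use a strict subset of the primes listed here, so your two-part plan---check soundness for the possibly enlarged set of primes directly, and get completeness for free from the cited result since its primes are a subset---is exactly the right shape. The only comment is that the ``matching prime-by-prime'' step you flag as the main obstacle is not actually carried out anywhere in the paper; it is implicitly assumed that the reader can verify the inclusion, and no derivations such as the one you sketch for the left list constructor are provided.
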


In contrast to the system with fold from this paper, the system from the above theorem was designed to be minimal, and therefore, the completeness proof for the system with fold will be a simple corollary of completeness of the system from the above theorem. Thanks to Theorem~\ref{thm:completeness-without-fold}, to prove the completeness result for our system with fold, it is enough to show that (a) all prime functions in Theorem~\ref{thm:completeness-without-fold} are weakly derivable; and (b) the combinators in Theorem~\ref{thm:completeness-without-fold} preserve the weakly derivable functions. 

\paragraph*{Combinators.}
Consider first (b), about the combinators. The combinators are those from Figure~\ref{fig:combinator-quantifier-free}. There is one combinator for function composition, and three combinators for functoriality. The combinators for functoriality are dealt with using the prime functions about $!$ commuting with the remaining constructors. The combinator for function composition is explained in the following diagram: 
\[\begin{tikzcd}
	\Sigma & \Gamma & \Delta & \ &\ &\ \\
	{!^k \Sigma} & {!^\ell \Gamma}& \ &\ &\ &\ \\
	{!^{k+\ell} \Sigma} & \ &\ &\ & \ &\ 
	\arrow[from=1-1, to=1-2]
	\arrow[from=1-2, to=1-3]
	\arrow[dotted, from=1-1, to=2-1]
	\arrow[dotted, from=2-1, to=3-1]
	\arrow[Rightarrow, from=2-1, to=1-2]
	\arrow[dotted, from=1-2, to=2-2]
	\arrow[Rightarrow, from=2-2, to=1-3]
	\arrow[Rightarrow, from=3-1, to=2-2]
	\arrow[bend left = 40, from=1-1, to=1-3]
 \arrow[Rightarrow, "\text{derivable}", from=1-4, to=1-6]
 \arrow[ "\text{weakly derivable}", from=2-4, to=2-6]
 \arrow[dotted, "\text{upgrading}", from=3-4, to=3-6]
\end{tikzcd}\]

\paragraph*{Prime functions.}
Consider now (a), about the prime functions. Clearly all prime functions in the quantifier-free system are weakly derivable, since they are even strongly derivable. Weak derivability of the additional functions for group multiplication and the list destructor was already discussed in Examples~\ref{ex:groups-go-away} and~\ref{ex:deconstructor}. The diagonal function can easily be weakly derived using absorption. We are left with the split and block function. 

\begin{lemma}\label{lem:split-block}
 Split and block are weakly derivable.
\end{lemma}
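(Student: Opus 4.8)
The plan is to weakly derive split and block by reusing the \emph{prefixes} function from Claim~\ref{claim:reverse-prefix}, together with the quantifier-free system from Theorem~\ref{thm:qf-system} and one more application of safe fold. For split, recall that prefixes gives us
\[
[a_1,\ldots,a_n] \ \mapsto\ [[a_n,\ldots,a_1],[a_{n-1},\ldots,a_1],\ldots,[a_1]],
\]
which is a weakly derivable function of type (ungraded) $\Sigma^* \to \Sigma^{**}$. From here the missing data for split is, for each prefix $[a_1,\ldots,a_i]$, the matching suffix $[a_{i+1},\ldots,a_n]$. My first step is to observe that a symmetric ``suffixes'' function is also weakly derivable --- either by a symmetric fold, or by post-composing prefixes with the quantifier-free list-reverse primes (reverse the outer list, reverse each inner list) and reversing the input first. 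Then I would pair the list of prefixes with the list of suffixes using list concatenation / the pairing machinery available in the quantifier-free system: concretely, I want a weakly derivable function producing the list $[([\,],[a_1,\ldots,a_n]),([a_1],[a_2,\ldots,a_n]),\ldots,([a_1,\ldots,a_n],[\,])]$. The cleanest route is a single safe fold whose state is $\Sigma^* \times \Sigma^* \times (\Sigma^*\times\Sigma^*)^*$: reading the input left to right, maintain (prefix so far, suffix remaining, list of pairs produced so far), at each step moving the head of the suffix to the end of the prefix and appending the new pair. The suffix component is the source of the subtlety --- removing a list element is exactly the non-quantifier-free list destructor from Example~\ref{ex:list-add}, so the fold's input letter must carry enough information; as in Example~\ref{ex:deconstructor}, this is handled by having the input alphabet carry a $!$, so that the fold is applied to a transition function that already uses $!$, giving a function of type $!^k(\Sigma^*)\to(\Sigma^*\times\Sigma^*)^*$ for suitable $k$, i.e.\ weakly derivable after the initial upgrading.

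For block, the plan is to run a single safe fold over the input list $w \in (\Sigma+\Gamma)^*$ whose state space is $\big((\Sigma^*+\Gamma^*)^*\big) \times (1 + \Sigma^* + \Gamma^*)$, where the second component records the ``current open block'' (empty, a $\Sigma$-block, or a $\Gamma$-block) and the first component records the already-closed blocks. On reading a new letter, a quantifier-free case analysis --- using the nullary relations of the coproduct $\Sigma+\Gamma$ to test which side the letter is on, exactly as in Example~\ref{ex:finite-state-automaton-folded} --- either extends the current open block (append the letter) or, if the type flips, closes the current block by appending it to the list of closed blocks and starts a fresh one. At the end, a quantifier-free post-processing step flushes the final open block. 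Every operation here (appending a letter to a list, appending a list to a list of lists, the coproduct tests) is a quantifier-free prime or combinator from Figures~\ref{fig:prime-quantifier-free}--\ref{fig:combinator-quantifier-free}, so the transition function is strongly derivable; since the state type has grade $0$, safe fold applies with $k=1$, yielding $!(\Sigma+\Gamma)^* \to (\Sigma^*+\Gamma^*)^*$, which is weakly derivable.

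The main obstacle I expect is bookkeeping around the $!$ annotations and the non-quantifier-free sub-operations: for split the suffix-shrinking step is not quantifier-free, so one has to route it through an input alphabet decorated with $!$ (as in Example~\ref{ex:deconstructor}) and verify that the grade side-condition on safe fold ($\Gamma$ has grade $<k$) is met --- this forces the state type to be kept grade-$0$, which in turn constrains how the intermediate lists may be stored. For block the fold itself is clean (the state type is grade-$0$ and the transition is strongly derivable), so there the only care needed is the final flush of the open block and presenting it as a legitimate post-composition within the system. Neither obstacle is deep; the argument is essentially an exercise in expressing two simple streaming computations as safe folds, which is precisely what the system is designed to make easy.
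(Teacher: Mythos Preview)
Your treatment of \emph{block} is essentially the paper's: a single safe fold with a grade-$0$ state that holds the closed blocks and the current open block, followed by a quantifier-free flush. That part is fine.

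For \emph{split}, however, both routes you sketch have genuine gaps.

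The first route (compute prefixes and suffixes separately, then ``pair'' them) requires a zip of two independent lists in $\Sigma^{**}\times\Sigma^{**}$. Zip is not polyregular: \mso cannot match the $i$-th element of one list with the $i$-th element of another, and the quantifier-free prime \emph{list distribute} goes the wrong way (it is an unzip). So this route cannot be completed as stated.

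The second route (a single safe fold with state $\Sigma^*\times\Sigma^*\times(\Sigma^*\times\Sigma^*)^*$) fails for a different reason. You correctly flag the destructor issue for the suffix, but the fatal constraint is duplication of the \emph{prefix}: at each step you must both append the current pair $(\text{prefix},\text{suffix})$ to the output list \emph{and} carry the updated prefix forward in the state. That needs two copies of the prefix, and the only source of copies is absorption $!\Gamma\to!\Gamma\times\Gamma$, which requires the prefix to live under a $!$ in the state. But you have explicitly committed to a grade-$0$ state (``this forces the state type to be kept grade-$0$''), so absorption is unavailable and the transition is not derivable. Pushing the suffix into the input letter does not help: you still need two copies of the prefix. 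Making this work would force the state to have grade $\geq 1$, and then you are essentially re-deriving the mechanism of Claim~\ref{claim:reverse-prefix} inside the fold.

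The paper avoids both problems by a different decomposition: it applies the prefixes function \emph{twice}, obtaining a list in $\Sigma^{***}$ whose $i$-th element is $[[a_1,\ldots,a_n],\ldots,[a_1,\ldots,a_i]]$; from this, each split pair $([a_1,\ldots,a_i],[a_{i+1},\ldots,a_n])$ is extracted by reverse, the weakly derivable list destructor, and a map replacing each inner list by its last element. All the quadratic growth is produced by the two uses of prefixes (where absorption is already baked in), and the remaining steps are closed under weak derivability with no zipping and no further duplication.
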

\begin{proof}
	Apply two times in a row the weakly derivable prefixes function Claim~\ref{claim:reverse-prefix} to a list of the form
\begin{align*}
 [a_1,\ldots,a_n] \in \Sigma^*.
\end{align*} 
The output  is a list in $\Sigma^{***}$ of length $n$ whose $i$-th element is 
 \begin{align}\label{eq:split-intermediate}
[[a_1,\ldots,a_n],[a_1,\ldots,a_{n-1}], \ldots,  [a_1,\ldots,a_i]].
 \end{align}
 Since weakly derivable functions are closed under composition, this output can be produced by a weakly derivable function. Since weakly derivable functions are also closed under map, to complete the proof that split is weakly derivable, it remains to show that a weakly derivable function can transform the $i$-th element in~\eqref{eq:split-intermediate} into the corresponding element in the output of split, namely
 \begin{align}\label{eq:split-final}
 ([a_1,\ldots,a_i],[a_{i+1},\ldots,a_n]).
 \end{align}
 This is done as follows: we reverse the list in~\eqref{eq:split-intermediate}, and then apply the weakly derivable list destructor to get the pair consisting of the head and tail: 
 \begin{align*}
 \text{head} = & [a_1,\ldots,a_i]\\
 \text{tail} = & [[a_1,\ldots,a_{i+1}],[a_1,\ldots,a_{i+2}], \ldots,  [a_1,\ldots,a_{n}]].
 \end{align*}
 The head is already in the form required by~\eqref{eq:split-final}. 
 In the tail, we replace each list element (which itself is a list) by its last element; this can be done using map and the weakly derivable function that replaces a list with its last element. 

 We now turn to the block function. One approach is to derive the block function from split -- thus showing that it is not needed in the system. This is shown in~\cite[p.90]{bojanczykPolyregularFunctions2018}. However, since we will later use a system that uses block but not split, we show how to derive block directly. 
 To compute the block function, we use an automaton where the input alphabet is $\Sigma + \Gamma$, the state space is 
 \begin{align*}
 \Delta = (\Sigma^* + \Gamma^*)^* \times \myunderbrace{(\Sigma^* + \Gamma^*)}{most recent block}
 \end{align*}
 and the transition function is illustrated in the following diagram (by symmetry, we only draw the left half):
 \mypic{11}
 In the diagram, the unit function is the function $x \mapsto [x]$ which can be derived as in Figure~\ref{fig:binary-concatenation-string-diagram}.
 If we set the initial state of the above automaton to be a pair of empty lists (the second one having type, say, $\Sigma^*$), then after reading a list in $!(\Sigma + \Gamma)^*$, its state will store the output of the block operation, except that the last list element will be held separately and will need to be added using append. 
\end{proof}


\subsection{A smaller system}
A corollary of the completeness proof is Theorem~\ref{thm:reduced}, which shows that certain primes and combinators can be removed from the system in Theorem~\ref{thm:main}, while keeping it complete. We remove the map combinator, as well as all quantifier-free functions from Figure~\ref{fig:prime-quantifier-free} that involve the list type, namely the functions
\begin{align*}
 \primefunction {\Sigma^* \times \Sigma} \to {\Sigma^*} {append}\\
 \primefunction {\Sigma^{*}} \to {\Sigma^*} {reverse}\\
 \primefunction {\Sigma^{**}} \to {\Sigma^*} {concat}\\
 \primefunction {\Sigma} \to {\Sigma \times \Gamma^*} {create empty} \\
 \primefunction {(\Sigma \times \Gamma)^*} \to {\Sigma^* \times \Gamma^*} {list distribute}
\end{align*}
In their place, we have only two functions
\begin{align*}
 \primefunction {1 + \Sigma} \to {\Sigma^*} {lists of length at most one}\\
 \primefunction {\Sigma^* \times \Sigma^*} \to {\Sigma^*} {binary list concatenation.}
 \end{align*}

We will show that the smaller system remains complete, because it can weakly derive the removed functions, and furthermore, the weakly derivable functions in the smaller system are closed under the map combinator.

\begin{proof}[of Theorem~\ref{thm:reduced}]
    Consider first the prime functions that are removed from the smaller system.
The append function can be (strongly) derived in the smaller system. Using append, we can (strongly) derive the left list constructor, whose safe folding gives the list reversal in type 
\begin{align*}
!\Sigma^* \to \Sigma^*.
\end{align*}

is obtained by composing a co-projection with the right list constructor. Applying the safe fold combinator to the left list constructor (after swapping the order of its arguments) shows that the reverse function can be derived in type 
\begin{align*}
    !\Sigma^* \to \Sigma,
\end{align*} and hence it is weakly derivable. The concat function is derived in type 
\begin{align*}
!\Sigma^{**} \to \Sigma^*
\end{align*}
by folding binary list concatenation. To weakly derive the create empty function, we observe that for every type $\Sigma$ we can derive the unique function
\begin{align*}
!\Sigma \to 1,
\end{align*}
and this derivation can be used together with absorption to derive the create empty function in type 
\begin{align*}
!\Sigma \to \Sigma \times \Gamma^*.
\end{align*}
Finally, the list distribute function can be derived in type 
\begin{align*}
    !(\Sigma \times \Gamma)^* \to \Sigma^* \times \Gamma^*
\end{align*}
by a straightforward application of safe fold.

Finally, we can also eliminate the map combinator (functoriality of $*$), since using safe fold we obtain a version of  the map combinator in type 
\begin{align*}
 \frac{\Gamma \to \Sigma} { !\Gamma^* \to \Sigma^*} & & \text{weak map},
\end{align*}
which is strong enough to replace the usual map combinator in the completeness proof of the system in Theorem~\ref{thm:main}. Summing, up we can reduce the system as stated in the present Theorem~\ref{thm:reduced}, thus completing its proof.
\end{proof}

For easier reference, the system in the above theorem is described in Figure~\ref{fig:reduced-system}.

\begin{figure}
 \begin{align*}
    \primefunction {1 + \Sigma} \to {\Sigma^*} {lists of length at most one}\\
 \primefunction {\Sigma^* \times \Sigma^*} \to {\Sigma^*} {binary list concatenation}\\[1.5ex] 
 \primefunction{\ \upgradelong{\Gamma + \Sigma}}{\leftrightarrow}{\upgrade \Gamma + \upgrade \Sigma }{{! commutes with $+$}}\\
 \primefunction{\ \upgradelong{\Gamma \times \Sigma}}{\leftrightarrow}{\upgrade \Gamma \times \upgrade \Sigma }{{! commutes with $\times$}}\\
 \primefunction{\ (\upgrade \Gamma)^*}{\leftrightarrow}{\upgradelong{\Gamma^*}}{{! commutes with $*$}}\\
 \primefunction{\ \upgrade{\Gamma}}{\rightarrow}{\upgrade \Gamma \times \Gamma }{{absorption}} \\[1.5ex]
 \primeqf
 \end{align*}
 \begin{align*}
 \foldcombinator \\[1.5ex]
 \frac{\Gamma \to \Sigma \quad \Sigma \to \Delta} { \Gamma \to \Delta} & & \text{function composition}\\[1.5ex]
 \frac{\Gamma_1 \to \Sigma_1 \quad \Gamma_2 \to \Sigma_2}{\Gamma_1 \times \Gamma_2 \to \Sigma_1 \times \Sigma_2} & & \text{functoriality of $\times$} \\[1.5ex]
 \frac{\Gamma_1 \to \Sigma_1 \quad \Gamma_2 \to \Sigma_2}{\Gamma_1 + \Gamma_2 \to \Sigma_1 + \Sigma_2} 
 & & \text{functoriality of $+$} \\[1.5ex] 
 \qquad \frac{\Gamma \to \Sigma} { !\Gamma \to !\Sigma} & & \text{functoriality of !} \\[1.5ex]
 \end{align*}
\caption{\label{fig:reduced-system} A complete system for weakly deriving the polyregular functions. The safe fold combinator can only be applied when the type $\Gamma$ has grade $<k$.}
\end{figure}

\section{Soundness for polyregular functions}
In this section, we prove the soundness implication in Theorem~\ref{thm:main}. We prove that every strongly derivable function is a  graded \mso interpretations. The prime functions from Figure~\ref{fig:prime-quantifier-free} are quantifier-free, and therefore they are a special case of graded \mso interpretations. The extra prime functions from Theorem~\ref{thm:main}, namely absorption and those about $!$ commuting with the remaining type constructors, are easily seen to be graded \mso interpretations. The combinators for functoriality are also easily seen to preserve graded \mso interpretations. There are two interesting cases, namely the combinators for function composition and safe fold.

\subsection{Function composition}
We first show that the graded \mso interpretations are closed under composition, as long as the input and output types are graded list types. In the proof, we use the following result about composition of (non-graded) \mso interpretations on (non-graded) list types. 

\begin{lemma}\label{lem:mso-theories-composition}
    Let $f: \Sigma \to \Gamma$ be a non-graded \mso interpretation between non-graded list types, with the underlying functor being $F$.  
    For every $k,r \in \set{0,1,\ldots}$, the following function is \mso definable. 
    \begin{description}
        \item[Input] A structure $A \in \Sigma$ with distinguished elements $\bar a \in A^k$.
        \item [Output] The rank $r$ \mso theory of the tuple $F(\bar a)$ in $f(A)$. 
    \end{description}
\end{lemma}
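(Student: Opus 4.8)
The plan is to avoid any direct backwards translation of \mso formulas through $f$ (which, as explained at the end, cannot work), and instead to reduce everything to the closure of \mso interpretations on list types under composition, \cite[Corollary 8]{msoInterpretations}. First I would fix notation: write $F(A) = A^{d_1} + \cdots + A^{d_n}$, so that given $\bar a = (a_1,\dots,a_k)$, the tuple $F(\bar a)$ has one entry $c_{j,\eta}$ for each component $j \le n$ and each function $\eta : \{1,\dots,d_j\} \to \{1,\dots,k\}$, namely the output candidate $(a_{\eta(1)},\dots,a_{\eta(d_j)})$ in component $j$; in particular $F(\bar a)$ has a fixed length $N = \sum_j k^{d_j}$ depending only on $k$ and $F$. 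The rank-$r$ \mso theory of $F(\bar a)$ in $f(A)$ records (i) which entries $c_{j,\eta}$ belong to the universe of $f(A)$, and (ii) for the sub-tuple of those that do, which \mso formulas of rank at most $r$ over the vocabulary of $\Gamma$ they satisfy. There are only finitely many such theories, so it suffices to produce, for each theory $\Theta$, an \mso formula $\chi_\Theta(x_1,\dots,x_k)$ over the vocabulary of $\Sigma$ holding at $\bar a$ in $A$ exactly when the theory of $F(\bar a)$ in $f(A)$ is $\Theta$; I would build $\chi_\Theta$ as a Boolean combination of the two kinds of predicates discussed next, so the real work is to show those predicates are \mso-definable in $(A,\bar a)$.

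Part (i) is immediate: membership of $c_{j,\eta}$ in the universe of $f(A)$ is exactly the value of the universe formula of $f$ for component $j$ at $(x_{\eta(1)},\dots,x_{\eta(d_j)})$, which is already an \mso (in fact quantifier-free) formula over $\Sigma$. For part (ii), fix an \mso formula $\psi(\bar y)$ of rank at most $r$ over the vocabulary of $\Gamma$ together with an assignment of its variables to entries of $F(\bar a)$; I want the predicate ``$f(A)$ satisfies $\psi$ under this assignment'' to be \mso-definable in $(A,\bar a)$. Here I would internalize the distinguished elements: by a routine encoding, the class of $\Sigma$-structures equipped with $k$ distinguished elements can be presented as a list type $\Sigma'$, and that of $\Gamma$-structures with $m := |\bar y|$ distinguished elements as a list type $\Gamma'$, in such a way that recovering the distinguished elements is an \mso (indeed quantifier-free) operation. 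Then consider the function $g : \Sigma' \to \Gamma'$ sending $(A,\bar a)$ to $f(A)$ with the chosen entries of $F(\bar a)$ marked: this is an \mso interpretation, with the functor $F$ of $f$, the universe and relation queries of $f$ (which do not mention the new marking predicates), and the $i$-th output marking predicate defined, on the relevant component, by a quantifier-free formula asserting that the free variables equal the corresponding input marking predicates. By construction, ``$f(A)$ satisfies $\psi$ under the assignment'' holds iff $g(A,\bar a)$ satisfies the \mso sentence $\psi'$ over $\Gamma'$ obtained from $\psi$ by substituting the marked elements. Finally, the map sending a structure $B$ to one of the two sides of $1+1$ according to whether $B$ satisfies $\psi'$ is an \mso interpretation $\Gamma' \to 1+1$ (take the constant functor $A^0$ and define the nullary relation of $1+1$ by the sentence $\psi'$); hence, by composition closure on list types, its composite with $g$ is an \mso interpretation $\Sigma' \to 1+1$, which is nothing more than an \mso-definable partition of $\Sigma'$. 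Decoding $\Sigma'$ back to $\Sigma$ with free variables $\bar x$ yields the desired \mso formula over $\Sigma$, and assembling the finitely many predicates from (i) and (ii) into the formulas $\chi_\Theta$ finishes the argument.

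The hard part here is conceptual rather than computational. One cannot translate $\psi$ backwards through $f$ directly, because a set quantifier in $\psi$ ranges over subsets of the universe of $f(A)$, i.e. over $d_j$-ary relations on $A$ for components of dimension $d_j \ge 2$, which \mso over $A$ cannot express — this is precisely the reason general \mso interpretations fail to compose. The only nontrivial ingredient is therefore to package the whole argument as a composition of \mso interpretations between list types and invoke the composition-closure theorem, which is exactly the black box that absorbs this difficulty.
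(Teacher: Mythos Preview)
Your proposal is correct and takes essentially the same approach as the paper: both reduce the lemma to the closure of \mso interpretations on list types under composition, \cite[Corollary 8]{msoInterpretations}. The paper's proof is a two-line citation of that result, whereas you spell out the reduction explicitly (internalizing the distinguished elements into list types $\Sigma'$ and $\Gamma'$, then composing with a truth-checking interpretation into $1+1$); the one step you leave implicit---that structures over a list type with $k$ marked elements can themselves be presented as a list type in an \mso-compatible way---is indeed routine via the string encoding of list types mentioned in Section~\ref{sec:list-types}.
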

\begin{proof}
    This lemma is a corollary of the closure under composition of (non-graded) \mso interpretations for (non-graded) list types~\cite[Corollary 8]{msoInterpretations}. The cited result is non-trivial, and  depends on the fact that the input and output types are list types.
\end{proof}

Using the above lemma for non-graded intepretations, we prove closure under composition of graded \mso interpretations over graded list types. 
Consider two graded \mso interpretations
\[
\begin{tikzcd}
\Sigma 
\ar[r,"f"]
&
\Gamma 
\ar[r,"g"]
&
\Delta.
\end{tikzcd}
\]
where all types involved are graded. 
We want to show that their composition 
\begin{align*}
f;g : \Sigma \to \Delta
\end{align*}
is a graded \mso interpretation.  Let the corresponding polynomial functors be $F$ and $G$. Naturally, the polynomial functor for the composition is going to be the composition functor $F;G$. The partition into grades of the output elements will be inherited from the second functor $G$. 

It remains to prove that the composition $f;g$ satisfies the continuity condition from Definition~\ref{def:graded-interpretation}. The continuity condition says that  every $k,\ell\in \set{0,1,\ldots}$, and  for every input structure $A \in \Sigma$  with distinguished elements  $\bar a \in A^k$, 
\begin{enumerate}
    \item[(1)] the quantifier-free theory of $(F;G)(\bar a)$ in $(f;g)(A)|\ell$
\end{enumerate}
   is uniquely determined by the quantifier-free theory of $\bar a$ in $A|\ell$, and the \mso theory of $\bar a$ in $A|\ell+1$ for some suitable quantifier rank that depends only on $f$ and the parameters $k,\ell$. 

By the continuity condition for the second graded \mso interpretation $g$, we know that the quantifier-free theory in (1) is uniquely determined by 
\begin{enumerate}
    \item[(2)] the quantifier-free theory of $F(\bar a)$  in $f(A)|\ell$; and
    \item[(3)] the rank $r$ \mso theory of $F(\bar a)$ in $f(A)|\ell+1$.
\end{enumerate}
By the continuity condition for the first graded \mso interpretation $f$, we know that the quantifier-free theory in (2) is uniquely determined by 
\begin{enumerate}
    \item[(4)] the quantifier-free theory of $\bar a$  in $A|\ell$;
    \item[(5)] the rank $s$ \mso theory of $\bar a$ in $A|\ell+1$.
\end{enumerate}
for some quantifier rank $s$. Consider now the \mso theory in (3). We want to show that this theory is also determined by suitable quantifier-free and \mso theories in the original structure $A$.  Consider the (non-graded) \mso interpretation 
\begin{align*}
A| \ell+1 \qquad \mapsto \qquad f(A)|\ell+1,
\end{align*}
which is well-defined by the continuity condition for $f$. By applying  Lemma~\ref{lem:mso-theories-composition} to this interpretation, we see that the \mso theory in (3) is uniquely determined by the 
\begin{enumerate}
    \item[(6)] the rank $t$ \mso theory of $\bar a$ in $A|\ell+1$.
\end{enumerate}
for some quantifier rank $t$. Summing up, the quantifier-free theory in (1) is uniquely determined by the quantifier-free theory in item (4), and the \mso theories in items (5) and (6).  The latter two \mso theories are determined by the single \mso theory for the higher quantifier rank among $s$ and $t$. Summing up, we have proved the continuity condition for the composed graded \mso interpretation $f;g$.

\subsection{Safe fold}

We are left with showing that graded \mso interpretations are closed under the safe fold combinator.  All of the conceptual pieces are already in place, and we will simply show that the proof of Theorem~\ref{thm:fold-quantifier-free} works, with minor adjustments to take into account the added generality of graded structures.

Suppose $\Gamma$ is a type where all grades are $<k$, and  we apply the safe fold combinator to graded \mso interpretations of types
\begin{align*}
 !^{k}1 \to \Gamma \qquad \text{and} \qquad   \Gamma \times \Sigma \to \Gamma,
\end{align*}
yielding a function  of type 
\begin{align*}
 !^{k}\Sigma \to \Gamma.
\end{align*}
By choice of $k$, in the resulting function every element in the input structure has strictly bigger grade than every element in the ouput structure. For such functions, the continuity condition in Definition~\ref{def:graded-interpretation} becomes trivial, and there is no difference between graded and un-graded \mso interpretations. Therefore, in order to prove the soundess of fold, it is enough to show the following lemma, that applying fold to a graded \mso interpretation yields an (ungraded) \mso interpretation.

\begin{lemma}\label{lem:safe-fold}
    For every graded \mso interpretation 
\begin{align*}
\delta : \Gamma \times \Sigma \to \Gamma,
\end{align*} 
between graded list types, and every $B_0 \in \Gamma$,  the following function is an (ungraded) \mso interpretation 
\begin{align*}
A=\myunderbrace{[A_1,\ldots,A_n]}{list of structures in $\Sigma$, \\ with the grades forgotten} 
\qquad \mapsto \qquad  \myunderbrace{B_n}{defined based on $A$ \\ as in the proof of Claim~\ref{claim:qf-folder} }.
\end{align*}
    \end{lemma}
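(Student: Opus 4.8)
The plan is to mirror the proof of Theorem~\ref{thm:fold-quantifier-free}, but to organize the argument as an induction on the maximal grade $d$ of the type $\Gamma$, proving along the way the slightly stronger statement that the \emph{scan} $A \mapsto [B_0, B_1, \ldots, B_n]$ is an \mso interpretation; the lemma then follows by post-composing with the \mso-definable ``take the last block'' function. The base case $d = 0$ is exactly Theorem~\ref{thm:fold-quantifier-free}: when $\Gamma$ has no occurrences of $!$, the graded interpretation $\delta$ is a quantifier-free interpretation, and the automaton argument of Claim~\ref{claim:qf-folder} produces not only the last state but, with no extra work, the whole run, i.e.~the scan, as a linear \mso interpretation.

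For the inductive step I would first peel off the top grade. Using the continuity condition of Definition~\ref{def:graded-interpretation} at level $\ell = 1$, the structure $\delta(C)|1$ depends only on $C|1$, so $\delta$ restricts to a graded interpretation $\delta|1$ which, after shifting grades down by one, has maximal grade $d-1$ and whose iterated fold is exactly $B_\bullet|1$. By the induction hypothesis the scan $A \mapsto [B_0|1, \ldots, B_n|1]$ is an \mso interpretation, and iterating this, the scan of every restriction $B_\bullet|\ell$ with $\ell \ge 1$ is \mso-definable from $A$. This also pins down the functor of the output interpretation: unfolding the recursion and using that a grade-$j$ element of $B_i$ is either inherited from $B_{i-1}+A_i$ or produced by a downgrading component $F_{j'}$ with $j'\ge j$ applied to elements of grade $>j$, the grade bound $d$ forces the nesting depth to be at most $d$, so the universe of $B_n$ is contained in a polynomial functor of bounded degree applied to $B_0 + A$.

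It remains to define the grade-$0$ part of the output together with all relations, including those between grade-$0$ and higher-grade elements. Here I would redo the automaton construction of Claim~\ref{claim:qf-folder}: for a fixed output-candidate tuple --- equivalently a fixed, bounded-length tuple $\bar a\bar b$ of elements of $A$ and of $B_0$ --- run a finite automaton over $[A_1,\ldots,A_n]$ whose letter at position $i$ is annotated with which elements of $\bar a$ lie in $A_i$ together with a bounded-rank \mso theory of that part of $A_i$, and, crucially, with a bounded-rank \mso theory of the relevant higher-grade images of $\bar a\bar b$ inside $B_{i-1}|1$; this last annotation is \mso-definable on the input precisely because the scan $B_\bullet|1$ is already known to be an \mso interpretation. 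The state after step $i$ is the quantifier-free theory of the relevant images of $\bar a\bar b$ in $B_i$. The transition is well defined with finitely many states by the continuity condition at level $\ell = 0$, which expresses the quantifier-free theory of $F(\bar a)$ in $\delta(C)=\delta(C)|0$ in terms of the quantifier-free theory in $C|0$ --- supplied by the previous state and the letter's annotation --- and a bounded-rank \mso theory in $C|1$ --- supplied entirely by the letter's annotation. Since the $C|1$ side is outsourced to an annotation independent of the automaton's history, the quantifier rank does not escalate along the run. By the B\"uchi-Elgot-Trakhtenbrot correspondence the last state of this automaton is \mso-definable, giving the universe formula and relation formulas of the output interpretation; combined with the \mso-definability of $B_\bullet|1$ this shows the scan, hence the fold, is an \mso interpretation.

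The step I expect to be the main obstacle is exactly this finiteness of the automaton: a naive automaton tracking \mso theories of the intermediate structures $B_i$ directly would need unboundedly growing quantifier rank, since each application of $\delta$ costs rank (by the composition bound of Lemma~\ref{lem:mso-theories-composition}), and it is only the grading --- which confines all genuine quantification to strictly higher grades that have already been solved and can therefore be precomputed as an \mso-definable annotation --- that keeps the state space finite. A secondary technical point to watch is that the restriction $\Gamma|\ell$ of a graded list type need not itself be a graded list type (empty list items can appear); this is harmless because the inductive statement is stated and proved in terms of \mso-definability of the scan, established directly via the automaton rather than invoked through a black-box closure property.
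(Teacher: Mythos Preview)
Your approach is essentially the paper's: both argue by reverse induction on the grade, running a finite automaton over $[A_1,\ldots,A_n]$ whose letters are enriched with the already-solved \mso information at strictly higher grades (the paper calls this the \emph{advice string} in its Claim~\ref{claim:continuity}); you package the induction externally on the maximal grade of $\Gamma$ and strengthen the statement to the scan, the paper keeps the induction internal to a single claim and targets only $B_n$, but the content is the same.

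Two small corrections. Your base case is not literally Theorem~\ref{thm:fold-quantifier-free}: even when $\Gamma$ has grade $0$, the type $\Sigma$ may still carry $!$, so $\delta$ can use \mso on $(\Gamma\times\Sigma)|1=\Sigma|1$ and is then not quantifier-free; the fix is just to absorb those \mso theories into the letter annotation, exactly as you already do in the inductive step. And the paper is explicit about one point you leave implicit in ``a polynomial functor of bounded degree'': since the same tuple of higher-grade elements can be present in several intermediate states $B_i$, an output element created by a downgrading component $F_\ell$ at step $i$ must record $i$ to be distinguished from one created at step $j\neq i$ from the same tuple; the paper tags such elements with an extra $A$-coordinate (the first element of $A_i$), whereas in your scan formulation the block index plays that role.
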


\begin{proof}
    Let $m$ be the maximal grade that appears in $\Gamma$, and let  the polynomial functor in the transition function $\delta$ be 
\begin{align*}
    F(A) = F_0(A) + \cdots + F_{m}(A) + A.
    \end{align*}
By the continuity condition for the graded \mso interpretation $\delta$,  the  elements of grade $\ell$ in   $B_n$ are the disjoint union of two sets:
\begin{enumerate}
    \item grade  $\ell$ elements in  $B_{n-1}$ or $A_n$;  or
    \item $F_\ell$ applied to  grade $> \ell$ elements in $B_{n-1}$ or $A_n$.
\end{enumerate}
By unfolding the inductive definition of $B_{n-1}$ in the first item of the above description, we see that the elements of grade $\ell$ in $B_n$ are the disjoint union of two sets:
\begin{enumerate}
    \item[1*.] grade $\ell$ elements in $B_0$ or $A_1,\ldots,A_n$; or 
    \item[2*.] $F_\ell$ applied to  grade $> \ell$ elements in $B_{i-1}$ or $A_i$ for some $i \in \set{1,\ldots,n}$.
\end{enumerate}
We will represent the elements that satisfy 1* or 2* as a subset of $G_\ell(A)$ for some  polynomial functor $G_\ell$. This functor is defined as follows by induction on $\ell$, in reverse order $m,\ldots,0$. Suppose that we want to define $G_\ell$ and assume that we have already defined $G_{\ell'}$ for $\ell' > \ell$. (In the induction basis of $\ell=m$ the assumption is empty.)
    To represent the elements in item 1*, we use the functor 
    \begin{align*}
        A + \myunderbrace{1 + \cdots + 1}{number of elements\\ in $B_0$ that have grade $\ell$}.
    \end{align*}
A tempting idea for item 2* is to use the functor
    \begin{align*}
    H_\ell(A) =  F_\ell( G_{\ell+1}(A) + \cdots + G_{m}(A) + \myunderbrace{A}{represents elements of grade $>m$ \\ in the input structure}).
    \end{align*}
    Unfortunately, this idea is not correct. The reason is that in item 2*, there is a dijsoint union ranging over $i \in \set{1,\ldots,n}$, and the disjointness of this union is not taken into account by $H_\ell$. The problem is that the universe of the structures $B_0,\ldots,B_n$ are not disjoint, and the functor $H_\ell$ can incorrectly identify elements that are obtained by applying $F_\ell$ to the same elements that appear in both $B_{i}$ and $B_{j}$ for $i \neq j$. To eliminate this problem, we will add an explicit identifier for the index $j$ to the functor. To view the index $i$ as an element of the input structure $A_i$, we use the first element in the universe of the corresponding list element $A_i$. Here, when we refer to the first element in the universe, we mean the natural linear order on the universe in a structure from a graded list type, which arises from the ordered nature of lists and pairs. Therefore, instead of $H_\ell(A)$, to represent item 2* we use the product $A \times H_\ell(A)$, with the  $A$ part representing the index $i$. 
    Summing up, the functor $G_\ell$ that describes elements in each $B_i$ is  
    \begin{align*}
G_\ell(A) =   A + \myunderbrace{1 + \cdots + 1}{number of elements\\ in $B_0$ that have grade $\ell$} +  A \times H_\ell(A).
       \end{align*}

    In the rest of this proof, we will view the universe of $B_n$ as being a subset of 
    \begin{align*}
    G(A) = G_0(A) + \cdots + G_m(A),
    \end{align*}
    with $G_\ell(A)$ representing the elements of grade $\ell$. The polynomial functor $G(A)$ will be the  polynomial functor for the \mso interpretation in the conclusion of the lemma. To conclude the proof of the lemma, we need to show that in \mso we can define which elements of $G(A)$ belong to the universe of $B_n$, and what relations from the output vocabulary are satisfied by tuples of such elements. In other words, we need to define in \mso the quantifier-free theory of tuples from $G(A)$ in the output structure. This is done in the following claim, which completes the proof of the lemma.

\begin{claim}\label{claim:continuity}
    For every $\ell,k \in \set{0,1,\ldots}$ the following 
    function is \mso definable:
    \begin{itemize}
    \item {\bf Input.} A structure $A \in \Sigma^*$ with  elements $\bar a \in A^k$.
    \item {\bf Output.} The quantifier-free theory of $G(\bar a)$ in $B_n|\ell$.
    \end{itemize}
    Furthermore, the output depends only on $A$ and $\bar a$ restricted to elements of grade at least $\ell$.
    \end{claim}


\begin{proof}
    Fix some $\ell$ and $k$ as in the statement of the claim. The claim is proved by induction on $\ell$, in reverse order $m,\ldots,0$. Suppose that we want to prove the claim for some grade $\ell$, and  assume that it has already been proved for strictly bigger grades.

    We use the same idea as in the proof of Claim~\ref{claim:qf-folder}.
    Consider a finite automaton, in which the states are all possible theories that arise by taking some $k$-tuple $\bar a$, and returning the quantifier-free theory of $G(\bar a)$ in some structure from $\Gamma$. This set of states is finite, since the length of the tuple and the vocabulary are fixed. 
    
    We will design an automaton with this set of states, together with an input string (which will be called the \emph{advice string}), so that it satisfies the following invariant: after reading the first $i$ letters of the advice string, the state of the automaton is the quantifier-free theory of $G(\bar a)$ in $B_i|\ell$. 
    
    The initial state of the automaton is determined by the invariant, it must be the quantifier-free theory of $G(\bar a)$ in $B_0$. Since the universe of $B_0$ is equal to $G(\emptyset)$, it follows that the initial state does not depend on the tuple $\bar a$ or the input structure $A$.

    We now describe the transition function of the automaton, as well as the advice string.
    By unfolding the definition of the graded \mso interpretation $\delta$, there is some quantifier rank $s$ such that the state of the automaton after reading $i$ letters is uniquely determined by the following four pieces of information:
    \begin{enumerate}
        \item the quantifier-free theory of $G(\bar a)$ in $B_{i-1}$,       \item the quantifier-free theory of $G(\bar a)$ in $A_{i}$,
        \item the rank $s$ \mso theory of  $G(\bar a)$ in $B_{i-1}|\ell+1$,        \item the rank $s$ \mso theory of  $G(\bar a)$ in $A_{i}|\ell+1$.\end{enumerate}
The first piece of information is the previous state of the automaton. The remaining  infomration will be the stored in the advice string; i.e.~the $i$-th letter of the advice string will contain the information described the last three items above. Note that the advice string can be computed in \mso, by the induction assumption. Therefore, since the automaton can be simulated in \mso, it follows that the last state of this automaton can be defined in \mso, thus proving the claim.
\end{proof}
    \end{proof}

\section{Proof of Theorem~\ref{thm:linear}}
In this section, we prove that the system in Theorem~\ref{thm:main} is sound and complete with respect to linear regular functions. 

\paragraph*{Soundness.}
    The soundness proof follows the same lines as the soundness proof in Theorem~\ref{thm:main}. The general idea is that we use graded \mso interpretations where all components have dimension at most one. This, however, on its own is not going to be enough. To see why, let us compare the two absorption functions
    \begin{align*}
        \myunderbrace{!\Sigma \to \Sigma \times !\Sigma}{not allowed}
        \quad 
        \myunderbrace{!\Sigma \to \Sigma \times \Sigma}{allowed}.
    \end{align*}
    Both of them have linear size increase -- each element of the input structure contributes two copies to the output structure. What is wrong with the function that is not allowed? The problem is that one of the copies has the same grade, and the other has lower grade. In the presence of folding, we can get an unbounded number of copies, by spawning a new lower grade copy in each iteration. This phenomenon will not occur in the allowed function, since both copies have lower grade.  The phenomenon discussed above is formalised in the following definition: 
    
    \begin{definition}\label{def:linear-graded-mso}
        A \emph{linear graded \mso interpretation} is a graded \mso interpretation in  which the underlying functor is linear, i.e.~all components have dimension one, and which furthermore satisfies the following \emph{downgrading} condition:  if an element of the  input structure has at least two copies in the output structure, then all of the copies have strictly lower grade.
    \end{definition}
    
    In the definition above, the copies of an element in the output structure are defined in the natural way; this definition makes sense when the functor is linear. For example, if the functor is 
    \begin{align*}
 A + A + A + 1 + 1
    \end{align*} 
    then each input element spawns at most three copies. The components of dimension zero, of which there are two in the above example, are not counted as copies of any input elment.
    
    To prove completeness of the system from Theorem~\ref{thm:linear}, we show that all functions that are strongly derived in it are linear graded \mso interpretations. The proof is a simple inducton on the derivation. The most interesting cases are composition and folding. For composition, we simply observe that the condition on lower grades from Definition~\ref{def:linear-graded-mso} is preserved under composition. 
    
    We are left with folding. where we use the following lemma, which is the same as Lemma~\ref{lem:safe-fold} except that the functions in the assumption and conclusion are required to be linear. In the assumption, we use linearity as defined in Definition~\ref{def:linear-graded-mso}, in particular the downgrading condition is assumed; in the conclusion we have an ungraded function, and therefore only the linearity of the functor and not the downgrading condition are assumed.

    \begin{lemma}\label{lem:linear-safe-fold}
        For every linear graded \mso interpretation 
    \begin{align*}
    \delta : \Gamma \times \Sigma \to \Gamma,
    \end{align*} 
    between graded list types, and every $B_0 \in \Gamma$,  the following function is an (ungraded) linear \mso interpretation 
    \begin{align*}
    A=\myunderbrace{[A_1,\ldots,A_n]}{list of structures in $\Sigma$, \\ with the grades forgotten} 
    \qquad \mapsto \qquad  \myunderbrace{B_n}{defined based on $A$ \\ as in the proof of Calim~\ref{claim:qf-folder} }.
    \end{align*}
        \end{lemma}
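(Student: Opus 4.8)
The plan is to follow the proof of Lemma~\ref{lem:safe-fold} essentially verbatim, changing only the construction of the underlying polynomial functor so that it comes out linear; the downgrading condition of Definition~\ref{def:linear-graded-mso} is exactly what makes this possible. Recall that in the proof of Lemma~\ref{lem:safe-fold} the universe of $B_n$ is represented inside a functor $G(A) = G_0(A) + \cdots + G_m(A)$ with $G_\ell(A) = A + (\text{a dimension-zero constant}) + A \times H_\ell(A)$ and $H_\ell(A) = F_\ell(G_{\ell+1}(A) + \cdots + G_m(A) + A)$, where the extra factor $A$ in the product $A \times H_\ell(A)$ serves to record, for a derived grade-$\ell$ element obtained by applying $F_\ell$ to a higher-grade element $b$, the index $i$ of the fold step $B_i = \delta(B_{i-1}, A_i)$ at which it was created. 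This index factor is the sole obstruction to linearity, so the first thing I would do is argue that it is superfluous in the present setting.

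The key observation I would establish is the following consequence of the downgrading condition applied to a single use of $\delta$: if $b$ is an element of $B_{i-1} \sqcup A_i$ of grade $> \ell$ and some copy of $b$ produced by the $F_\ell$-component of $\delta$ lies in the universe of $B_i$, then the grade-preserving copy of $b$ does \emph{not} lie in the universe of $B_i$ --- for otherwise $b$ would have at least two copies in the output of $\delta$, one of them of grade equal to that of $b$, contradicting downgrading. Since the grade-preserving part of the universe of each $B_j$ is a subset of the universe of $B_{j-1} \sqcup A_j$, once $b$ has left the grade-preserving part at step $i$ it is gone from it forever; and since $b$ is either an element of one fixed list letter or a derived element represented in some $G_{\ell'}$ with $\ell' > \ell$, it is never reintroduced. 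Hence every derived output element has a \emph{unique} creation step, and that step --- ``the first step at which $b$ disappears from the grade-preserving part'' --- is \mso-definable from $b$ and the input. This lets me replace $G_\ell(A)$ by the untagged $A + (\text{constant}) + H_\ell(A)$.

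Next I would check that the modified functor is linear, by reverse induction on $\ell$: $H_m = F_m$ is linear because the functor of $\delta$ is linear by assumption; and if $G_{\ell+1}, \dots, G_m$ are linear, then so is their sum together with $A$, hence (linear polynomial functors being closed under composition) so is $H_\ell = F_\ell(\cdots)$, hence so is $G_\ell$; therefore $G = G_0 + \cdots + G_m$ is linear. Finally I would observe that the \mso-definability part of the argument --- that the quantifier-free theory of a tuple $G(\bar a)$ in $B_n|\ell$ is \mso-definable --- is Claim~\ref{claim:continuity} of the proof of Lemma~\ref{lem:safe-fold} and goes through unchanged: the finite automaton over quantifier-free theories is fed an advice string that is \mso-definable by reverse induction on $\ell$, the only difference being that the output candidates are now the untagged ones and that the creation step of a derived element, previously part of its identity, is now read off via the \mso-definable predicate of the preceding paragraph. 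Together with Lemma~\ref{lem:safe-fold}, which already gives that the fold is an ungraded \mso interpretation, this yields that the fold is an ungraded linear \mso interpretation.

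I expect the main obstacle to be the second paragraph: formulating precisely why the downgrading condition forces each derived output element to have a single, \mso-definable creation step, and verifying that this genuinely suffices to drop the index factor without disturbing the automaton argument. Everything else is a routine adaptation of the proof of Lemma~\ref{lem:safe-fold}.
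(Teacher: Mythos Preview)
Your proposal is correct and follows the paper's overall plan of reusing the proof of Lemma~\ref{lem:safe-fold}, but you establish linearity of the functor by a genuinely different route than the paper does.

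The paper keeps the non-linear functor $G$ from Lemma~\ref{lem:safe-fold} intact and argues a bounded-multiplicity claim: by induction on grades, every input element $a$ is used by at most a constant $c_\ell$ output elements (where ``used'' means $a$ occurs in the tuple representing the output element). From this it concludes, without further detail, that $G$ can be replaced by a linear functor. Your argument is sharper and more explicit: you use the downgrading condition to show that every element of $H_\ell(A)$ has a \emph{unique} creation step, so the index factor $A$ in $A\times H_\ell(A)$ is redundant and can be dropped outright; linearity of the resulting $G_\ell$ then follows by a clean reverse induction using closure of linear polynomial functors under sums and composition. Both arguments rest on the same consequence of downgrading---that producing any lower-grade copy of $b$ kills the grade-preserving copy---but you push it to full uniqueness rather than mere boundedness, and you get the linear functor constructively rather than by an existence claim. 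The one place where your write-up would need a bit more care is exactly the point you flag: the phrase ``it is never reintroduced'' for derived $b$ tacitly requires the reverse induction on grade (a derived $b$ at grade $\ell'$ cannot be re-derived because its own source $b'$ at grade $>\ell'$ was dropped when $b$ was created, and by the induction hypothesis $b'$ cannot be re-derived either). Once that induction is made explicit, your argument goes through.
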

\begin{proof}
    We use the  same proof as in Lemma~\ref{lem:safe-fold}. However, there is one difficulty, which is that the functor $G$ defined in that proof is not linear, even if $\delta$ is linear. This is because of the product $A \times H_\ell(A)$ which is used to code indexes. In fact, the functor $G$ can have  arbitrarily high dimension. However, thanks to the downgrading condition on $\delta$, one show by induction that for every grade $\ell$ there is some constant $c_\ell \in \set{0,1,\ldots}$ such that for every grade $\ell$ element $a$ in the input structure, there are at most $c_\ell$ elements in the output structure which use $a$. Here, we say that an element uses $a$ if it belongs to $G(A)$ but not to $G(A \setminus \set a)$. Using this property, we can turn $G$ into a linear functor.
\end{proof}

    This finishes the soundness proof. Below, we give two completeness proofs.
    \medskip

    \begin{proof}[First completeness proof.]
        This proof uses  the \sst model from Example~\ref{ex:sst}, which is  complete for linear regular functions, in the case where the input and output types are strings over finite alphabets~\cite[Theorem 3]{alurExpressivenessStreamingString2010}. In Example~\ref{ex:sst}, we show how to weakly derive every \sst that uses each input letter at most once. To get the general form of \sst, where an input letter can be used a constant number of times, it is enough to generalize the model from Example~\ref{ex:sst} so that the initial function is weakly derivable, and the transition function can be derived in type 
        \begin{align*}
        \Delta \times !^k \Sigma \to \Delta
        \end{align*}
        for some $k$. With these relaxations, we get all copyless \sst, and retain weak derivability. This proof works only for functions of string-to-string type (admittedly, this is the case that we really care about), and for this reason we also present a second proof, which can also handle types such as strings of strings or pairs of strings.
    \end{proof}
    
\begin{proof}[Second completeness proof.] In this proof, similarly to the completeness proof from Theorem~\ref{thm:main}, we reduce to a known complete system. In the case of linear \mso interpretations, the corresponding known system is from~\cite{bojanczykRegularFirstOrderList2018}. 
    It is the same as in Theorem~\ref{thm:completeness-without-fold}, except that the split function is removed. In the completeness proof of Theorem~\ref{thm:main}, only the proof for split used general absorption (as opposed to linear absorption). Therefore, the system with linear absorption is complete for the linear regular functions.
\end{proof}

\end{document}